\newtheorem{theorem*}{Theorem*}
\crefname{appendix}{App.}{Apps.}
\crefname{equation}{Eq.}{Eqs.}
\crefname{figure}{Fig.}{Figs.}
\crefname{table}{Tab.}{Tabs.}
\crefname{section}{Sec.}{Secs.}
\newtheorem{theorem}{Theorem}
\newcommand{\vect}[1]{{\bm{#1}}}
\newcommand{\dd}{\mathrm{d}}
\newcommand{\ii}{\mathrm{i}}
\newcommand{\ee}{\mathrm{e}}
\newcommand{\EE}{\mathop{\mathbb{E}}}
\newcommand{\Tr}{\mathop{\mathrm{Tr}}}
\newcommand{\id}{\mathds{1}}
\newcommand{\tabref}[1]{Tab.~\ref{#1}}
\newcommand{\figref}[1]{Fig.~\ref{#1}}
\begin{document}
\title{Measurement-Based Quantum Diffusion Models}

\author{Xinyu Liu}
\affiliation{Department of Physics, California Institute of Technology, Pasadena, California 91125, USA}

\author{Jingze Zhuang}
\affiliation{Institute for Interdisciplinary Information Sciences, Tsinghua University, Beijing 100084, China}

\author{Wanda Hou}
\affiliation{Department of Physics, University of California, San Diego, La Jolla, California 92093, USA}

\author{Yi-Zhuang You}
\affiliation{Department of Physics, University of California, San Diego, La Jolla, California 92093, USA}

\date{\today}

\begin{abstract}
We introduce measurement-based quantum diffusion models that bridge classical and quantum diffusion theory through randomized weak measurements. The measurement-based approach naturally generates stochastic quantum trajectories while preserving purity at the trajectory level and inducing depolarization at the ensemble level. We address two quantum state generation problems: trajectory-level recovery of pure state ensembles and ensemble-average recovery of mixed states. For trajectory-level recovery, we establish that quantum score matching is mathematically equivalent to learning unitary generators for the reverse process. For ensemble-average recovery, we introduce local Petz recovery maps for states with finite correlation length and classical shadow reconstruction for general states, both with rigorous error bounds. Our framework establishes Petz recovery maps as quantum generalizations of reverse Fokker-Planck equations, providing a rigorous bridge between quantum recovery channels and classical stochastic reversals. This work enables new approaches to quantum state generation with potential applications in quantum information science.
\end{abstract}

\maketitle

\section{Introduction}
Diffusion-based generative models have achieving remarkable success in generating high-quality images, text, and other classical data structures through noise injection and denoising~\cite{ho2020denoising, song2020score, dhariwal2021diffusion}. This success has naturally motivated the development of quantum analogs---quantum diffusion models that aim to generate quantum states rather than classical data. Recent years have witnessed growing interest in this direction, with several frameworks proposed~\cite{Parigi2023Q2308.12013,Cacioppo2023Q2311.15444,Chen2024Q2401.07039,Kolle2024Q2401.07049,Zhang2024G2310.05866,Zhu2024Q2404.06336,tang2025quadim,Kwun2025M2411.17608,Liu2025B2505.18621,Huang2025C2506.19270,Zhang2026P,Cui2025Q2508.12413}. These approaches have demonstrated the ability to generate both pure and mixed quantum states with high fidelity, unlock a range of applications in quantum information science---from state preparation to quantum error correction.

However, despite these advances, a fundamental gap remains between classical and quantum diffusion models. Classical diffusion theory is built upon a rich mathematical framework \cite{Yang2022D2209.00796} (see \tabref{tab:eqns}) that connects stochastic differential equations (SDEs), ordinary differential equations (ODEs) and partial differential equations (PDEs) through the \emph{score function}---the gradient of log probability. This framework provides not only a unified understanding of forward and reverse processes but also principled training objectives through score matching. While some existing quantum diffusion models \cite{Zhang2024G2310.05866} have implemented forward diffusion and backward denoising processes, the theoretical correspondence between quantum SDEs and PDEs remains unclear, and training objectives are often constructed heuristically rather than derived from first principles.

\begin{table}[ht]
\centering
\renewcommand{\arraystretch}{1.4}
\begin{tabular}{l p{0.42\linewidth} p{0.42\linewidth}}
\hline
 & \textbf{Classical Diffusion} & \textbf{Quantum Diffusion} \\
\hline
{SDE:} & $\dd \vect{x} = \vect{f}\,\dd t + g\,\dd \vect{w}$
    & $\dd\ket{\psi} = (-\frac{\gamma}{2}\delta O^{2}\,\dd t+ \sqrt{\gamma}\,\delta O\,\dd w)\ket{\psi}$ 
\\
{ODE:} & $\dd \vect{x} = (\vect{f} - \frac{1}{2}\vect{s})\dd t$
    & $\dd\ket{\psi} = -\ii\,H(\psi)\ket{\psi}\,\dd t$ 
\\
{PDE:} & $\partial_t p = -\nabla\cdot\big((\vect{f} - \frac{1}{2}\vect{s})p\big)$
    & $\partial_t \bar{\rho} = -\frac{\gamma}{2}[O,[O,\bar{\rho}]]$ 
\\
\hline
\end{tabular}
\caption{Comparison between classical and quantum formulations of diffusion. 
Classical diffusion is driven by stochastic noise, while quantum diffusion is driven by repeated measurement of a random observable $O$ with measurement strength $\gamma$. In classical diffusion, $p=p(\vect{x})$ denotes the probability distribution of classical random variable $\vect{x}$, and $\vect{s}(\vect{x})= g^2\,\nabla_\vect{x}\log p(\vect{x})$ is the score function. In quantum diffusion, $\bar{\rho}=\int\dd\psi\,p(\psi) \ket{\psi}\bra{\psi}$ characterizes the average state (density matrix), $\delta O = O - \bra{\psi} O \ket{\psi}$ and $H(\psi)$ is a state-dependent control Hamiltonian.}
\label{tab:eqns}
\end{table}

In this work, we bridge this theoretical gap by establishing a complete correspondence between classical and quantum diffusion models. Our key insight is that quantum measurement naturally provides the stochastic processes needed for diffusion. We propose using \emph{randomized weak measurements} as the forward diffusion process, which naturally gives rise to stochastic differential equations for quantum state evolution. This approach not only provides a physically realizable implementation but also establishes the missing theoretical connections.

We address two distinct but related problems in quantum state generation: the recovery of pure state ensembles and the recovery of average states (see \figref{fig:roadmap}). For pure state ensembles, we demonstrate that score matching in the quantum setting is mathematically equivalent to learning unitary generators (control Hamiltonian) for the reverse process. This equivalence provides a principled training objective that was previously missing in quantum diffusion literature. For average state recovery, we introduce two complementary approaches: local Petz recovery maps for states with finite correlation length and classical shadow reconstruction for general states. Both approaches come with rigorous error bounds that scale favorably with system size.

\begin{figure}[ht]
    \centering
    \includegraphics[width=\linewidth]{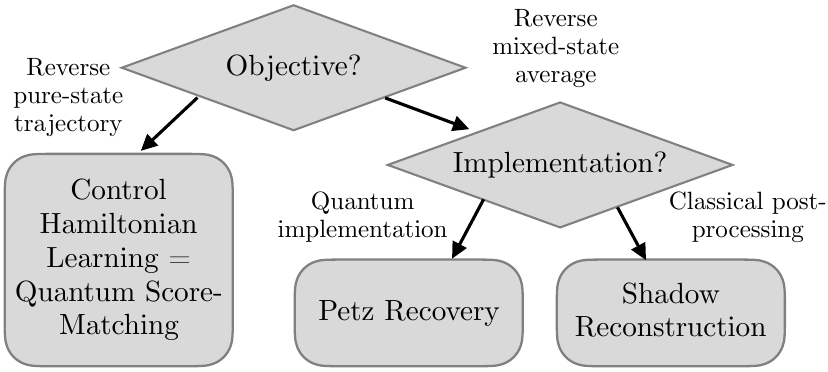}
    \caption{Road map for reverse quantum diffusion.}
    \label{fig:roadmap}
\end{figure}

A key advantage of our measurement-based approach is that measurement serves dual purposes: it drives the forward stochastic process while simultaneously extracting information about the quantum state. This information can be used to condition the reverse process, enabling the construction of state-dependent control Hamiltonian or quantum processes. This capability is unique to measurement-based approaches and is not available in other quantum diffusion frameworks that rely on noise channels or unitary scrambling. Admittedly, several works have explored measurement-driven quantum processes for recovering target quantum state ensembles~\cite{Zhang2024G2310.05866,Kwun2025M2411.17608,Zhang2026P,Cui2025Q2508.12413}, underscoring the growing interest in using measurements to induce random state ensembles and designs~\cite{Ho2022E2109.07491,Cotler2023E2103.03536,Choi2023P2103.03535,Zhang2024H2411.03587}. However, the ability to condition on measurement outcomes in this work further enables deterministic, flow-based (unitary) reversal at the level of individual pure states, rather than, as in most existing literature\cite{lima2025review}, merely reconstructing the density matrix in an ensemble-averaged sense. 

Our theoretical framework establishes several important connections. We show that the average behavior of the measurement-induced SDE is described by a Lindblad master equation, providing the quantum analog of the Fokker-Planck equation in classical diffusion. We prove that Petz recovery maps serve as the quantum generalization of reverse Fokker-Planck equations. 

\section{Forward Quantum Diffusion}
\subsection{Randomized Weak Measurements}
Consider an $n$-qubit system initially prepared in a pure state $\ket{\psi_0}$. Under weak, continuously monitored measurements, the \emph{conditional} state $\ket{\psi_t}$ remains pure at all times. Randomness of the measurement outcomes induces a \emph{stochastic} trajectory of pure states (a quantum trajectory), given by a nonlinear stochastic differential equation (SDE): \cite{Ghirardi1990M,Adler2013O,Giachetti2023E2306.12166}
\begin{equation}\label{eq:sde_pure}
\dd|\psi_t\rangle = \Big( -\frac{\gamma}{2} \delta O_t^2 \dd t + \sqrt{\gamma} \delta O_t \dd w_t \Big) |\psi_t\rangle,
\end{equation}
where $\delta O_t:=O_t-\bra{\psi_t}O_t\ket{\psi_t}$ is $\ket{\psi_t}$-dependent, with $O_t$ being the observable to be measured at time $t$. $\gamma>0$ parameterize the \emph{measurement strength}. $\dd w_t$ describes a standard Wiener process satisfying $\mathbb{E}[\dd w_t] = 0$ and $\mathbb{E}[\dd w_t^2] = \dd t$, which captures the stochastic fluctuations of the \emph{weak measurement outcome} $\dd o_t$ around its expectation value:
\begin{equation}\label{eq:weak_outcome}
\dd o_t =\bra{\psi_t}O_t\ket{\psi_t}\dd t+\frac{\dd w_t}{2\sqrt{\gamma}}.
\end{equation}
Operationally, $\dd o_t$ denotes the rescaled signal accumulated from weak measurements of $O_t$ over the interval $\dd t$. For a single weak measurement within the $\dd t$ interval, the observed eigenvalue of $O_t$ corresponds to $ 2\sqrt{{\gamma}/{\dd t}}\,\dd o_t$. 

In classical diffusion models, the forward process is typically designed as a trivializing transformation, mapping a complex target distribution to a simple reference distribution from which sampling is straightforward. 
To construct an analogous trivializing map for quantum diffusion, we consider a \emph{randomized measurement} protocol, in which the observable $O_t$ is drawn at each time step from a stochastic process independent of the evolving state $\ket{\psi_t}$. Randomized measurements generate an ensemble of pure states  
\begin{equation}
    \mathcal{E}_\psi=\{\ket{\psi}\!\bra{\psi}: \psi \sim p(\psi)\},
\end{equation}
which can be either (i) described by the probability distribution $p(\psi)$ over the projective Hilbert space, or (ii) characterized by its first moment---the ensemble-averaged density matrix  
\begin{equation}
    \bar{\rho} = \EE_{\psi \sim p(\psi)} \ket{\psi}\!\bra{\psi}.
\end{equation}

It is possible to choose the distribution of $O_t$ such that, under randomized measurements, any initial ensemble evolves toward a uniform distribution over the Hilbert space. In this limit, the ensemble-averaged density matrix approaches the maximally mixed state,  
$\mathbb{E}_{\mathrm{traj.}}\!\ket{\psi_t}\!\bra{\psi_t} \xrightarrow{t \to \infty} \id /2^n$.
This does not contradict the fact that each conditional state $\ket{\psi_t}$ along a single trajectory remains pure; the mixing arises solely from averaging over the classical randomness of the measurement record. Thus, randomized weak measurement can preserve purity at the trajectory level while inducing complete depolarization at the ensemble level, realizing a trivializing forward quantum diffusion.

Consequently, reverse quantum diffusion admits two natural generative objectives:
\begin{enumerate}
\item[(i)] \textbf{Trajectory-level recovery:} sample from or reconstruct the trajectory-level pure-state distribution $p(\psi_t)$ induced by the forward process;
\item[(ii)] \textbf{Ensemble-average recovery:} reconstruct the averaged state $\bar{\rho}_t = \mathbb{E}_{\psi_t} \ket{\psi_t}\!\bra{\psi_t}$ at each diffusion time $t$.
\end{enumerate}
Whereas much of the literature~\cite{lima2025review} target (ii) of reconstructing $\bar{\rho}_t$, the trajectory-level objective (i) is more demanding, as it requires a classical decoder to infer the latent pure state from each trajectory's stochastic measurement record. Crucially, prior approaches cast denoising as a state-preparation problem, seeking optimal protocols that act universally on all states (regardless of measurement record). By contrast, our measurement-based framework formulates denoising as a control problem that learns control Hamiltonian conditioned on the measurement record of each trajectory.

In this work, we address both objectives, as illustrated in \figref{fig:roadmap}.  
For trajectory-level recovery (i), we develop a \emph{control Hamiltonian learning} approach that learns a time-reversal unitary evolution from measurement records.  
For ensemble-average recovery (ii), we consider two complementary strategies: a \emph{Petz recovery} map for quantum implementation, and \emph{classical shadow reconstruction} for purely classical post-processing. 

\subsection{Measurement-Based Forward Diffusion}
According to \cref{eq:sde_pure}, the state projection operator $\rho_t=\ket{\psi_t}\bra{\psi_t}$ (pure-state density matrix) evolves along the quantum trajectory as: \cite{Jacobs2006Aquant-ph/0611067,Adler2013O}
\begin{equation}\label{eq:sde_mixed}
\dd\rho_t
= -\frac{\gamma}{2} [O_t, [O_t, \rho_t]] \,\dd t
  + \sqrt{\gamma} \,\{ \delta O_t, \rho_t \} \,\dd w_t.
\end{equation}
The first term in \cref{eq:sde_mixed} represents a deterministic drift towards eigenbasis of $O_t$, arising from the measurement-induced decoherence. The second term captures the stochastic back-action generated by the quantum state collapse under measurement.

Let $\vect{P}=\{P_i\}_{i=1}^{4^n}$ denote the $n$-qubit Pauli-operator basis with $\Tr(P_i P_j)=2^n\delta_{ij}$.
Any state $\rho_t$ or observable $O_t$ can be expanded as
\begin{equation}
\label{eq:pauli-expansion2}
\rho_t = \frac{\vect{z}_t\cdot\vect{P}}{2^n},\quad
O_t=\vect{x}_t\cdot \vect{P}, 
\end{equation}
with $z_{t,i}=\Tr(\rho_t P_i)=\bra{\psi_t}P_i\ket{\psi_t}$, $x_{t,i}=\Tr(O_t P_i)/2^n$. Then \cref{eq:sde_mixed} implies
\begin{equation}\label{eq:sde_coeff}
\dd\vect{z}_t
= \vect{f}(\vect{z}_t, t) \,\dd t
+ \vect{g}(\vect{z}_t, t) \,\dd w_t,
\end{equation}
with the drift $\vect{f}$ and the noise $\vect{g}$ functions given by (Einstein summation convention implied)
\begin{equation}
\begin{split}
f_l(\vect{z}_t,t) &= -\gamma(c_{ijm} c_{klm} - c_{ikm} c_{jlm}) x_{t,i}x_{t,j}z_{t,k}, \\
g_l(\vect{z}_t,t)  &= \sqrt{\gamma}((c_{ijl} + c_{jil})x_{t,i}z_{t,j} - 2 x_{t,i} z_{t,i} z_{t,l}),
\end{split}
\end{equation}
and $c_{ijk} = \mathrm{Tr}(P_i P_j P_k)/\mathrm{Tr}\,\id$ being the operator product expansion (OPE) coefficient.

The SDE \cref{eq:sde_coeff} for $\vect{z}_t$ takes the form of a classical Langevin equation, linking measurement-based quantum diffusion directly to classical diffusion.  
Here, the Pauli expectation values $\vect{z}_t$ serve as stochastic variables encoding the quantum state $\rho_t$.  
If accessible, the time series $\{\vect{z}_t\}$ could train a denoising model for reversal.  
However, in quantum systems, $\vect{z}_t$ (or $\rho_t$) is not directly observable, and must be inferred from the measurement record $\{(O_t,o_t)\}$. We will comeback to this problem later in Sec.~\ref{sec:classical_decoder}.

\subsection{Kraus Operator Formulation}\label{sec:Kraus}

The stochastic evolution of $\ket{\psi_t}$ in \cref{eq:sde_pure} can be equivalently expressed in the Kraus operator formalism:
\begin{equation}
\ket{\psi_{t+\dd t}}
= \frac{K_{\dd t}(O_t,\dd o_t)\,\ket{\psi_t}}
{\|K_{\dd t}(O_t,\dd o_t)\,\ket{\psi_t}\|},
\end{equation}
where $K_{\dd t}$ denotes the \emph{Kraus operator} for a weak measurement of observing $O_t$ and obtaining the measurement outcome signal $\dd o_t$:
\begin{equation}\label{eq:K_dt}
K_{\dd t}(O_t,\dd o_t)
= \Big(\frac{2\gamma}{\pi\dd t}\Big)^{\frac{1}{4}}\!
\exp\!\Big( -\frac{\gamma}{\dd t}\,\big(\dd o_t - O_t\,\dd t\big)^2 \Big).
\end{equation}
The probability density for observing $\dd o_t$ is given by  $p(\dd o_t|O_t,\psi_t) = \|K_{\dd t}(O_t,\dd o_t)\,\ket{\psi_t}\|^2$,
which follows a Gaussian distribution centered at $\bra{\psi_t}O_t\ket{\psi_t}\dd t$ with variance $(4\gamma)^{-1}\dd t$, consistent with \cref{eq:weak_outcome}.

The randomized measurement not only drives the diffusion of the conditional pure state $\ket{\psi_t}$, but also generates a \emph{weak measurement record} along the trajectory, represented by the time series
\begin{equation}
\mathcal{O} = \{(O_{t}, o_{t}) \mid 0 \le t \le T\}.
\end{equation}


\noindent The accumulated Kraus operator along the measurement trajectory $\mathcal{O}$ is
\begin{equation}\label{eq:K_t}
K_t(\mathcal{O}) =
\mathcal{T}\!\!\prod_{t'=0}^{t-\dd t} K_{\dd t}(O_{t'},\dd o_{t'}),
\end{equation}
where $\mathcal{T}$ denotes time-ordering. 

\subsection{Pauli Twirled Measurement Channel}
Averaging over all possible measurement trajectories defines the weak measurement channel $\mathcal{F}_t$:
\begin{equation}
\bar{\rho}_t = \mathcal{F}_t(\bar{\rho}_0)
= \EE_{\mathcal{O}}
K_t(\mathcal{O})\,\bar{\rho}_0\,K_t^\dagger(\mathcal{O}),
\end{equation}
which describes the ensemble-averaged evolution under the forward diffusion process.  
Physically, $\mathcal{F}_t$ captures the decoherence induced by weak measurements after averaging over the classical randomness of observable choices $O_t$ and measurement outcomes $o_t$ in $\mathcal{O}=\{(O_t,o_t)\}$.  

Averaging the stochastic evolution in \cref{eq:sde_mixed} over all measurement trajectories transforms the SDE for the pure-state projector $\rho_t$ into a deterministic PDE for the average state $\bar{\rho}_t$.  
The resulting evolution is governed by a Lindblad master equation,
\begin{equation}\label{eq:pde}
\partial_t \bar{\rho}_t = \mathcal{L}[\bar{\rho}_t]
= -\frac{\gamma}{2} \EE_{O_t} [O_t, [O_t, \bar{\rho}_t]],
\end{equation}
where the specific form of the Liouvillian super-operator $\mathcal{L}$ depends by the observable distribution $p(O_t)$. Equivalently, the channel can be written as $\mathcal{F}_t = \ee^{t\mathcal{L}}$.

The measurement channel $\mathcal{F}_t$ is said to be \emph{Pauli twirled} \cite{Dankert2009Equant-ph/0606161,Magesan2011S1009.3639,kuo2020medul,hu2023cstwl} if it is weakly symmetric \cite{Buca2012A1203.0943,Albert2014S1310.1523,Albert2018L1802.00010} under local Clifford transformations, i.e. $\forall U\in C_1^{\otimes n}$ ($C_1$ denotes the single-qubit Clifford unitary group):
\begin{equation}
\mathcal{F}_t(U\bar{\rho}_0 U^\dagger)=\mathcal{F}_t(\bar{\rho}_0)=U\mathcal{F}_t(\bar{\rho}_0 )U^\dagger.
\end{equation}
Formally, Pauli twirled measurement channels can be constructed by requiring the observable distribution $p(O_t)$ to be invariant under local Clifford unitaries, i.e. $p(O_t)=p(U^\dagger O_tU)$. The nice property of Pauli twirled channels is that they are diagonal in the Pauli basis, such that given the decomposition $\bar{\rho}_t=\vect{z}_t\cdot\vect{P}/2^n$, the action of the channel $\bar{\rho}_t=\mathcal{F}_t(\bar{\rho}_0)$ simply amounts to rescaling each Pauli expectation value
\begin{equation}
z_{t,i}=w_{\mathcal{F}_t}(P_i)z_{0,i}
\end{equation}
by the \emph{Pauli weight} $w_{\mathcal{F}_t}(P):=\Tr( P\mathcal{F}_t(P))\,/\,\Tr\id$ that characterizes the decay of each Pauli component under the channel. 

\subsection{Randomized Pauli Measurement}
To enable concrete analysis and numerical simulation, we focus on a discrete-time weak-measurement scheme with small but finite time steps of duration $\delta t>0$. 
At each step, the measured observable $O_t$ is drawn uniformly at random from single-qubit Pauli operators
\begin{equation}
    \mathcal{P}_1 = \{\,\sigma_{x,j},\sigma_{y,j}, \sigma_{z,j} \mid j=1,\dots,n\,\}.
\end{equation}
Unless otherwise stated, all subsequent results will refer to this single-qubit measurement setting.

For an $n$-qubit system, the protocol applies one such on-site weak measurement to every qubit simultaneously: at each step, an independent single-qubit Pauli $O_t^{(j)} \in \{\sigma_{x,j}, \sigma_{y,j}, \sigma_{z,j}\}$ is drawn uniformly for each qubit $j$, paired with its own Wiener increment $\dd w_t^{(j)}$. Because on-site Paulis on different qubits commute, the $n$ measurements can be performed simultaneously, and the per-step Kraus operator factorizes as a tensor product of on-site Kraus operators. The trajectory-level SDE \cref{eq:sde_pure} correspondingly extends to
\begin{equation}\label{eq:sde_multisite}
\dd\ket{\psi_t}=\sum_{j=1}^{n}\Big(\!-\tfrac{\gamma}{2}\,(\delta O_t^{(j)})^2\,\dd t+\sqrt{\gamma}\,\delta O_t^{(j)}\,\dd w_t^{(j)}\Big)\ket{\psi_t}.
\end{equation}
This site-factorized structure makes the forward channel a true \emph{trivializing map}: the ensemble-averaged state on $n$ qubits converges to the maximally mixed \emph{product} state $I/2^n$, with no residual entangling correlations preserved.

As Pauli observables $O_t\in\mathcal{P}_1$ satisfy $O_t^2=\id$ with eigenvalues $\pm1$, the measurement signal $\delta o_t=\pm\frac{1}{2}\sqrt{\delta t/\gamma}$ also take binary values correspondingly. In this case, the differential Kraus operator in \cref{eq:K_dt} reduces to the following discrete form after proper normalization,
\begin{equation}
K_{\delta t}(O_t,\delta o_t)
\;=\;
\frac{1}{\sqrt{2}}\,
\exp\!\Big(-\gamma\,\delta t +2\gamma\,O_t\,\delta o_t\Big).
\end{equation}
Averaging over outcomes and the random choice of $O_t$ yields the single-step measurement channel
\begin{equation}
\label{eq:channel_delt}
\begin{split}
\bar{\rho}_{t+\delta t}&=\mathcal{F}_{\delta t}(\bar{\rho}_t)\\&=\mathbb{E}_{O_t,\delta o_t} K_{\delta t}(O_t,\delta o_t)\,\bar{\rho}_t\,K_{\delta t}^\dagger(O_t,\delta o_t).
\end{split}
\end{equation}
This acts as a product of single-qubit depolarization channel acting identically and independently across all qubits.
The corresponding finite-difference Lindblad equation reads $\delta\bar{\rho}_{t}=\mathcal{L}[\bar{\rho}_t]\delta t$, with the Liouvillian super-operator decomposes to each qubit $\mathcal{L}=\frac{1}{n}\sum_{j=1}^n \mathcal{L}_j$,
\begin{equation}
\mathcal{L}_j[\bar{\rho}_t]=-\frac{4\gamma}{3}\Big(\bar{\rho}_t-(\Tr\bar{\rho}_t)\frac{\id}{2}\Big),
\end{equation}
which precisely generates local depolarization.

Because the channel is Pauli-twirled, its action on the $n$-qubit Pauli basis is diagonal. 
The Pauli weight depends only on the \emph{operator weight} $|P|$ (the number of qubits on which $P$ acts nontrivially) and takes the closed form
\begin{equation}\label{eq-measurement-channel-pauli-weight}
w_{\mathcal{F}_t}(P)
= \exp\!\Big(-\frac{4\gamma}{3n}\,|P|\,t\Big),
\end{equation}
describing exponential decay of each nontrivial Pauli component at a uniform rate $4\gamma/3n$ per qubit.

\section{Reverse Quantum Diffusion}

\subsection{Control Hamiltonian Learning}

\subsubsection{Classical Decoder}\label{sec:classical_decoder}
The reverse diffusion task can be formulated as a control problem: given the current state of the system, a denoising model must condition on this state and propose control operations that drive the system backward in time. In classical systems, the state variables coincide with physical observables, which can be directly accessed, copied, and processed by the denoising model to generate classical control actions. In quantum systems, however, there is a fundamental separation between the underlying quantum state and measurable observables. The quantum state cannot be directly accessed; it can only be inferred from repeated measurements. Yet, quantum trajectories are not reproducible, making it impossible to reconstruct intermediate states along the trajectory via quantum state tomography without severe post-selection overhead. As a result, a quantum denoising algorithm does not have direct access to the state and must base its control on indirect and noisy inference, which is substantially more challenging than in the classical case.

To address this difficulty, we introduce a \emph{classical decoder} that works alongside the quantum denoising model. The decoder is a classical algorithm that infers a per-trajectory pure-state estimate $\ket{\hat\psi_t}$---and hence the Pauli expectation values $\hat{\vect{z}}_t = \bra{\hat\psi_t}\vect{P}\ket{\hat\psi_t}$---directly from the measurement record $\{(O_t, o_t)\}$, with no oracle access to the underlying quantum state.

We adopt a decoder built on classical-shadow tomography (\cref{sec:shadow}): the records are first inverted into an estimate $\hat{\rho}_0$ of the ensemble-average initial density matrix; a pure-state initial guess $\ket{\hat\psi_0}$ is then drawn from the maximum-entropy unraveling of $\hat{\rho}_0$ (the distribution over pure states with first moment fixed at $\hat{\rho}_0$ and otherwise of maximum entropy); the Kraus-operator formulation of \cref{sec:Kraus} finally replays this guess forward through the same record,
\begin{equation}
\ket{\hat\psi_t} = \frac{K_t(\mathcal{O})\,\ket{\hat\psi_0}}{\|K_t(\mathcal{O})\,\ket{\hat\psi_0}\|},
\end{equation}
yielding a per-trajectory state estimate consistent with both the measurement record and the inferred ensemble. This pipeline is device-realistic and inherits the favorable sample-complexity guarantees of classical-shadow tomography. 

\subsubsection{Quantum Score Matching}
Given the trajectories $\{\vect{z}_t\}$ decoded from the weak-measurement record, reversing the diffusion is a quantum control problem: learn a control Hamiltonian $H_\theta(\vect{z}_t,t)$ that drives $\ket{\psi_t}$ backward along its trajectory. Pure states stay pure under the forward SDE \cref{eq:sde_pure}, so the time-reversed process must preserve purity and is therefore generated by a unitary evolution---a deterministic flow.

A general SDE $\dd\vect{x}=\vect{f}\,\dd t+\vect{g}\,\dd w$ propagates a time-evolving marginal $p(\vect{x},t)$. Its \emph{score} $\vect{s}(\vect{x},t):=\nabla\log p(\vect{x},t)$ plays two structural roles: it supplies the drift correction needed to reverse the SDE in time, and it converts the SDE into an equivalent deterministic \emph{probability-flow} ODE $\dd\vect{x}=\vect{v}\,\dd t$ realizing the same marginal, with $\vect{v}=\vect{f}-\tfrac{1}{2}\vect{s}$~\cite{song2020score,tong2024simulationfree,lu2022maximum}. The four resulting representations of the dynamics (forward/reverse $\times$ SDE/ODE), summarized in \cref{tab:sde_ode_duality}, are therefore tied by the \emph{score--flow relation}
\begin{equation}\label{eq:score_flow}
\vect{s}(\vect{x},t) = 2\,\vect{f}(\vect{x},t) - 2\,\vect{v}(\vect{x},t).
\end{equation}

\begin{table}[t]
\caption{Stochastic--deterministic duality of forward and reverse processes. Both pairs realize the same marginal $p(\vect{x},t)$; the score $\vect{s}$ and flow $\vect{v}$ are tied by \cref{eq:score_flow}. $\bar t=T-t$ denotes reverse time, and $\dd w$ a Wiener increment.}
\label{tab:sde_ode_duality}
\centering
\renewcommand{\arraystretch}{1.35}
\begin{tabular}{lll}
\hline\hline
 & \textbf{Diffusion (SDE)} & \textbf{Flow (ODE)} \\
\hline
Forward & $\dd\vect{x} = \vect{f}\,\dd t + \vect{g}\,\dd w$
        & $\dd\vect{x} = (\vect{f} - \tfrac{1}{2}\vect{s})\,\dd t = \vect{v}\,\dd t$ \\
Reverse & $\dd\vect{x} = (-\vect{f} + \vect{s})\,\dd\bar t + \vect{g}\,\dd w$
        & $\dd\vect{x} = (-\vect{f} + \tfrac{1}{2}\vect{s})\,\dd\bar t = -\vect{v}\,\dd\bar t$ \\
\hline\hline
\end{tabular}
\end{table}

The score $\vect{s}$ and flow velocity $\vect{v}$ both depend on the unknown marginal $p(\vect{x},t)$ and are intractable to evaluate directly; we therefore approximate them by parameterized models $\vect{s}_\theta$ and $\vect{v}_\theta$ learned from sampled trajectory data. For increments along trajectories of the forward SDE, a standard \emph{denoising score-matching} (DSM) objective is~\cite{ho2020denoising,song2020score}, aggregated over training times~$t$,
\begin{equation}\label{eq:loss_score}
\mathcal{L}^{\mathrm{sm}}_\theta = \sum_t \tfrac{1}{2}\,\EE_{\vect{x}_t,\vect{x}_{t-\dd t}}\Big\|\tfrac{\vect{x}_t-\vect{x}_{t-\dd t}}{\dd t} - \vect{f}(\vect{x}_t,t) + \vect{s}_\theta(\vect{x}_t,t)\Big\|^2,
\end{equation}
where $\EE_{\vect{x}_t,\vect{x}_{t-\dd t}}$ denotes an expectation over consecutive pairs along forward-SDE trajectories (and, implicitly, over trajectories as sampled), fitting $\vect{s}_\theta$ to the SDE drift residual. Our backward model is a unitary (deterministic) flow, so at the classical level it parallels sampling along the probability-flow ODE associated with the same marginals~\cite{song2020score}. Substituting \cref{eq:score_flow} into \cref{eq:loss_score} reparameterizes DSM as a regression on $\vect{v}_\theta$ trained from forward-SDE increments,
\begin{equation}\label{eq:loss_crossed_classical}
\mathcal{L}_\theta = \sum_t \tfrac{1}{2}\,\EE_{\vect{x}_t,\vect{x}_{t-\dd t}}\Big\|\tfrac{\vect{x}_t-\vect{x}_{t-\dd t}}{\dd t} + \vect{f}(\vect{x}_t,t) - 2\,\vect{v}_\theta(\vect{x}_t,t)\Big\|^2,
\end{equation}
where the prefactor $2$ on $\vect{v}_\theta$ is inherited directly from the probability-flow identity $\vect{v}=\vect{f}-\tfrac{1}{2}\vect{s}$ underlying \cref{eq:score_flow}.

In the quantum protocol, the same structural asymmetry appears between a \emph{stochastic} forward channel and a \emph{deterministic} reverse channel. Randomized weak measurements realize the forward pure-state diffusion \cref{eq:sde_pure}: each trajectory samples a measurement record whose noise drives the unraveling. The goal of learning is nonetheless to synthesize a \emph{unitary} backward map on $\ket{\psi_t}$, the unique deterministic dynamics compatible with preserving purity when reversing the diffusion. We are therefore in the same forward-SDE / probability-flow-ODE split addressed classically by reparameterizing DSM through \cref{eq:loss_crossed_classical}.

Concretely, we lift that objective to Hilbert space along the pure-state quantum SDE \cref{eq:sde_pure}: $\vect{x}\in\mathbb{R}^d$ becomes $\ket{\psi}\in\mathbb{C}^{2^n}$, the Euclidean norm becomes the state-vector $L^2$ norm, and the drift and flow become state-vector-valued, $\ket{f(\psi,t)} = -\tfrac{\gamma}{2}\,\delta O^2\ket{\psi}$ and $\ket{v_\theta(\psi,t)} = -\ii\,H_\theta(\vect{z},t)\ket{\psi}$. Substituting into the Hilbert-space analogue of \cref{eq:loss_crossed_classical} and expanding to first order in $\dd t$ yields the mean-infidelity loss
\begin{equation}\label{eq:loss_unitary}
\mathcal{L}_\theta = \sum_t \Bigl(\,1 - \EE_{\psi_{t-\dd t},\,\psi_{t}}\,\big|\!\bra{\psi_{t-\dd t}}V_\theta(\vect{z}_t,t)\ket{\psi_t}\!\big|^2\Bigr),
\end{equation}
with $\EE_{\psi_{t-\dd t},\,\psi_{t}}$ averaging over consecutive pure states along unravelings (and, implicitly, over trajectories), and with the same $\sum_t$ convention as in \cref{eq:loss_score,eq:loss_crossed_classical}. The \emph{score operator}
\begin{equation}\label{eq:V_theta}
V_\theta(\vect{z}_t,t) := \exp\!\Big[\big(-\tfrac{\gamma}{2}\,\delta O^2 + 2\ii\,H_\theta(\vect{z}_t,t)\big)\,\dd t\Big]
\end{equation}
implements one $\dd t$-step of the reverse process, with $\ket{\psi_t} = K_{\dd t}(O,\dd o)\ket{\psi_{t-\dd t}}$ a forward-diffused state and $\vect{z}_t = \bra{\psi_t}\vect{P}\ket{\psi_t}$ its Pauli expectation values, both reconstructed from the measurement record by classical decoding. The trained reverse process is generated by applying $V_\theta^\dagger(\vect{z}_t,t)$ at each $\dd t$-step from $t=T$ back to $t=0$.

\subsubsection{Error Bound Analysis}
Given a time-dependent control Hamiltonian $H(t)$, the reverse evolution along each pure-state trajectory from $t = T$ back to $t = 0$ is implemented by the unitary map
\begin{equation}
U[H] \;=\; \mathcal{T}^{\dagger}\exp\!\Big(\ii\int_T^0 H(\vect{z}_t,t)\,\dd t\Big),
\end{equation}
where $\mathcal{T}^{\dagger}$ denotes reverse time-ordering.
Acting with $U[H]$ on each final state $\ket{\psi_T}$ generates a \emph{reverse-propagated ensemble}
\begin{equation}
\mathcal{E}_0[H] = \big\{U[H]\ket{\psi_T}: \ket{\psi_T} \in \mathcal{E}_T \big\},
\end{equation}
where $\mathcal{E}_T$ denotes the final state ensemble $t = T$.
The corresponding \emph{average state} at $t=0$ is obtained by averaging over the ensemble $\mathcal{E}_0[H]$,
\begin{equation}
\bar{\rho}_0[H] =\mathbb{E}_{\psi_T}
U[H]\ket{\psi_T}\!\bra{\psi_T}U^\dagger[H].
\end{equation}
To derive an error bound for the entire pure-state ensemble $\mathcal{E}_0[H]$ produced by the backward unitary evolution, we employ the Wasserstein distance as a metric on the space of pure-state ensembles, since the Wasserstein distance compares full probability distributions rather than just their averages. Suppose the learned generative model (with control Hamiltonian $H_\theta$) produces an ensemble $\mathcal{E}_0[H_\theta] = \{\,|\phi\rangle\langle\phi| : \phi \sim p_\theta(\phi)\,\}$ of pure states drawn from a distribution $p_\theta$, and that the true ensemble of pure states is $\mathcal{E}_0[H_{\mathrm{true}}] = \{\,|\psi\rangle\langle\psi| : \psi \sim p_{\mathrm{true}}(\psi)\,\}$ for some distribution $p_{\mathrm{true}}$. The Wasserstein-1 distance (with trace distance as the cost function) between $p_\theta$ and $p_{\mathrm{true}}$ is defined as \cite{Zhang2024G2310.05866}

\begin{equation}\label{seq-Wp-distance}
\begin{split}
&W_1(p_\theta, p_{\mathrm{true}})
\\
&=  \inf_{\pi \in \Pi(p_\theta, p_{\mathrm{true}})} 
\int D(|\phi\rangle,\,|\psi\rangle) \;
\pi\big(|\phi\rangle,\, |\psi\rangle \big)\;\dd\phi\, \dd\psi ,
\end{split}
\end{equation}
where $\Pi(p_\theta, p_{\mathrm{true}})$ denotes the set of all couplings (joint distributions) $\pi(\phi,\psi)$ having marginals $p_\theta(\phi)$ and $p_{\mathrm{true}}(\psi)$. In other words, $\pi(\phi,\psi)$ satisfies $\int \pi(\phi,\psi)\,d\phi = p_{\mathrm{true}}(\psi)$ and $\int \pi(\phi,\psi)\,d\psi = p_\theta(\phi)$. The cost function $D(|\phi\rangle, |\psi\rangle)$ is the trace distance between the pure states $|\phi\rangle$ and $|\psi\rangle$.

\begin{theorem}\label{thm:theorem1}
Distribution Convergence Theorem. Let $p_{\mathrm{true}}$ be the true initial-state distribution, and let $p_\theta$ be the distribution obtained by running the reverse diffusion process for time $T$ under the learned control unitary $V_\theta$. Assume $V_\theta$ satisfies  
$\EE_{\psi_{t-\dd t},\psi_t}\sqrt{1-
\left|\langle \psi_{t-dt}|V_\theta(\boldsymbol{z}_t, t)|\psi_t\rangle \right|^2} \le \epsilon dt$,
and that it is Lipschitz continuous such that for any pure state $|\alpha\rangle$, 
$D(V_\theta^\dagger(\boldsymbol{z}, t)|\alpha\rangle,V_\theta^\dagger(\boldsymbol{z}', t)|\alpha\rangle)\le L_Vdt D(|\psi\rangle,|\psi^{\prime}\rangle)$, where $\boldsymbol{z}$ and $\boldsymbol{z}'$ are the state vectors of $|\psi\rangle$ and $|\psi^{\prime}\rangle$, respectively. Then by choosing a stopping time $T = \frac{C}{L_v}\ln(1/\epsilon)$ with $0 < C < 1$, the Wasserstein-1 distance between $p_\theta$ and $p_{\mathrm{true}}$ is bounded by 
\begin{equation}\label{eq-w-bound}
W_{1}(p_{0}, p_{\mathrm{true}}) 
\leq \frac{\sqrt{2}}{L_{v}}  \epsilon^{1-C} 
+ \epsilon^{-C} \, E_{\mathrm{diff}}(T)
\end{equation}
with $E_{\mathrm{diff}}(T)$ decaying exponentially in $T$ as $ne^{-\frac{4\gamma}{3n}T}$. If we choose a sufficiently large measurement strength $\gamma$ such that $\frac{4\gamma}{3n L_{v}} \ge 1$, the leading-order behavior of the bound for small $\epsilon$ is dominated by the first term in \cref{eq-w-bound}, and $
W_{1}(p_{0}, p_{\mathrm{true}}) \;\to\; 0$ as $\epsilon \to 0.$
\end{theorem}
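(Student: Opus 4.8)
The plan is to prove \cref{eq-w-bound} by a single discrete Gr\"onwall estimate that propagates, step by step from $t=T$ down to $t=0$, the Wasserstein-1 distance between the distribution generated by the learned reverse process and the true forward-diffused distribution. Let $\mu_t$ be the law of $\ket{\psi_t}$ under the forward process \cref{eq:sde_pure} started from $p_{\mathrm{true}}$, so $\mu_0=p_{\mathrm{true}}$ and $\mu_T=\mathcal{E}_T$; let $\nu_t$ be the law obtained by running the learned reverse process from the reference ensemble at time $T$ down to time $t$, so $\nu_0=p_0$ and $\nu_T$ is the reference ensemble. Denote by $R^\theta_t$ the learned one-step reverse map, which sends a pure state $\ket{\chi}$ to $V_\theta^\dagger(\vect{z},t)\ket{\chi}$ conditioned on its Pauli vector $\vect{z}=\vect{z}(\chi)$ (this adjoint convention is the one for which the hypotheses of \cref{thm:theorem1} are stated), and let $N=T/\dd t$. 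The target inequality will follow from iterating
\begin{equation}
W_1(\nu_t,\mu_t)\le(1+L_V\dd t)\,W_1(\nu_{t+\dd t},\mu_{t+\dd t})+\epsilon\,\dd t ,\qquad W_1(\nu_T,\mu_T)=E_{\mathrm{diff}}(T),
\end{equation}
where $E_{\mathrm{diff}}(T)$ is the initialization gap between the reference ensemble and the true forward-diffused ensemble $\mathcal{E}_T$.

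To obtain the one-step inequality I would write $W_1(\nu_t,\mu_t)=W_1\big((R^\theta_t)_{\#}\nu_{t+\dd t},\,\mu_t\big)$ and split it by the triangle inequality into a \emph{stability} term $W_1\big((R^\theta_t)_{\#}\nu_{t+\dd t},(R^\theta_t)_{\#}\mu_{t+\dd t}\big)$ and a \emph{consistency} term $W_1\big((R^\theta_t)_{\#}\mu_{t+\dd t},\mu_t\big)$. For the stability term, note that a unitary preserves the trace-distance cost $D$, so the triangle inequality together with the Lipschitz hypothesis gives, for any $\ket{\chi},\ket{\chi'}$, $D\big(R^\theta_t\ket{\chi},R^\theta_t\ket{\chi'}\big)\le D(\ket{\chi},\ket{\chi'})+D\big(V_\theta^\dagger(\vect{z}(\chi),t)\ket{\chi'},V_\theta^\dagger(\vect{z}(\chi'),t)\ket{\chi'}\big)\le(1+L_V\dd t)D(\ket{\chi},\ket{\chi'})$; pushing an optimal coupling of $(\nu_{t+\dd t},\mu_{t+\dd t})$ forward through $R^\theta_t\times R^\theta_t$ then gives $W_1\big((R^\theta_t)_{\#}\nu_{t+\dd t},(R^\theta_t)_{\#}\mu_{t+\dd t}\big)\le(1+L_V\dd t)\,W_1(\nu_{t+\dd t},\mu_{t+\dd t})$. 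For the consistency term, couple $(R^\theta_t)_{\#}\mu_{t+\dd t}$ and $\mu_t$ through the exact forward trajectory, pairing $\ket{\psi_{t+\dd t}}\sim\mu_{t+\dd t}$ with its true predecessor $\ket{\psi_t}\sim\mu_t$; this coupling bounds the term by $\EE\,D\big(V_\theta^\dagger(\vect{z}_{t+\dd t},t)\ket{\psi_{t+\dd t}},\ket{\psi_t}\big)=\EE\sqrt{1-|\langle\psi_{t+\dd t}|V_\theta(\vect{z}_{t+\dd t},t)|\psi_t\rangle|^2}\le\epsilon\,\dd t$, which is precisely the first hypothesis of \cref{thm:theorem1} evaluated on the forward joint law.

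Unrolling the recursion over the $N$ reverse steps then yields $W_1(p_0,p_{\mathrm{true}})\le(1+L_V\dd t)^N E_{\mathrm{diff}}(T)+\epsilon\,\dd t\sum_{j=0}^{N-1}(1+L_V\dd t)^j\le e^{L_V T}E_{\mathrm{diff}}(T)+\tfrac{1}{L_V}\epsilon\,(e^{L_V T}-1)$, using $(1+x)^N\le e^{Nx}$ and $N\dd t=T$. Substituting the stopping time $T=\frac{C}{L_V}\ln(1/\epsilon)$, for which $e^{L_V T}=\epsilon^{-C}$, reproduces \cref{eq-w-bound} up to the constant $\sqrt2$, which is the universal factor picked up when converting the infidelity in the hypothesis into the trace-distance cost of $W_1$ in \cref{seq-Wp-distance}. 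For the limiting statement, the residual forward error obeys $E_{\mathrm{diff}}(T)\lesssim n\,e^{-\frac{4\gamma}{3n}T}$ — the slowest ($|P|=1$) mode of the Pauli-weight decay \cref{eq-measurement-channel-pauli-weight} — so the second term of \cref{eq-w-bound} is $\lesssim n\,\epsilon^{-C}\epsilon^{\frac{4\gamma C}{3nL_V}}=n\,\epsilon^{C(\frac{4\gamma}{3nL_V}-1)}$; choosing $\gamma$ so that $\frac{4\gamma}{3nL_V}\ge1$ makes this subleading to the first term $\frac{\sqrt2}{L_V}\epsilon^{1-C}$, and since $0<C<1$ both terms vanish as $\epsilon\to0$.

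The step I expect to be the main obstacle is the estimate $E_{\mathrm{diff}}(T)\lesssim n\,e^{-\frac{4\gamma}{3n}T}$. The Pauli-weight formula \cref{eq-measurement-channel-pauli-weight} controls only the \emph{averaged} state $\bar\rho_T$, whereas $E_{\mathrm{diff}}(T)=W_1(\nu_T,\mu_T)$ is a distance between full \emph{ensembles}, and ensembles with identical first moments can be $W_1$-far apart; closing this gap requires showing that the purity-preserving forward diffusion \cref{eq:sde_pure} relaxes the \emph{distribution} over pure states to the reference ensemble at the rate set by the slowest relaxation mode of $\mathcal{L}$ — for instance by coupling two forward trajectories launched from $p_{\mathrm{true}}$ and from the reference and exhibiting a contraction — which is more delicate than in classical Langevin dynamics because the measurement back-action noise in \cref{eq:sde_pure} is state-dependent rather than additive. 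A secondary point to verify carefully is the identification $\mu_0=p_{\mathrm{true}}$ underpinning the consistency-term coupling, i.e.\ that the exact, measurement-record-conditioned time reversal is genuinely unitary step by step, as asserted around \cref{eq:loss_unitary}.
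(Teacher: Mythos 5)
Your proposal is correct and follows essentially the same route as the paper's proof: the same one-step recursion $W_1(\nu_t,\mu_t)\le(1+L_V\,\dd t)\,W_1(\nu_{t+\dd t},\mu_{t+\dd t})+\sqrt{2}\,\epsilon\,\dd t$, obtained by coupling the learned reverse step against the true forward transition kernel and absorbing the conditioning mismatch ($\vect{z}$ of the learned versus true state) into the Lipschitz factor, then unrolled and evaluated at the stopping time $T=\frac{C}{L_V}\ln(1/\epsilon)$. The obstacle you flag at the end --- that the Pauli-weight decay controls only the averaged state $\bar\rho_T$ while $E_{\mathrm{diff}}(T)$ is a Wasserstein distance between full pure-state ensembles --- is a genuine issue, and the paper's own proof simply asserts $W_1(p_{\theta,T},p_{\mathrm{true},T})\sim n\,e^{-4\gamma T/3n}$ at exactly this point without closing that gap either.
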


\emph{Proof sketch}. The $1$-Wasserstein distance $W_1(p_{\theta}, p_{\mathrm{true}})$ between the learned and true distributions accumulates over time due to the small per-step discrepancy $\epsilon dt$ between their reverse diffusion processes. At each infinitesimal time step, this deviation is amplified at most by the Lipschitz factor $e^{L_V dt}$. Meanwhile, $E_{\mathrm{diff}}(T)$ quantifies how far the distribution obtained by forward diffusion up to time $T$ is from a uniform random product-state distribution, and this deviation is exponentially small in $T$. Combining these results and choose $T = \frac{C}{L_v}\ln(1/\epsilon)$ with $0 < C < 1$ yields the bound stated in the theorem. Full proof details are provided in SM \cite{suppl}.

The above theorem means that the distribution distance $W_1 \to 0$ when the training error $\epsilon$ is made arbitrarily tiny.

\subsubsection{Numerical Demonstration} 

To demonstrate the effectiveness of control-Hamiltonian learning, we consider a single-qubit example, as illustrated in \cref{1qubit-distribution} and \cref{1qubit-trajectory}, along with two two-qubit examples shown in \cref{2qubit-bell}. The first two-qubit example takes the two-qubit spin singlet $\psi_B = (|\uparrow\downarrow\rangle - |\downarrow\uparrow\rangle)/\sqrt{2}$ as a fixed target initial state---the recovery problem here is to reconstruct this single rank-one Bell state from the forward-diffused records, providing the most stringent test of the controller on a maximally entangled target. The details of the two-qubit Heisenberg model can be found in the caption of \cref{2qubit-bell}.

\begin{figure}[htbp]
\begin{center}
\includegraphics[width=0.8\linewidth]{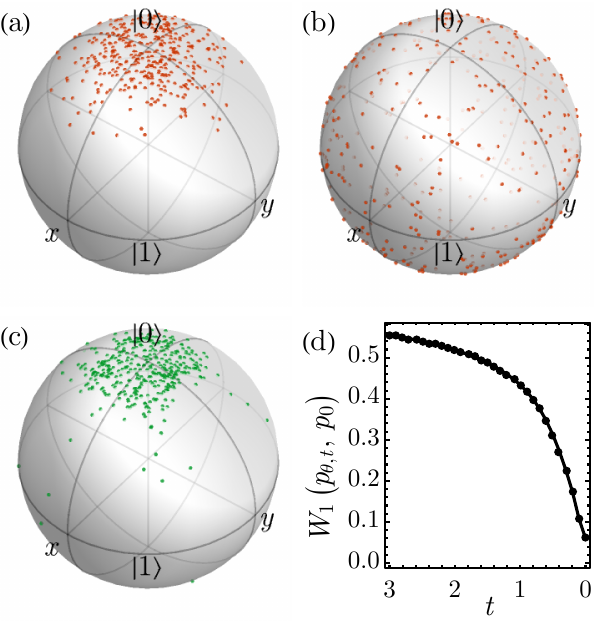}
\caption{Single qubit example with $T=3, \gamma=1, dt=0.01.$ (a)  Training-state ensemble $\mathcal{E}_\psi=\{\ket{\psi}\!\bra{\psi}: \psi \sim p(\psi)\}$ with $|\psi\rangle = (|0\rangle + |\delta\psi\rangle)/\text{normalization},$ where $|\delta\psi\rangle=\alpha_0|0\rangle+\alpha_1|1\rangle$ and $\alpha_{0,1}$ are complex Gaussian variables (standard deviation 0.2), normalized to unit norm. (b) Initial distribution of 400 samples for backward diffusion. (c) Final distribution of these samples after backward diffusion. (d) Decay of the Wasserstein distance $W(p_{\theta,t}, p_{\mathrm{true},0})$ versus time $t.$}
\label{1qubit-distribution}
\end{center}
\end{figure}

\begin{figure}[htbp]
\begin{center}
\includegraphics[width=0.92\linewidth]{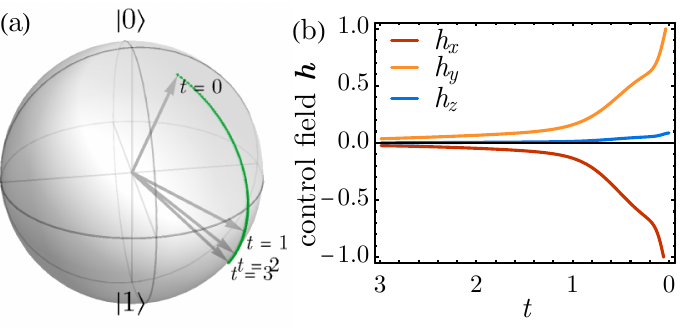}
\caption{One example trajectory for the single qubit example in \cref{1qubit-distribution}. (a) Single backward-diffusion trajectory. (b) Corresponding time evolution of the coefficients in $H_\theta=h_x\sigma_x + h_y\sigma_y + h_z\sigma_z$.}
\label{1qubit-trajectory}
\end{center}
\end{figure}

\begin{figure}[htbp]
\begin{center}
\includegraphics[width=0.64\linewidth]{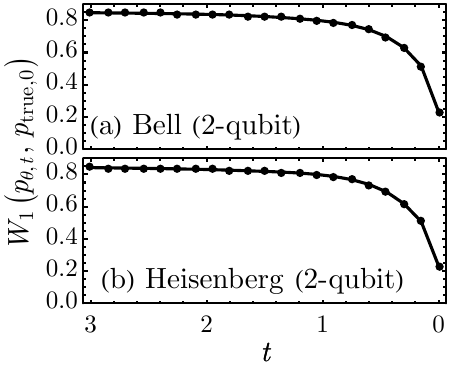}
\caption{Two qubit examples with $T=3, \gamma=1, dt=0.01.$ (a) $W_1(p_{\theta,t}, p_{\mathrm{true},0})$ versus time $t$ for the two-qubit spin singlet (Bell state) $|\psi_B\rangle = \frac{1}{\sqrt{2}}\left( | \uparrow\downarrow \rangle - | \downarrow\uparrow \rangle \right)$ taken as the fixed target initial state. (b) $W_1(p_{\theta,t}, p_{\mathrm{true},0})$ versus time $t$ for a thermal state ensemble of a two-qubit Heisenberg model $H=J\boldsymbol{\sigma}_1\cdot\boldsymbol{\sigma}_2+B_x\sigma_{1,x}+B_z\sigma_{2,x}$ with $B_x=B_z=0.5$ and $J=1$, in a temperature range $T_H\in[0,0.5]$.}
\label{2qubit-bell}
\end{center}
\end{figure}

\subsection{Classical Shadow Reconstruction}\label{sec:shadow}

\subsubsection{Problem Statement}
We now pivot from trajectory-wise reversal to \emph{ensemble-average} recovery.  
In this setting, the data are no longer a single measurement record $\mathcal{O}=\{(O_t,o_t)\}$ tied to one quantum trajectory, but a collection of independent and identically distributed records $\{\mathcal{O}^{(s)}\}_{s=1}^M$ generated by repeatedly performing randomized weak-measurement protocol on the unknown initial state $\bar{\rho}_0$.  
The learning task is to reconstruct the $\bar{\rho}_0$ given the measurement record dataset.  

This is a problem of learning quantum states from classical data.  
When the objective is inference---rather than coherent state preparation on a quantum device---\emph{classical shadow tomography} \cite{Huang2020P2002.08953} provides an efficient solution: it associates to each measurement record a classical shadow that is an unbiased randomized estimator of $\bar{\rho}_0$, and averaging classical shadows yields a reconstruction of $\bar{\rho}_0$ with rigorous concentration guarantees and favorable sample complexity.

\subsubsection{Measurement-and-Prepare Channel}
For each measurement record $\mathcal{O}$ obtained from forward diffusion over time $t$,  
define the \emph{measurement operator}
\begin{equation}
\sigma_t(\mathcal{O}) = K_t^\dagger(\mathcal{O})\, K_t(\mathcal{O})    
\end{equation}
in terms of the Kraus operator $K_t(\mathcal{O})$ defined in \cref{eq:K_t},
such that the probability of observing $\mathcal{O}$ on an initial state $\bar{\rho}_0$ is
\begin{equation}
    p(\mathcal{O}\mid \bar{\rho}_0) = \Tr( \sigma_t(\mathcal{O})\,\bar{\rho}_0 ),
\end{equation}
referred to as the \emph{posterior distribution} of $\mathcal{O}$. 

Classical shadow tomography begins by repeatedly measuring identical copies of $\bar{\rho}_0$,  
thereby generating measurement records $\mathcal{O}$ drawn from the posterior distribution.  
Collecting $M$ such measurement shots produces the dataset  
$\{\mathcal{O}^{(s)}\}_{s=1}^M$.

In the classical post-processing stage, each record $\mathcal{O}$ is mapped to its corresponding measurement operator $\sigma_t(\mathcal{O})$, regarded as a \emph{classical snapshot}.  
This process can be described as a \emph{measure-and-prepare channel}:
\begin{equation}
\begin{split}
\mathcal{M}_t(\bar{\rho}_0) &= 
\sum_{\mathcal{O}} \sigma_t(\mathcal{O}) \,\Tr\!\left[ \sigma_t(\mathcal{O})\,\bar{\rho}_0 \right]\\
&= \EE_{\mathcal{O} \sim p(\mathcal{O} \mid \bar{\rho}_0)} \sigma_t(\mathcal{O}).    
\end{split}
\end{equation}
Here we intentionally do not normalize $\mathcal{M}_t$ (so it is not trace-preserving), as this choice simplifies the analysis.

The map $\mathcal{M}_t$ is a linear transformation from the target state $\bar{\rho}_0$ to the average classical snapshot.  
If the measurement scheme is \emph{tomographically complete}---as in the randomized Pauli measurement case---$\mathcal{M}_t$ is invertible, and the state can be reconstructed as
\begin{equation}
\bar{\rho}_0
= \EE_{\mathcal{O} \sim p(\mathcal{O} \mid \bar{\rho}_0)}
\,\mathcal{M}_t^{-1}\!\left[ \sigma_t(\mathcal{O}) \right].
\end{equation}
Estimating the ensemble average requires samples from the quantum device,  
while the post-processing step constructs classical snapshots and applies $\mathcal{M}_t^{-1}$.  
This is the basic principle of {classical shadow reconstruction}.  
The problem thus reduces to determining the inverse map $\mathcal{M}_t^{-1}$, also called the \emph{reconstruction map}.

\subsubsection{Channel Inversion and Pauli Weight}

In general, computing $\mathcal{M}_t^{-1}$ is challenging,  
but our weak-measurement scheme allows an exact solution thanks to two key features:
\begin{enumerate}
    
\item[(i)] \emph{Pauli Twirling}.  
   In randomized Pauli measurement, the observable distribution is invariant under local Clifford unitaries, implying that $\mathcal{M}_t$ is Pauli-twirled.  
   Therefore, $\mathcal{M}_t$ is diagonal in the Pauli basis:
   \begin{equation}
   \mathcal{M}_t(P) = w_{\mathcal{M}_t}(P)\, P,
   \end{equation}
   where the \emph{Pauli weight} (as the channel eigenvalue) is defined as
   \begin{equation}
   w_{\mathcal{M}_t}(P) = \frac{\Tr ( P\, \mathcal{M}_t(P) )}{\Tr \id}.
   \end{equation}

\item[(ii)] \emph{Locality}.  
   The randomized Pauli measurement is a local measurement scheme,  
   meaning that weak measurements act independently on each qubit.  
   This locality enables us to compute the Pauli weight for each qubit separately and combine the results multiplicatively.  
   The final closed-form expression is
   \begin{equation}
   w_{\mathcal{M}_t}(P) = w_{\mathcal{M}_t}(\id)
   \left( \frac{1 - \ee^{-\frac{16\gamma t}{3n}}}{3 + \ee^{-\frac{16\gamma t}{3n}}} \right)^{|P|},
   \end{equation}
   where $|P|$ denotes the operator weight (support size) of the Pauli string $P$, and the factor
   \begin{equation}
       w_{\mathcal{M}_t}(\id)=\left(3\,\ee^{\frac{4\gamma t}{3n}}+\ee^{-\frac{4\gamma t}{n}}\right)^n
   \end{equation}
   is a time-dependent function.
\end{enumerate}
With these results, the reconstruction map is
\begin{equation}
\mathcal{M}_t^{-1}(P) = \frac{1}{w_{\mathcal{M}_t}(P)}\,P.
\end{equation}

\subsubsection{Estimating Observables and Sample Complexity}
For finite sample size $M$, the Pauli expectation value can be estimated as
\begin{equation}
z_{0,i} \equiv \Tr\!\left( \bar{\rho}_0\,P_i \right)
\approx \frac{1}{M} \sum_{s=1}^M \frac{\Tr\!\left( \sigma_t(\mathcal{O}^{(s)})\,P_i \right)}{w_{\mathcal{M}_t}(P_i)},
\end{equation}
from which the initial ensemble-average state $\bar{\rho}_0$ can be formally recovered as 
\begin{equation}
\bar{\rho}_0=\frac{\vect{z_0}\cdot\vect{P}}{2^n}\approx \frac{1}{2^n M} \sum_{s=1}^M \frac{\Tr\!\left( \sigma_t(\mathcal{O}^{(s)})\,P_i \right)\,P_i}{w_{\mathcal{M}_t}(P_i)},
\end{equation}
effectively realizing the reverse diffusion via classical post-processing the forward diffusion data $\{\mathcal{O}^{(s)}\}_{s=1}^{M}$.

The statistical variance associated with $z_{0,i}$ scales as
\begin{equation}
\mathrm{Var}[z_{0,i}] \sim \frac{\|P_i\|_{\mathrm{sh}}^2}{M},
\end{equation}
where $\|P\|_{\mathrm{sh}}^2$ denotes the \emph{shadow norm} and is given by
\begin{equation}
\|P\|_{\mathrm{sh}}^2=\frac{\Tr\!\left(P\mathcal{M}_t^{-1}(P)\right)}{\Tr\!\left(\mathcal{M}_t^{-1}(\id)\right)}=\frac{w_{\mathcal{M}_t}(\id)}{w_{\mathcal{M}_t}(P)}.
\end{equation}
Therefore, to achieve a target variance $\epsilon^2$ for a given Pauli observable $P$, the required sample size scales as
\begin{equation}
M \sim \frac{\|P\|_{\mathrm{sh}}^2}{\epsilon^2}.
\end{equation}

Physically, the Pauli weight $w_{\mathcal{M}_t}(P)$ quantifies the rate that measurement extracts information about the Pauli observable $P$.  
In reconstruction, the expectation value of $P$ must be reweighted by $1/w_{\mathcal{M}_t}(P)$, which necessarily amplifies statistical fluctuations.  
A smaller Pauli weight indicates less information per sample and thus a larger amplification factor and a larger sample complexity.

The shadow norm exhibits characteristic time dependence:  
for short times ($t\to 0$), it diverges as $\mathcal{O}(1/t)$, reflecting that weak measurements initially acquire little information about $\bar{\rho}_0$, hence requiring many samples for accurate reconstruction.  
As $t$ increases, the information per trajectory grows linearly before saturating.  
In the long-time limit ($t\to\infty$), the shadow norm converges to $3^{|P|}$,  
matching the scaling of projective Pauli measurements.

\subsection{Petz Recovery}

\subsubsection{Global Petz Recovery}
At the average state level, reversing quantum diffusion amounts to inverting the weak measurement channel $\mathcal{O}_t$ that governs the evolution of the average state $\bar{\rho}_t$. 
A general, near-optimal method for channel inversion is the \emph{Petz recovery map} \cite{petz1986sufficient,Kwon_2019,Kwon2022R2104.03360}, which guarantees exact recovery when $\mathcal{F}_t$ is reversible on the support of $\bar{\rho}_0$ and provides strong theoretical performance guarantees for approximate cases.

For an infinitesimal time step $\dd t$, the Petz recovery map can be expressed in the ``twirled'' integral form
\begin{equation}\label{eq:petz-twirled}
\begin{split}
&\widetilde{\mathcal{R}}_{\dd t}(\sigma)=\int_{-\infty}^\infty f(\tau) \mathcal{R}_{\dd t}^\tau(\sigma)\dd\tau\ ,\\ &\mathcal{R}_{\dd t}^\tau(\sigma)=\bar{\rho}^{\frac{1-i\tau}{2}}_{t}\mathcal{F}_{\dd t}^\dagger \left(\bar{\rho}_{t+\dd t}^{\frac{-1+i\tau}{2}}\sigma \bar{\rho}_{t+\dd t}^{\frac{-1-i\tau}{2}}\right)\bar{\rho}^{\frac{1+i\tau}{2}}_{t}\ ,
\end{split}
\end{equation}
where $f(\tau) = \frac{1}{2(\cosh(\pi\tau)+1)}$ and $\mathcal{F}_{\dd t}^\dagger=\ee^{\dd t\mathcal{L}^\dagger}$ denotes the adjoint channel of $\mathcal{F}_{\dd t}$, and the Liouvillian super-operator $\mathcal{L}=\mathcal{L}^\dagger$ is self-adjoint as defined in \cref{eq:pde}. Physically, \cref{eq:petz-twirled} describes how to undo the decoherence from $\mathcal{F}_{dt}$ using the eigenbasis of $\bar{\rho}_t$ as a prior. The forward channel $\mathcal{F}_{dt}$ slightly suppresses off-diagonal elements of $\bar{\rho}_t$ in its eigenbasis, corresponding to a drift toward the maximally mixed state. The Petz map applies the opposite drift by reweighting the eigencomponents of $\sigma$ relative to $\bar{\rho}_t$, thereby restoring the lost coherence. Iterating $\widetilde{\mathcal{R}}_{dt}$ over the full evolution time yields a reconstruction of the initial average state from the final diffused state.

\subsubsection{Local Petz Recovery}
While the \emph{global} Petz map is an exact recovery channel in theory, it is generally intractable for large many-body systems due to its fully nonlocal action.  
A practical alternative arises when the ensemble exhibits predominantly short-range correlations, characterized by a finite \emph{Markov length} $\xi$: correlations between a region $A$ and the distant complement $C$ decay rapidly once a buffer region $B$ (of width $\xi$) is included, as shown in \figref{fig:petz}(a).  
Formally, this corresponds to the conditional mutual information (CMI) $I_\rho(A\!:\!C | B)$ being small for appropriate tripartitions $(A,B,C)$.  
In this near-Markovian regime, the global state is well-approximated by a quantum Markov chain, and one can replace the global Petz map by a \emph{local} version acting only on $S_j = A \cup B$.

The local Petz map $\widetilde{\mathcal{R}}_{\dd t,S_j}$ is simply constructed by restricting the {twirled Petz} formula \cref{eq:petz-twirled} to $S_j$, using the reduced density matrix $\rho_{t,S_j}$ as the prior and $\mathcal{F}_{dt,S_j}$ as the local measurement channel.
Overlapping regions $\{S_j\}$ are chosen to cover the system, and the local recoveries $\widetilde{\mathcal{R}}_{dt,S_j}$ are applied in sequence or in parallel following the scheme of Ref.~\onlinecite{sang2024stabilitymixedstatequantumphases}.  
This yields a finite-depth recovery circuit implementable on quantum hardware, with accuracy controlled by the residual CMI.

\begin{theorem}[Local Petz Recovery for Quantum Diffusion with Finite Markov Length]\label{thm:theorem2}
We consider an $n$-qubit state $\bar{\rho}_0$ evolving under a sequence of Trotterized measurement channels $\mathcal{F}_{\delta t}$ over time $T$, leading to a fully decohered state $\bar{\rho}_T$. By applying a sequence of local twirled Petz recovery maps---each acting on a local neighborhood of the measured qubit---we approximately reverse this process. Assuming exponential decay of conditional mutual information with distance for every intermediate $\bar{\rho}_t$, we show that for sufficiently large $T$ and recovery region size, the final reconstructed state $
\bar{\rho}_0'$ satisfies $\| \bar{\rho}_0' - \bar{\rho}_0 \|_1\leq \epsilon.$
\end{theorem}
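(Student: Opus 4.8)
\textbf{Proof proposal for Theorem~\ref{thm:theorem2}.}

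The plan is to control the error accumulated by the sequence of local Petz recoveries step by step, bounding the discrepancy at each Trotter step by the conditional mutual information (CMI) of the corresponding intermediate state, and then summing these contributions. First I would set up the telescoping decomposition: write the full reverse map as $\widetilde{\mathcal{R}}^{\mathrm{loc}}_1 \circ \widetilde{\mathcal{R}}^{\mathrm{loc}}_2 \circ \cdots \circ \widetilde{\mathcal{R}}^{\mathrm{loc}}_{T/\delta t}$ acting on $\bar\rho_T$, and compare it to the ideal inverse obtained by the \emph{global} twirled Petz maps, which is exact up to the reversibility of each $\mathcal{F}_{\delta t}$ on the support of $\bar\rho_{t}$. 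Using the triangle inequality in trace norm and the contractivity of quantum channels (data-processing), the total error is at most $\sum_t \|\widetilde{\mathcal{R}}^{\mathrm{loc}}_{t,S}(\bar\rho_{t+\delta t}) - \widetilde{\mathcal{R}}^{\mathrm{global}}_{t}(\bar\rho_{t+\delta t})\|_1$, so it suffices to bound each single-step local-versus-global discrepancy.

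Next I would invoke the key technical ingredient: the Fawzi--Renner-type bound (and its refinement by Junge et al.) stating that the twirled Petz map recovers a state up to an error controlled by $\sqrt{I_\rho(A\!:\!C|B)}$, together with the locality structure of $\mathcal{F}_{\delta t}$. Because the single-step channel $\mathcal{F}_{\delta t}$ acts on a single qubit (or a small cluster), its global Petz inverse only ``feels'' the state through a neighborhood of that qubit; the difference between restricting the twirled Petz formula to $S_j = A\cup B$ and keeping it global is exactly the error of approximating $\bar\rho_t$ by a quantum Markov chain across the tripartition $(A,B,C)$ with $B$ the buffer of width $\xi$. By the assumed exponential decay of CMI, $I_{\bar\rho_t}(A\!:\!C|B) \le c\, e^{-|B|/\xi}$ for every intermediate time $t$, so each step contributes at most $O(\sqrt{c}\, e^{-|B|/2\xi})$ to the trace-norm error, uniformly in $t$. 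I would also need to use the scheme of Ref.~\onlinecite{sang2024stabilitymixedstatequantumphases} to argue that composing the overlapping local recoveries $\{\widetilde{\mathcal{R}}_{\delta t,S_j}\}$ over all measured qubits within one Trotter layer only multiplies the per-patch error by a factor polynomial in $n$ (or by the number of patches), not exponentially.

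Then I would assemble the bound: with $N = T/\delta t$ Trotter steps and $n$ qubits measured per unit time on average, the accumulated error is $\|\bar\rho_0' - \bar\rho_0\|_1 \lesssim N n \sqrt{c}\, e^{-|B|/2\xi}$, which can be made smaller than any $\epsilon$ by choosing the recovery-region buffer $|B|$ large enough relative to $\xi$ — namely $|B| \gtrsim 2\xi \ln(N n \sqrt{c}/\epsilon)$, i.e. logarithmic in the system size and evolution time. The condition ``sufficiently large $T$'' enters because we also need $\bar\rho_T$ to be genuinely close to the decohered fixed point $\id/2^n$ (so that the recovery starts from the correct reference), which by \cref{eq-measurement-channel-pauli-weight} happens up to error $\sim n\, e^{-\frac{4\gamma}{3n}T}$; adding this to the recovery error and balancing gives the claimed $\epsilon$.

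The main obstacle I anticipate is making the single-step local-versus-global comparison fully rigorous: the twirled Petz map involves operator powers $\bar\rho_t^{(1\pm i\tau)/2}$ that are highly nonlocal, so showing that truncating to $S_j$ incurs only a CMI-controlled error requires either a careful operator-norm perturbation argument (controlling how $\bar\rho_t^{z}$ changes when $\bar\rho_t$ is replaced by its Markov approximation $\bar\rho_{t,AB}\bar\rho_{t,B}^{-1}\bar\rho_{t,BC}$) or an appeal to the approximate-Markov-chain machinery that already packages this — and one must check that the relevant reduced density matrices stay well-conditioned (bounded below on their support) uniformly in $t$, which is plausible here since $\bar\rho_t$ interpolates between $\bar\rho_0$ and the full-rank maximally mixed state but still needs to be verified. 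A secondary subtlety is that the assumed CMI decay is a hypothesis on \emph{every} intermediate $\bar\rho_t$, and one should at least remark on why local depolarizing dynamics preserves (or cannot badly spoil) a finite Markov length, though the theorem as stated simply assumes it.
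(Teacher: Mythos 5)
Your proposal is correct in outline and follows essentially the same skeleton as the paper's proof: decompose the total error into per-step recovery errors summed over the $N_T=T/\delta t$ Trotter steps (your ``local vs.\ global Petz'' comparison is equivalent to the paper's $E_{\mathrm{petz}}$ term, since the exact Petz map inverts each forward step on the true trajectory), control each step by a twirled-Petz/CMI recoverability bound, and add the diffusion term $E_{\mathrm{diff}}(T)\sim n\,e^{-\frac{4\gamma}{3n}T}$, which forces $T\gtrsim \frac{3n}{4\gamma}\ln(n/\epsilon)$. The one substantive difference is how the per-step errors are accumulated. You bound each step by $O\big(\sqrt{I_{\bar\rho_t}(A\!:\!C|B)}\big)$ and sum linearly, obtaining $\sim N_T\, e^{-|B|/2\xi}$; the paper instead uses the strengthened inequality of Ref.~\cite{sang2024stabilitymixedstatequantumphases}, which bounds the \emph{squared} single-step error by the \emph{drop} of CMI across that step, applies Cauchy--Schwarz over the $N_T$ steps, and lets the CMI drops telescope (by data-processing monotonicity) down to the initial-state CMI alone, giving $E_{\mathrm{petz}}\lesssim\sqrt{N_T\,\mathrm{poly}(n)}\;e^{-\mathrm{dist}(A,C)/2\xi}$. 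Your version is weaker by roughly a factor $\sqrt{N_T}$ but still yields the theorem, since the buffer width only needs to grow logarithmically in $N_T n/\epsilon$ either way; what the paper's telescoping buys is a tighter bound and the fact that only the CMI of $\bar\rho_0$ is ever invoked (monotonicity handles intermediate times), whereas your route genuinely needs the assumed CMI decay at every intermediate $\bar\rho_t$, as you correctly remark. The obstacle you flag---rigorously controlling the nonlocal powers $\bar\rho_t^{(1\pm i\tau)/2}$ when truncating the twirled Petz formula to $S_j$---is exactly what the quoted inequality of Ref.~\cite{sang2024stabilitymixedstatequantumphases} packages, so no separate perturbation argument is needed there.
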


\noindent\textit{Proof Sketch.}  
The error bound for Petz recovery is derived following the approach of \cite{sang2024stabilitymixedstatequantumphases}: one first bounds the error of a single recovery step in terms of the conditional mutual information (CMI), then sums over $N_T$ steps to obtain a telescoping bound. By the data-processing inequality, CMI is monotonic, implying that the total recovery error remains bounded by $\delta$. Meanwhile, repeated weak measurements drive the state towards a maximally mixed product state by rapidly damping higher-weight Pauli operators (see \cref{eq-measurement-channel-pauli-weight}), so the difference between the final state $\bar{\rho}_T$ and the maximally mixed product state decays exponentially with $T$. Combining these results yields the stated bound. Full proof details are provided in SM \cite{suppl}.
 

The following protocol constructs local Petz recovery maps from forward-diffusion measurement data and implements them as local quantum channels on quantum hardware:
\begin{enumerate}
\item[(i)] \emph{Collect data.}  
    Repeatedly prepare the unknown target average state $\bar{\rho}_0$ and apply the forward diffusion process via randomized weak measurements.  
    Record the complete measurement record $\mathcal{O} = \{(O_t, o_t)\}$ for each trajectory, forming a dataset $\{\mathcal{O}\}$.

\item[(ii)] \emph{Process data.}  
    Apply weak-measurement classical shadow tomography to the dataset $\{\mathcal{O}\}$ to reconstruct an estimate $\tilde{\rho}_0$ of the initial average state.  
    Using this estimate, infer the intermediate states $\tilde{\rho}_t = \mathcal{F}_t(\tilde{\rho}_0)$ via classical simulation of the known forward channel $\mathcal{F}_t$, see \figref{fig:petz}(b).

\item[(iii)] \emph{Compute local Petz maps.}  
    For each overlapping local region $S_j$ (with size exceeding the Markov length $\xi$), construct the local Petz recovery map $\widetilde{\mathcal{R}}_{\dd t, S_j}$ by restricting the \emph{twirled Petz} formula \cref{eq:petz-twirled} to $S_j$, using the reduced state $\tilde{\rho}_{t,S_j}$ and the corresponding local channel $\mathcal{F}_{\dd t, S_j}$.  
    Arrange and stack these local maps in the reverse order of the forward schedule to obtain the overall Petz recovery protocol $\widetilde{\mathcal{R}}_t$, as illustrated in \figref{fig:petz}(c) (see \cite{sang2024stabilitymixedstatequantumphases} for the detailed construction).

\item[(iv)] \emph{Implement reverse diffusion.}  
    Each step of $\widetilde{\mathcal{R}}_t$ is a local quantum channel that can, in principle, be implemented on quantum hardware.  
    Initialize the device in a state $\bar{\rho}'_T$ (e.g., a random product state evolved forward to time $T$), then apply $\bar{\rho}'_0 = \widetilde{\mathcal{R}}_T(\bar{\rho}'_T)$ to obtain a recovered approximation to the original state $\bar{\rho}_0$.  
    This procedure enables repeated state preparation directly on the quantum device.
\end{enumerate}

This protocol provides a systematic construction of the reverse process as a composition of local quantum channels, tailored for forward diffusion driven by randomized measurements.  
In contrast to prior approaches to average-state recovery based on learned reverse quantum channel \cite{Parigi2023Q2308.12013,Chen2024Q2401.07039,Kolle2024Q2401.07049,Zhang2024G2310.05866,Kwun2025M2411.17608,Huang2025C2506.19270,Zhang2026P}, the Petz recovery approach requires no learning: the recovery maps are constructed directly from data and knowledge of the forward channel.  
This observation offers an important theoretical guarantee for learning-based methods---namely, that a purely classical algorithm exists to build local recovery maps from measurement data.  
Therefore, any successful learning-based recovery can be viewed as approximating, in principle, a well-defined recovery process given by local Petz maps.

\subsubsection{Numerical Demonstration}
To illustrate the proposed protocol, we perform a numerical experiment on a 10-qubit transverse-field Ising chain with open boundary conditions,
\begin{equation}
H = -J \sum_{i} \sigma_{z,i} \sigma_{z,i+1} - B_x \sum_i \sigma_{x,i} ,
\end{equation}
using $J=1.0$, and $B_x=1.5, 2.0$ and $5.0$ as in \cref{fig:petz}.
This choice places the system well away from the critical point ($B_x=1.0$), ensuring a finite correlation length and a ground state close to the product state $\ket{+}^{\otimes n}$.
The chain is initialized in the ground state $\bar{\rho}_0$ of $H$ and evolved under randomized weak-measurement dynamics for a total time $T=10$ with step size $\dd t=0.01$ ($1000$ diffusion steps).

Local Petz maps $\widetilde{\mathcal{R}}_{dt,S_j}$ are constructed for all contiguous three-qubit regions $S_j$ (each exceeding the Markov length) and arranged in a constant-depth circuit, ensuring that the entire protocol inverts the forward diffusion approximately while remaining finite in depth.
The recovered state $\bar{\rho}'_0$ achieves fidelities of $0.911, 0.963,$ and $0.989$ with respect to $\bar{\rho}_0$ for $B_x=1.5, 2.0,$ and $5.0,$ respectively, as in \cref{fig:petz}.

\begin{figure}[t]
\centering
\includegraphics[width=0.95\linewidth]{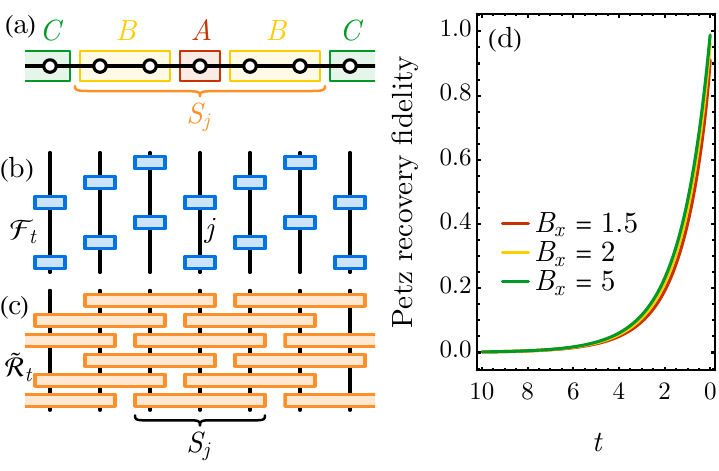}
    \caption{(a) Regions $A$ (qubit $j$), $B$, $C$ and $S_j$. When adjacent steps measure far-apart qubits, the circuit of (b) the forward weak measurement channel $\mathcal{F}$ and (c) the Petz recovery channel $\widetilde{\mathcal{R}}$ can be rearranged into a finite depth circuit. (d) Fidelity of Petz recovery at time $t$ with the initial state.}
    \label{fig:petz}
\end{figure}

\subsubsection{Petz Recovery as Time-Reversed Diffusion}  
We establish that the Petz recovery map provides a quantum generalization of classical time-reversed diffusion under weak measurements, and that in the classical large-spin limit the two become equivalent.

Consider an $n$-qubit separable state (no quantum entanglement), expressed as a mixture of tensor product of spin coherent states
\begin{equation}
\bar{\rho}_t = \int \dd\vect{n}\; p(\vect{n},t)\, \ket{\vect{n}}\!\bra{\vect{n}},
\end{equation}
where the spin configuration is described by $\vect{n} = (\vect{n}_1,\ldots,\vect{n}_n)$ with $|\vect{n}_j| = 1$, and the corresponding coherent state reads
\begin{equation}
|\vect{n}\rangle\langle\vect{n}| = \prod_{j=1}^n \frac{\id + \vect{n}_j \cdot \boldsymbol{\sigma}_j}{2},
\end{equation}
and $p(\vect{n},t)$ is a normalized probability density over the $n$-fold product of Bloch spheres.

Consider randomized Pauli weak measurements, whose action is equivalent to applying independent depolarizing channels on each qubit.  
Since such local measurements do not generate entanglement, every trajectory remains a pure product state throughout the process.  
Consequently, the ensemble evolution is fully captured by the probability density $p(\vect{n},t)$ over coherent state configurations $\vect{n}$.  
The resulting stochastic dynamics yield the forward Fokker--Planck equation  
\begin{equation}\label{eq:diffusion-product-states-n}
\partial_t p(\vect{n},t)=\frac{2\gamma}{3 n}\nabla_\perp^2 p(\vect{n},t),
\end{equation}
where $\nabla_\perp^2 = \sum_{j=1}^n \nabla_{\perp,j}^2$ is the sum of Laplacians on individual Bloch spheres.

Given the forward diffusion history of the average state 
$\bar{\rho}_t = \mathcal{F}_t(\bar{\rho}_0)$, 
the reverse process can be implemented as  
\begin{equation}
\bar{\rho}_0 = \mathcal{R}^0_t(\bar{\rho}_t),
\end{equation}
where $\mathcal{R}^0_t$ is the (untwirled) Petz map derived from \cref{eq:petz-twirled},  
\begin{equation}
\mathcal{R}^0_t(\sigma) 
= \bar{\rho}_t^{1/2}\,\mathcal{F}_t^\dagger\!\big(\bar{\rho}_{t+\dd t}^{-1/2}\,\sigma\,\bar{\rho}_{t+\dd t}^{-1/2}\big)\,\bar{\rho}_t^{1/2},
\end{equation}
which reweights $\sigma$ by $\bar{\rho}_t^{-1/2}$ through the adjoint channel.  
For $\sigma = \int q(\vect{n},t)\,\ket{\vect{n}}\!\bra{\vect{n}}\,\dd\vect{n}$, 
$\mathcal{R}^0_t(\sigma)$ induces a backward diffusion of the distribution $q(\vect{n},t)$ --- a quantum generalization of classical reverse diffusion.  
The PDE governing this backward evolution is given in SM~\cite{suppl}.

In the large-spin (classical) limit, coherent states with different $\vect{n}$ become orthogonal, $\langle\vect{n}|\vect{n}'\rangle \to \delta_{\vect{n},\vect{n}'}$. The Petz equation becomes the classical backward Fokker-Planck equation
\begin{equation}\label{eq:back-diff-classical}
\partial_t q(\vect{n},t) = \frac{2\gamma}{3n}\nabla_\perp^2 q(\vect{n},t) 
- \frac{4\gamma}{3n}\nabla_\perp\!\cdot\!\big[ q(\vect{n},t)\,\nabla_\perp \log p(\vect{n},t) \big],
\end{equation}
exactly reversing \cref{eq:diffusion-product-states-n}, with $\frac{4\gamma}{3n}\nabla_\perp \log p$ as the backward drift.

This result reveals a precise theoretical connection: in the classical limit, the Petz recovery map {is} time-reversed diffusion. This not only clarifies the physical meaning of Petz recovery, but also provides a rigorous bridge between quantum recovery channels and classical stochastic reversals, ensuring that any successful quantum Petz reconstruction has a well-defined classical counterpart.

\section{Conclusion and Discussion}

We developed a measurement-based framework for quantum diffusion that unifies trajectory-wise and ensemble-average reversal. On the forward side, randomized weak measurements generate stochastic pure-state trajectories while inducing a measurement channel that is locally scrambled and diagonal in the Pauli basis. On the reverse side, we presented two complementary routes: (i) trajectory-level reversal via score-matching/control--Hamiltonian learning; and (ii) ensemble-level reversal via either Petz recovery or classical shadow reconstruction.

\paragraph{Summary of contributions.}
(i) We formalized trajectory-level reverse diffusion as a control problem and showed its equivalence to quantum score matching, yielding a principled objective for learning the reverse flow. (ii) For ensemble averages, we introduced a practical \emph{local} Petz recovery when correlations are short-ranged, leveraging recoverability guarantees for Lindbladian dynamics and near-optimal Petz maps; this connects quantum channel inversion to reversing measurement-induced decoherence. 
(iii) We established a complementary \emph{classical} post-processing route via classical shadow tomography, exploiting that our Pauli-twirled measurement-and-prepare channel is diagonal in the Pauli basis; its inverse gives an unbiased estimator of $\bar{\rho}_0$ with rigorous concentration bounds for local observables. 
(iv) Conceptually, we clarified that Petz recovery implements a structured quantum analogue of time-reversed diffusion and reduces to classical backward diffusion in the large-spin/separable limit, linking recovery channels to Fokker-Planck reversals. This complements recent diffusion-style generative models for quantum data that operate in either state space or density-matrix space.

\paragraph{Broader implications.}
The measurement-driven formulation naturally couples data acquisition and control: the same weak-measurement records that drive forward diffusion also provide conditioning signals for reverse synthesis. Local Petz maps offer a hardware-friendly, finite-depth realization of ensemble reversal under finite Markov length, whereas classical shadows provide an efficient, device-agnostic alternative when the goal is inference rather than state preparation. Together, these tools place quantum diffusion models on firmer mathematical footing and connect them to established theories of recoverability and randomized tomography.

\paragraph{Limitations and outlook.}
Our trajectory-wise reversal presently relies on classical decoders (simulation- or learning-based) to infer sufficient state information for feedback, a key difference from classical denoising where the state is directly observable. Scaling such decoders, tightening end-to-end error bounds (from Hamiltonian learning error to ensemble Wasserstein/trace-distance guarantees), and extending beyond randomized Pauli measurements to other locally scrambled schemes (e.g., Clifford- or unitary-2-design--based protocols) are immediate next steps. It is also promising to explore recovery-aware measurement schedules and noise models that optimize sample complexity and circuit depth, and to interface these methods with structure-preserving diffusion for mixed states and with recent progress on designing open dynamics with prescribed steady states. 

Overall, this work provides a concise blueprint for quantum diffusion: use weak measurements to generate and learn from forward randomness; reverse trajectories via learned control when needed; and reverse ensembles via Petz recovery or classical shadows when sufficient statistics are available. We expect these ingredients to be directly useful in near-term platforms that already support weak, randomized measurements and local quantum channels.

\emph{Note Added}: Upon the completion of our work, we became aware of a complementary work by Hu et al.~\cite{hu2025local}, which also discussed local recovery mechanisms in diffusion processes. Their work also established a quantum-classical correspondence between Petz recovery maps and classical diffusion models, and uses conditional mutual information to determine the possibility of local denoising. While our work focuses on quantum state generation, their work addresses classical data generation through local denoising.

\begin{acknowledgments}
We acknowledge the helpful discussion with Quntao Zhuang, Shengqi Sang, Molei Tao, Lei Wang, Fangjun Hu, Xun Gao, Zihao Wang, Haimeng Zhao and Biao Lian. We thank Fangjun Hu and Xun Gao for sharing with us their unpublished work. JZ is supported by the Innovation Program for Quantum Science and Technology (No.2021ZD0301601). WH and YZY are supported by the NSF Grant No.~DMR-2238360. We also acknowledge the use of OpenAI's GPT\mbox{-}5 model for assistance in organizing the manuscript, refining the exposition, and aiding with \LaTeX{} typesetting; all interpretations, conclusions, and any remaining errors are solely the responsibility of the authors.
\end{acknowledgments}

\bibliography{QD_ref}

@string{nat = {Nature}}

@string{pra = {Phys. Rev. A}}

@string{prl = {Phys. Rev. Lett.}}

@article{Kwon2022R2104.03360,
	adsnote = {Provided by the SAO/NASA Astrophysics Data System},
	adsurl = {https://ui.adsabs.harvard.edu/abs/2022PhRvL.128b0403K},
	archiveprefix = {arXiv},
	author = {{Kwon}, Hyukjoon and {Mukherjee}, Rick and {Kim}, M.~S.},
	date-added = {2025-10-06 11:03:00 -0700},
	date-modified = {2025-10-06 11:03:00 -0700},
	doi = {10.1103/PhysRevLett.128.020403},
	eid = {020403},
	eprint = {2104.03360},
	journal = {\prl},
	keywords = {Quantum Physics},
	month = jan,
	number = {2},
	pages = {020403},
	primaryclass = {quant-ph},
	title = {{Reversing Lindblad Dynamics via Continuous Petz Recovery Map}},
	volume = {128},
	year = 2022,
	bdsk-url-1 = {https://doi.org/10.1103/PhysRevLett.128.020403}}

@article{hu2025local,
	adsnote = {Provided by the SAO/NASA Astrophysics Data System},
	adsurl = {https://ui.adsabs.harvard.edu/abs/2025arXiv250806614H},
	archiveprefix = {arXiv},
	author = {{Hu}, Fangjun and {Liu}, Guangkuo and {Zhang}, Yifan and {Gao}, Xun},
	date-added = {2025-08-22 10:43:38 +0800},
	date-modified = {2025-08-22 10:44:01 +0800},
	doi = {10.48550/arXiv.2508.06614},
	eid = {arXiv:2508.06614},
	eprint = {2508.06614},
	journal = {arXiv e-prints},
	keywords = {Machine Learning, Statistical Mechanics, Quantum Physics},
	month = aug,
	pages = {arXiv:2508.06614},
	primaryclass = {stat.ML},
	title = {{Local Diffusion Models and Phases of Data Distributions}},
	year = 2025,
	bdsk-url-1 = {https://doi.org/10.48550/arXiv.2508.06614}}

@article{Cui2025Q2508.12413,
	adsnote = {Provided by the SAO/NASA Astrophysics Data System},
	adsurl = {https://ui.adsabs.harvard.edu/abs/2025arXiv250812413C},
	archiveprefix = {arXiv},
	author = {{Cui}, Zidong and {Zhang}, Pan and {Tang}, Ying},
	date-added = {2025-08-21 17:08:00 +0800},
	date-modified = {2025-08-21 17:08:19 +0800},
	doi = {10.48550/arXiv.2508.12413},
	eid = {arXiv:2508.12413},
	eprint = {2508.12413},
	journal = {arXiv e-prints},
	keywords = {Quantum Diffusion},
	month = aug,
	pages = {arXiv:2508.12413},
	primaryclass = {quant-ph},
	title = {{Quantum Flow Matching}},
	year = 2025,
	bdsk-url-1 = {https://doi.org/10.48550/arXiv.2508.12413}}

@article{Choi2023P2103.03535,
	adsnote = {Provided by the SAO/NASA Astrophysics Data System},
	adsurl = {https://ui.adsabs.harvard.edu/abs/2023Natur.613..468C},
	archiveprefix = {arXiv},
	author = {{Choi}, Joonhee and {Shaw}, Adam L. and {Madjarov}, Ivaylo S. and {Xie}, Xin and {Finkelstein}, Ran and {Covey}, Jacob P. and {Cotler}, Jordan S. and {Mark}, Daniel K. and {Huang}, Hsin-Yuan and {Kale}, Anant and {Pichler}, Hannes and {Brand{\~a}o}, Fernando G.~S.~L. and {Choi}, Soonwon and {Endres}, Manuel},
	date-added = {2025-08-21 17:04:53 +0800},
	date-modified = {2025-08-21 17:04:53 +0800},
	doi = {10.1038/s41586-022-05442-1},
	eprint = {2103.03535},
	journal = {\nat},
	keywords = {Quantum Physics, Condensed Matter - Quantum Gases, Condensed Matter - Statistical Mechanics, Physics - Atomic Physics},
	month = jan,
	number = {7944},
	pages = {468-473},
	primaryclass = {quant-ph},
	title = {{Preparing random states and benchmarking with many-body quantum chaos}},
	volume = {613},
	year = 2023,
	bdsk-url-1 = {https://doi.org/10.1038/s41586-022-05442-1}}

@article{Cotler2023E2103.03536,
	adsnote = {Provided by the SAO/NASA Astrophysics Data System},
	adsurl = {https://ui.adsabs.harvard.edu/abs/2023PRXQ....4a0311C},
	archiveprefix = {arXiv},
	author = {{Cotler}, Jordan S. and {Mark}, Daniel K. and {Huang}, Hsin-Yuan and {Hern{\'a}ndez}, Felipe and {Choi}, Joonhee and {Shaw}, Adam L. and {Endres}, Manuel and {Choi}, Soonwon},
	date-added = {2025-08-21 17:03:12 +0800},
	date-modified = {2025-08-21 17:03:12 +0800},
	doi = {10.1103/PRXQuantum.4.010311},
	eid = {010311},
	eprint = {2103.03536},
	journal = {PRX Quantum},
	keywords = {Quantum Physics, Condensed Matter - Statistical Mechanics, Condensed Matter - Strongly Correlated Electrons, High Energy Physics - Theory, Physics - Atomic Physics},
	month = jan,
	number = {1},
	pages = {010311},
	primaryclass = {quant-ph},
	title = {{Emergent Quantum State Designs from Individual Many-Body Wave Functions}},
	volume = {4},
	year = 2023,
	bdsk-url-1 = {https://doi.org/10.1103/PRXQuantum.4.010311}}

@article{Ho2022E2109.07491,
	adsnote = {Provided by the SAO/NASA Astrophysics Data System},
	adsurl = {https://ui.adsabs.harvard.edu/abs/2022PhRvL.128f0601H},
	archiveprefix = {arXiv},
	author = {{Ho}, Wen Wei and {Choi}, Soonwon},
	date-added = {2025-08-21 17:01:30 +0800},
	date-modified = {2025-08-21 17:01:30 +0800},
	doi = {10.1103/PhysRevLett.128.060601},
	eid = {060601},
	eprint = {2109.07491},
	journal = {\prl},
	keywords = {Quantum Physics, Condensed Matter - Statistical Mechanics, Condensed Matter - Strongly Correlated Electrons},
	month = feb,
	number = {6},
	pages = {060601},
	primaryclass = {quant-ph},
	title = {{Exact Emergent Quantum State Designs from Quantum Chaotic Dynamics}},
	volume = {128},
	year = 2022,
	bdsk-url-1 = {https://doi.org/10.1103/PhysRevLett.128.060601}}

@article{Zhang2024H2411.03587,
	adsnote = {Provided by the SAO/NASA Astrophysics Data System},
	adsurl = {https://ui.adsabs.harvard.edu/abs/2024arXiv241103587Z},
	archiveprefix = {arXiv},
	author = {{Zhang}, Bingzhi and {Xu}, Peng and {Chen}, Xiaohui and {Zhuang}, Quntao},
	date-added = {2025-08-21 16:11:51 +0800},
	date-modified = {2025-08-21 16:11:51 +0800},
	doi = {10.48550/arXiv.2411.03587},
	eid = {arXiv:2411.03587},
	eprint = {2411.03587},
	journal = {arXiv e-prints},
	keywords = {Quantum Physics, Disordered Systems and Neural Networks, Statistical Mechanics, Chaotic Dynamics},
	month = nov,
	pages = {arXiv:2411.03587},
	primaryclass = {quant-ph},
	title = {{Holographic deep thermalization for secure and efficient quantum random state generation}},
	year = 2024,
	bdsk-url-1 = {https://doi.org/10.48550/arXiv.2411.03587}}

@article{Kwon_2019,
	author = {Kwon, Hyukjoon and Kim, M. S.},
	doi = {10.1103/physrevx.9.031029},
	issn = {2160-3308},
	journal = {Physical Review X},
	month = aug,
	number = {3},
	publisher = {American Physical Society (APS)},
	title = {Fluctuation Theorems for a Quantum Channel},
	url = {http://dx.doi.org/10.1103/PhysRevX.9.031029},
	volume = {9},
	year = {2019},
	bdsk-url-1 = {http://dx.doi.org/10.1103/PhysRevX.9.031029}}

@article{Ghirardi1990M,
	adsnote = {Provided by the SAO/NASA Astrophysics Data System},
	adsurl = {https://ui.adsabs.harvard.edu/abs/1990PhRvA..42...78G},
	author = {{Ghirardi}, Gian Carlo and {Pearle}, Philip and {Rimini}, Alberto},
	date-added = {2025-08-14 09:25:49 +0800},
	date-modified = {2025-08-14 09:25:50 +0800},
	doi = {10.1103/PhysRevA.42.78},
	journal = {\pra},
	keywords = {03.65.Bz, 02.50.+s},
	month = jul,
	number = {1},
	pages = {78-89},
	title = {{Markov processes in Hilbert space and continuous spontaneous localization of systems of identical particles}},
	volume = {42},
	year = 1990,
	bdsk-url-1 = {https://doi.org/10.1103/PhysRevA.42.78}}

@article{Adler2013O,
	adsnote = {Provided by the SAO/NASA Astrophysics Data System},
	adsurl = {https://ui.adsabs.harvard.edu/abs/2013JPhA...46x5304A},
	author = {{Adler}, Stephen L. and {Bassi}, Angelo and {Donadi}, Sandro},
	date-added = {2025-08-14 09:11:22 +0800},
	date-modified = {2025-08-14 09:11:24 +0800},
	doi = {10.1088/1751-8113/46/24/245304},
	eid = {245304},
	journal = {Journal of Physics A Mathematical General},
	month = jun,
	number = {24},
	pages = {245304},
	title = {{On spontaneous photon emission in collapse models}},
	volume = {46},
	year = 2013,
	bdsk-url-1 = {https://doi.org/10.1088/1751-8113/46/24/245304}}

@article{Huang2020P2002.08953,
	adsnote = {Provided by the SAO/NASA Astrophysics Data System},
	adsurl = {https://ui.adsabs.harvard.edu/abs/2020NatPh..16.1050H},
	archiveprefix = {arXiv},
	author = {{Huang}, Hsin-Yuan and {Kueng}, Richard and {Preskill}, John},
	date-added = {2025-08-11 16:53:12 +0800},
	date-modified = {2025-08-11 16:53:24 +0800},
	doi = {10.1038/s41567-020-0932-7},
	eprint = {2002.08953},
	journal = {Nature Physics},
	keywords = {classical shadow},
	month = oct,
	number = {10},
	pages = {1050-1057},
	primaryclass = {quant-ph},
	title = {{Predicting many properties of a quantum system from very few measurements}},
	volume = {16},
	year = 2020,
	bdsk-url-1 = {https://doi.org/10.1038/s41567-020-0932-7}}

@article{Giachetti2023E2306.12166,
	adsnote = {Provided by the SAO/NASA Astrophysics Data System},
	adsurl = {https://ui.adsabs.harvard.edu/abs/2023arXiv230612166G},
	archiveprefix = {arXiv},
	author = {{Giachetti}, Guido and {De Luca}, Andrea},
	date-added = {2025-08-11 11:26:51 +0800},
	date-modified = {2025-08-11 11:26:51 +0800},
	doi = {10.48550/arXiv.2306.12166},
	eid = {arXiv:2306.12166},
	eprint = {2306.12166},
	journal = {arXiv e-prints},
	keywords = {Condensed Matter - Statistical Mechanics, Condensed Matter - Disordered Systems and Neural Networks, Condensed Matter - Quantum Gases, Quantum Physics},
	month = jun,
	pages = {arXiv:2306.12166},
	primaryclass = {cond-mat.stat-mech},
	title = {{Elusive phase transition in the replica limit of monitored systems}},
	year = 2023,
	bdsk-url-1 = {https://doi.org/10.48550/arXiv.2306.12166}}

@article{Jacobs2006Aquant-ph/0611067,
	adsnote = {Provided by the SAO/NASA Astrophysics Data System},
	adsurl = {https://ui.adsabs.harvard.edu/abs/2006ConPh..47..279J},
	archiveprefix = {arXiv},
	author = {{Jacobs}, Kurt and {Steck}, Daniel A.},
	date-added = {2025-08-11 11:24:51 +0800},
	date-modified = {2025-08-11 11:24:51 +0800},
	doi = {10.1080/00107510601101934},
	eprint = {quant-ph/0611067},
	journal = {Contemporary Physics},
	keywords = {Quantum Physics},
	month = sep,
	number = {5},
	pages = {279-303},
	primaryclass = {quant-ph},
	title = {{A straightforward introduction to continuous quantum measurement}},
	volume = {47},
	year = 2006,
	bdsk-url-1 = {https://doi.org/10.1080/00107510601101934}}

@article{Magesan2011S1009.3639,
	adsnote = {Provided by the SAO/NASA Astrophysics Data System},
	adsurl = {https://ui.adsabs.harvard.edu/abs/2011PhRvL.106r0504M},
	archiveprefix = {arXiv},
	author = {{Magesan}, Easwar and {Gambetta}, J.~M. and {Emerson}, Joseph},
	date-added = {2025-08-11 11:04:41 +0800},
	date-modified = {2025-08-11 11:04:41 +0800},
	doi = {10.1103/PhysRevLett.106.180504},
	eid = {180504},
	eprint = {1009.3639},
	journal = {\prl},
	keywords = {03.67.Pp, 03.65.Wj, 03.67.Lx, Quantum error correction and other methods for protection against decoherence, State reconstruction quantum tomography, Quantum computation, Quantum Physics},
	month = may,
	number = {18},
	pages = {180504},
	primaryclass = {quant-ph},
	title = {{Scalable and Robust Randomized Benchmarking of Quantum Processes}},
	volume = {106},
	year = 2011,
	bdsk-url-1 = {https://doi.org/10.1103/PhysRevLett.106.180504}}

@article{Dankert2009Equant-ph/0606161,
	adsnote = {Provided by the SAO/NASA Astrophysics Data System},
	adsurl = {https://ui.adsabs.harvard.edu/abs/2009PhRvA..80a2304D},
	archiveprefix = {arXiv},
	author = {{Dankert}, Christoph and {Cleve}, Richard and {Emerson}, Joseph and {Livine}, Etera},
	date-added = {2025-08-10 16:57:19 +0800},
	date-modified = {2025-08-10 16:57:19 +0800},
	doi = {10.1103/PhysRevA.80.012304},
	eid = {012304},
	eprint = {quant-ph/0606161},
	journal = {\pra},
	keywords = {03.67.Lx, 03.65.Fd, Quantum computation, Algebraic methods, Quantum Physics},
	month = jul,
	number = {1},
	pages = {012304},
	primaryclass = {quant-ph},
	title = {{Exact and approximate unitary 2-designs and their application to fidelity estimation}},
	volume = {80},
	year = 2009,
	bdsk-url-1 = {https://doi.org/10.1103/PhysRevA.80.012304}}

@article{petz1986sufficient,
	author = {Petz, D{\'e}nes},
	date-added = {2025-08-10 13:23:51 +0800},
	date-modified = {2025-08-10 13:23:51 +0800},
	journal = {Communications in mathematical physics},
	number = {1},
	pages = {123--131},
	publisher = {Springer},
	title = {Sufficient subalgebras and the relative entropy of states of a von Neumann algebra},
	volume = {105},
	year = {1986}}

@article{Albert2018L1802.00010,
	adsnote = {Provided by the SAO/NASA Astrophysics Data System},
	adsurl = {https://ui.adsabs.harvard.edu/abs/2018arXiv180200010A},
	archiveprefix = {arXiv},
	author = {{Albert}, Victor V.},
	date-added = {2025-08-10 11:54:45 +0800},
	date-modified = {2025-08-10 11:54:45 +0800},
	doi = {10.48550/arXiv.1802.00010},
	eid = {arXiv:1802.00010},
	eprint = {1802.00010},
	journal = {arXiv e-prints},
	keywords = {Quantum Physics, Condensed Matter - Mesoscale and Nanoscale Physics, Condensed Matter - Statistical Mechanics, Mathematical Physics},
	month = jan,
	pages = {arXiv:1802.00010},
	primaryclass = {quant-ph},
	title = {{Lindbladians with multiple steady states: theory and applications}},
	year = 2018,
	bdsk-url-1 = {https://doi.org/10.48550/arXiv.1802.00010}}

@article{Albert2014S1310.1523,
	adsnote = {Provided by the SAO/NASA Astrophysics Data System},
	adsurl = {https://ui.adsabs.harvard.edu/abs/2014PhRvA..89b2118A},
	archiveprefix = {arXiv},
	author = {{Albert}, Victor V. and {Jiang}, Liang},
	date-added = {2025-08-10 11:54:17 +0800},
	date-modified = {2025-08-10 11:54:17 +0800},
	doi = {10.1103/PhysRevA.89.022118},
	eid = {022118},
	eprint = {1310.1523},
	journal = {\pra},
	keywords = {03.65.Fd, 03.65.Yz, 03.67.Lx, Algebraic methods, Decoherence, open systems, quantum statistical methods, Quantum computation, Quantum Physics, Condensed Matter - Mesoscale and Nanoscale Physics},
	month = feb,
	number = {2},
	pages = {022118},
	primaryclass = {quant-ph},
	title = {{Symmetries and conserved quantities in Lindblad master equations}},
	volume = {89},
	year = 2014,
	bdsk-url-1 = {https://doi.org/10.1103/PhysRevA.89.022118}}

@article{Buca2012A1203.0943,
	adsnote = {Provided by the SAO/NASA Astrophysics Data System},
	adsurl = {https://ui.adsabs.harvard.edu/abs/2012NJPh...14g3007B},
	archiveprefix = {arXiv},
	author = {{Bu{\v{c}}a}, Berislav and {Prosen}, Toma{\v{z}}},
	date-added = {2025-08-10 11:53:38 +0800},
	date-modified = {2025-08-10 11:53:38 +0800},
	doi = {10.1088/1367-2630/14/7/073007},
	eid = {073007},
	eprint = {1203.0943},
	journal = {New Journal of Physics},
	keywords = {Quantum Physics, Condensed Matter - Statistical Mechanics},
	month = jul,
	number = {7},
	pages = {073007},
	primaryclass = {quant-ph},
	title = {{A note on symmetry reductions of the Lindblad equation: transport in constrained open spin chains}},
	volume = {14},
	year = 2012,
	bdsk-url-1 = {https://doi.org/10.1088/1367-2630/14/7/073007}}

@misc{akhtar2023measurementinducedcriticalitytomographicallyoptimal,
	archiveprefix = {arXiv},
	author = {Ahmed A. Akhtar and Hong-Ye Hu and Yi-Zhuang You},
	eprint = {2308.01653},
	primaryclass = {quant-ph},
	title = {Measurement-Induced Criticality is Tomographically Optimal},
	url = {https://arxiv.org/abs/2308.01653},
	year = {2023},
	bdsk-url-1 = {https://arxiv.org/abs/2308.01653}}

@article{kuo2020medul,
	author = {{Kuo}, Wei-Ting and {Akhtar}, AA and {Arovas}, Daniel P and and {You}, Yi-Zhuang},
	date-added = {2025-08-10 11:47:41 +0800},
	date-modified = {2025-08-10 11:47:41 +0800},
	journal = {Physical Review B},
	keywords = {open quantum system},
	number = {22},
	pages = {224202},
	title = {{Markovian entanglement dynamics under locally scrambled quantum evolution}},
	volume = {101},
	year = {2020}}

@article{hu2023cstwl,
	author = {{Hu}, Hong-Ye and {Choi}, Soonwon and and {You}, Yi-Zhuang},
	date-added = {2025-08-10 11:47:41 +0800},
	date-modified = {2025-08-10 11:47:41 +0800},
	journal = {Physical Review Research},
	keywords = {classical shadow},
	number = {2},
	pages = {023027},
	title = {{Classical shadow tomography with locally scrambled quantum dynamics}},
	volume = {5},
	year = {2023}}

@article{Yang2022D2209.00796,
	adsnote = {Provided by the SAO/NASA Astrophysics Data System},
	adsurl = {https://ui.adsabs.harvard.edu/abs/2022arXiv220900796Y},
	archiveprefix = {arXiv},
	author = {{Yang}, Ling and {Zhang}, Zhilong and {Song}, Yang and {Hong}, Shenda and {Xu}, Runsheng and {Zhao}, Yue and {Zhang}, Wentao and {Cui}, Bin and {Yang}, Ming-Hsuan},
	date-added = {2025-08-09 12:25:08 +0800},
	date-modified = {2025-08-09 12:25:08 +0800},
	doi = {10.48550/arXiv.2209.00796},
	eid = {arXiv:2209.00796},
	eprint = {2209.00796},
	journal = {arXiv e-prints},
	keywords = {Computer Science - Machine Learning, Computer Science - Artificial Intelligence, Computer Science - Computer Vision and Pattern Recognition},
	month = sep,
	pages = {arXiv:2209.00796},
	primaryclass = {cs.LG},
	title = {{Diffusion Models: A Comprehensive Survey of Methods and Applications}},
	year = 2022,
	bdsk-url-1 = {https://doi.org/10.48550/arXiv.2209.00796}}

@article{Zhang2026P,
	abstract = {Quantum generative models have attracted growing interest for their potential to transform generative learning through the principles of quantum computing. The recently proposed Quantum Denoising Diffusion Probabilistic Models (QuDDPM) represent a significant advancement by integrating classical diffusion mechanisms with quantum computation. However, QuDDPM suffers from a key scalability bottleneck: its parameter count grows linearly with the number of denoising steps, as each step requires independent optimization. To overcome this limitation, we propose a Temporal-aware Quantum Denoising Diffusion Probabilistic Model (TQuDDPM), a parameter-sharing framework that incorporates temporal encoding into the denoising process. Our numerical simulations show that TQuDDPM significantly reduces parameter requirements by up to 94% and training time by up to 90%, all while preserving or even improving generative performance. This work introduces a novel approach to timestep representation in quantum generative learning and demonstrates that TQuDDPM achieves substantial computational efficiency alongside high-fidelity generation.},
	author = {Xuefen Zhang and Chuangtao Chen},
	date-added = {2025-08-09 12:17:43 +0800},
	date-modified = {2025-08-09 12:17:53 +0800},
	doi = {https://doi.org/10.1016/j.future.2025.107981},
	issn = {0167-739X},
	journal = {Future Generation Computer Systems},
	keywords = {Quantum Diffusion},
	pages = {107981},
	title = {Parameter-efficient Quantum Denoising Diffusion Probabilistic Models with temporal encoding},
	url = {https://www.sciencedirect.com/science/article/pii/S0167739X25002766},
	volume = {174},
	year = {2026},
	bdsk-url-1 = {https://www.sciencedirect.com/science/article/pii/S0167739X25002766},
	bdsk-url-2 = {https://doi.org/10.1016/j.future.2025.107981}}

@article{lima2025review,
	author = {Lima, Glauco and Filatovas, Ernestas and Marcozzi, Marco and Paulavi{\v{c}}ius, Remigijus},
	date-added = {2025-08-09 12:16:50 +0800},
	date-modified = {2025-08-09 12:16:57 +0800},
	journal = {Vilnius University Open Series},
	keywords = {Quantum Diffusion},
	pages = {109--120},
	title = {A Review of Quantum-Based Diffusion Models in Generative AI},
	year = {2025}}

@inproceedings{tang2025quadim,
	author = {Yehui Tang and Mabiao Long and Junchi Yan},
	booktitle = {The Thirteenth International Conference on Learning Representations},
	date-added = {2025-08-09 11:59:51 +0800},
	date-modified = {2025-08-09 12:00:03 +0800},
	keywords = {Quantum Diffusion},
	title = {QuaDiM: A Conditional Diffusion Model For Quantum State Property Estimation},
	url = {https://openreview.net/forum?id=P7f55HQtV8},
	year = {2025},
	bdsk-url-1 = {https://openreview.net/forum?id=P7f55HQtV8}}

@article{Cacioppo2023Q2311.15444,
	adsnote = {Provided by the SAO/NASA Astrophysics Data System},
	adsurl = {https://ui.adsabs.harvard.edu/abs/2023arXiv231115444C},
	archiveprefix = {arXiv},
	author = {{Cacioppo}, Andrea and {Colantonio}, Lorenzo and {Bordoni}, Simone and {Giagu}, Stefano},
	date-added = {2025-08-09 11:57:06 +0800},
	date-modified = {2025-08-09 11:57:13 +0800},
	doi = {10.48550/arXiv.2311.15444},
	eid = {arXiv:2311.15444},
	eprint = {2311.15444},
	journal = {arXiv e-prints},
	keywords = {Quantum Diffusion},
	month = nov,
	pages = {arXiv:2311.15444},
	primaryclass = {quant-ph},
	title = {{Quantum Diffusion Models}},
	year = 2023,
	bdsk-url-1 = {https://doi.org/10.48550/arXiv.2311.15444}}

@article{Liu2025B2505.18621,
	adsnote = {Provided by the SAO/NASA Astrophysics Data System},
	adsurl = {https://ui.adsabs.harvard.edu/abs/2025arXiv250518621L},
	archiveprefix = {arXiv},
	author = {{Liu}, Jiazhen and {Li}, Ruikun and {Wang}, Huandong and {Yu}, Zihan and {Liu}, Chang and {Ding}, Jingtao and {Li}, Yong},
	date-added = {2025-08-09 11:54:41 +0800},
	date-modified = {2025-08-09 11:54:49 +0800},
	doi = {10.48550/arXiv.2505.18621},
	eid = {arXiv:2505.18621},
	eprint = {2505.18621},
	journal = {arXiv e-prints},
	keywords = {Quantum Diffusion},
	month = may,
	pages = {arXiv:2505.18621},
	primaryclass = {cs.CE},
	title = {{Beyond Equilibrium: Non-Equilibrium Foundations Should Underpin Generative Processes in Complex Dynamical Systems}},
	year = 2025,
	bdsk-url-1 = {https://doi.org/10.48550/arXiv.2505.18621}}

@article{Zhu2024Q2404.06336,
	adsnote = {Provided by the SAO/NASA Astrophysics Data System},
	adsurl = {https://ui.adsabs.harvard.edu/abs/2024arXiv240406336Z},
	archiveprefix = {arXiv},
	author = {{Zhu}, Yuchen and {Chen}, Tianrong and {Theodorou}, Evangelos A. and {Chen}, Xie and {Tao}, Molei},
	date-added = {2025-08-09 11:52:40 +0800},
	date-modified = {2025-08-09 11:52:45 +0800},
	doi = {10.48550/arXiv.2404.06336},
	eid = {arXiv:2404.06336},
	eprint = {2404.06336},
	journal = {arXiv e-prints},
	keywords = {Quantum Diffusion},
	month = apr,
	pages = {arXiv:2404.06336},
	primaryclass = {quant-ph},
	title = {{Quantum State Generation with Structure-Preserving Diffusion Model}},
	year = 2024,
	bdsk-url-1 = {https://doi.org/10.48550/arXiv.2404.06336}}

@article{Kolle2024Q2401.07049,
	adsnote = {Provided by the SAO/NASA Astrophysics Data System},
	adsurl = {https://ui.adsabs.harvard.edu/abs/2024arXiv240107049K},
	archiveprefix = {arXiv},
	author = {{K{\"o}lle}, Michael and {Stenzel}, Gerhard and {Stein}, Jonas and {Zielinski}, Sebastian and {Ommer}, Bj{\"o}rn and {Linnhoff-Popien}, Claudia},
	date-added = {2025-08-09 11:51:49 +0800},
	date-modified = {2025-08-09 11:51:55 +0800},
	doi = {10.48550/arXiv.2401.07049},
	eid = {arXiv:2401.07049},
	eprint = {2401.07049},
	journal = {arXiv e-prints},
	keywords = {Quantum Diffusion},
	month = jan,
	pages = {arXiv:2401.07049},
	primaryclass = {quant-ph},
	title = {{Quantum Denoising Diffusion Models}},
	year = 2024,
	bdsk-url-1 = {https://doi.org/10.48550/arXiv.2401.07049}}

@article{Kwun2025M2411.17608,
	adsnote = {Provided by the SAO/NASA Astrophysics Data System},
	adsurl = {https://ui.adsabs.harvard.edu/abs/2025PhRvA.111c2610K},
	archiveprefix = {arXiv},
	author = {{Kwun}, Gino and {Zhang}, Bingzhi and {Zhuang}, Quntao},
	date-added = {2025-08-09 11:51:16 +0800},
	date-modified = {2025-08-09 11:51:22 +0800},
	doi = {10.1103/PhysRevA.111.032610},
	eid = {032610},
	eprint = {2411.17608},
	journal = {\pra},
	keywords = {Quantum Diffusion},
	month = mar,
	number = {3},
	pages = {032610},
	primaryclass = {quant-ph},
	title = {{Mixed-state quantum denoising diffusion probabilistic model}},
	volume = {111},
	year = 2025,
	bdsk-url-1 = {https://doi.org/10.1103/PhysRevA.111.032610}}

@article{Huang2025C2506.19270,
	adsnote = {Provided by the SAO/NASA Astrophysics Data System},
	adsurl = {https://ui.adsabs.harvard.edu/abs/2025arXiv250619270H},
	archiveprefix = {arXiv},
	author = {{Huang}, Haitao and {Chen}, Chuangtao and {Zhao}, Qinglin},
	date-added = {2025-08-09 11:50:42 +0800},
	date-modified = {2025-08-09 11:50:48 +0800},
	doi = {10.48550/arXiv.2506.19270},
	eid = {arXiv:2506.19270},
	eprint = {2506.19270},
	journal = {arXiv e-prints},
	keywords = {Quantum Diffusion},
	month = jun,
	pages = {arXiv:2506.19270},
	primaryclass = {quant-ph},
	title = {{Continuous-variable Quantum Diffusion Model for State Generation and Restoration}},
	year = 2025,
	bdsk-url-1 = {https://doi.org/10.48550/arXiv.2506.19270}}

@article{Zhang2024G2310.05866,
	adsnote = {Provided by the SAO/NASA Astrophysics Data System},
	adsurl = {https://ui.adsabs.harvard.edu/abs/2024PhRvL.132j0602Z},
	archiveprefix = {arXiv},
	author = {{Zhang}, Bingzhi and {Xu}, Peng and {Chen}, Xiaohui and {Zhuang}, Quntao},
	date-added = {2025-08-09 11:49:47 +0800},
	date-modified = {2025-08-09 11:49:52 +0800},
	doi = {10.1103/PhysRevLett.132.100602},
	eid = {100602},
	eprint = {2310.05866},
	journal = {\prl},
	keywords = {Quantum Diffusion},
	month = mar,
	number = {10},
	pages = {100602},
	primaryclass = {quant-ph},
	title = {{Generative Quantum Machine Learning via Denoising Diffusion Probabilistic Models}},
	volume = {132},
	year = 2024,
	bdsk-url-1 = {https://doi.org/10.1103/PhysRevLett.132.100602}}

@article{Parigi2023Q2308.12013,
	adsnote = {Provided by the SAO/NASA Astrophysics Data System},
	adsurl = {https://ui.adsabs.harvard.edu/abs/2023arXiv230812013P},
	archiveprefix = {arXiv},
	author = {{Parigi}, Marco and {Martina}, Stefano and {Caruso}, Filippo},
	date-added = {2025-08-09 11:49:02 +0800},
	date-modified = {2025-08-09 11:49:11 +0800},
	doi = {10.48550/arXiv.2308.12013},
	eid = {arXiv:2308.12013},
	eprint = {2308.12013},
	journal = {arXiv e-prints},
	keywords = {Quantum Diffusion},
	month = aug,
	pages = {arXiv:2308.12013},
	primaryclass = {quant-ph},
	title = {{Quantum-Noise-Driven Generative Diffusion Models}},
	year = 2023,
	bdsk-url-1 = {https://doi.org/10.48550/arXiv.2308.12013}}

@article{Chen2024Q2401.07039,
	adsnote = {Provided by the SAO/NASA Astrophysics Data System},
	adsurl = {https://ui.adsabs.harvard.edu/abs/2024arXiv240107039C},
	archiveprefix = {arXiv},
	author = {{Chen}, Chuangtao and {Zhao}, Qinglin and {Zhou}, MengChu and {He}, Zhimin and {Sun}, Zhili and {Situ}, Haozhen},
	date-added = {2025-08-09 11:48:02 +0800},
	date-modified = {2025-08-09 11:48:26 +0800},
	doi = {10.48550/arXiv.2401.07039},
	eid = {arXiv:2401.07039},
	eprint = {2401.07039},
	journal = {arXiv e-prints},
	keywords = {Quantum Diffusion},
	month = jan,
	pages = {arXiv:2401.07039},
	primaryclass = {quant-ph},
	title = {{Quantum Generative Diffusion Model: A Fully Quantum-Mechanical Model for Generating Quantum State Ensemble}},
	year = 2024,
	bdsk-url-1 = {https://doi.org/10.48550/arXiv.2401.07039}}

@article{dhariwal2021diffusion,
	adsnote = {Provided by the SAO/NASA Astrophysics Data System},
	adsurl = {https://ui.adsabs.harvard.edu/abs/2021arXiv210505233D},
	archiveprefix = {arXiv},
	author = {{Dhariwal}, Prafulla and {Nichol}, Alex},
	date-added = {2025-08-09 11:46:24 +0800},
	date-modified = {2025-08-09 11:46:34 +0800},
	doi = {10.48550/arXiv.2105.05233},
	eid = {arXiv:2105.05233},
	eprint = {2105.05233},
	journal = {arXiv e-prints},
	keywords = {Computer Science - Machine Learning, Computer Science - Artificial Intelligence, Computer Science - Computer Vision and Pattern Recognition, Statistics - Machine Learning},
	month = may,
	pages = {arXiv:2105.05233},
	primaryclass = {cs.LG},
	title = {{Diffusion Models Beat GANs on Image Synthesis}},
	year = 2021,
	bdsk-url-1 = {https://doi.org/10.48550/arXiv.2105.05233}}

@article{song2020score,
	adsnote = {Provided by the SAO/NASA Astrophysics Data System},
	adsurl = {https://ui.adsabs.harvard.edu/abs/2020arXiv201113456S},
	archiveprefix = {arXiv},
	author = {{Song}, Yang and {Sohl-Dickstein}, Jascha and {Kingma}, Diederik P. and {Kumar}, Abhishek and {Ermon}, Stefano and {Poole}, Ben},
	date-added = {2025-08-09 11:45:50 +0800},
	date-modified = {2025-08-09 11:46:00 +0800},
	doi = {10.48550/arXiv.2011.13456},
	eid = {arXiv:2011.13456},
	eprint = {2011.13456},
	journal = {arXiv e-prints},
	keywords = {Computer Science - Machine Learning, Statistics - Machine Learning},
	month = nov,
	pages = {arXiv:2011.13456},
	primaryclass = {cs.LG},
	title = {{Score-Based Generative Modeling through Stochastic Differential Equations}},
	year = 2020,
	bdsk-url-1 = {https://doi.org/10.48550/arXiv.2011.13456}}

@article{ho2020denoising,
	adsnote = {Provided by the SAO/NASA Astrophysics Data System},
	adsurl = {https://ui.adsabs.harvard.edu/abs/2020arXiv200611239H},
	archiveprefix = {arXiv},
	author = {{Ho}, Jonathan and {Jain}, Ajay and {Abbeel}, Pieter},
	date-added = {2025-08-09 11:45:08 +0800},
	date-modified = {2025-08-09 11:45:29 +0800},
	doi = {10.48550/arXiv.2006.11239},
	eid = {arXiv:2006.11239},
	eprint = {2006.11239},
	journal = {arXiv e-prints},
	keywords = {Computer Science - Machine Learning, Statistics - Machine Learning},
	month = jun,
	pages = {arXiv:2006.11239},
	primaryclass = {cs.LG},
	title = {{Denoising Diffusion Probabilistic Models}},
	year = 2020,
	bdsk-url-1 = {https://doi.org/10.48550/arXiv.2006.11239}}

@inproceedings{lu2022maximum,
	author = {Lu, Cheng and Zheng, Kaiwen and Bao, Fan and Chen, Jianfei and Li, Chongxuan and Zhu, Jun},
	archiveprefix = {arXiv},
	booktitle = {Proceedings of the 39th International Conference on Machine Learning},
	eprint = {2206.08265},
	pages = {14429--14460},
	primaryclass = {cs.LG},
	title = {{Maximum Likelihood Training for Score-based Diffusion ODEs by High Order Denoising Score Matching}},
	url = {https://proceedings.mlr.press/v162/lu22f.html},
	volume = {162},
	year = {2022},
	bdsk-url-1 = {https://proceedings.mlr.press/v162/lu22f.html}}

@inproceedings{tong2024simulationfree,
	author = {Tong, Alexander Y. and Malkin, Nikolay and Fatras, Kilian and Atanackovic, Lazar and Zhang, Yanlei and Huguet, Guillaume and Wolf, Guy and Bengio, Yoshua},
	archiveprefix = {arXiv},
	booktitle = {Proceedings of The 27th International Conference on Artificial Intelligence and Statistics},
	eprint = {2307.03672},
	pages = {1279--1287},
	primaryclass = {stat.ML},
	title = {{Simulation-Free {S}chr{\"o}dinger Bridges via Score and Flow Matching}},
	url = {https://proceedings.mlr.press/v238/tong24a.html},
	volume = {238},
	year = {2024},
	bdsk-url-1 = {https://proceedings.mlr.press/v238/tong24a.html}}

@misc{sang2024stabilitymixedstatequantumphases,
	archiveprefix = {arXiv},
	author = {Shengqi Sang and Timothy H. Hsieh},
	eprint = {2404.07251},
	primaryclass = {quant-ph},
	title = {Stability of mixed-state quantum phases via finite Markov length},
	url = {https://arxiv.org/abs/2404.07251},
	year = {2024},
	bdsk-url-1 = {https://arxiv.org/abs/2404.07251}}

@article{Jacobs_2006,
	author = {Jacobs, Kurt and Steck, Daniel A.},
	doi = {10.1080/00107510601101934},
	issn = {1366-5812},
	journal = {Contemporary Physics},
	month = sep,
	number = {5},
	pages = {279--303},
	publisher = {Informa UK Limited},
	title = {A straightforward introduction to continuous quantum measurement},
	url = {http://dx.doi.org/10.1080/00107510601101934},
	volume = {47},
	year = {2006},
	bdsk-url-1 = {http://dx.doi.org/10.1080/00107510601101934}}

@misc{suppl,
	note = {See Supplemental Material for details.}}

\onecolumngrid
\newpage
\appendix

\begin{center}
\textbf{\large Supplemental Material}
\end{center}
\vspace{0.5em}

\section{Derivations for the Score Matching of the General Diffusion Dynamics}

This appendix provides a detailed, self-contained derivation of the reverse-time dynamics for a general diffusion process and its application to the stochastic evolution of quantum states under weak measurement. We begin by outlining the general theory of stochastic differential equations (SDEs) and score-based modeling. We then connect this mathematical framework to the physical process of weak measurement. Finally, we establish a formal equivalence between learning the reverse diffusion process for an ensemble of pure states and learning an effective time-reversed unitary evolution.

\subsection{General Diffusion Dynamics}\label{sec:general-diffusion}

\subsubsection{The generic SDE, Fokker--Planck PDE, and ODE Descriptions}

A generalized diffusion process for a state vector $\boldsymbol{z}(t)$ in an $N$ dimensional space can be described by a stochastic differential equation (SDE) with a $M$-dimensional and state-dependent noise. The generic SDE takes the form (in which all quantities are real valued):

\begin{equation}\label{seq:general-sde-Mdim}
d\boldsymbol{z}_t = \boldsymbol{f}(\boldsymbol{z}_t, t) \mathrm{d}t + \boldsymbol{G}(\boldsymbol{z}_t, t) d\boldsymbol{w}_t\ ,
\end{equation}

where:

\begin{itemize}
  \item $\boldsymbol{z}_t$ is an $N$-dimensional \textbf{state vector} representing the system's state at time $t$.
  \item $\boldsymbol{f}(\boldsymbol{z}_t, t)$ is an $N$-dimensional \textbf{drift field} (drift force), describing the deterministic part of the evolution.
  \item $\boldsymbol{G}(\boldsymbol{z}_t, t)$ is an $N \times M$ matrix representing the \textbf{noise amplitude}, which couples the noise sources to the state vector.
  \item $d\boldsymbol{w}_t=(dw_{t,1},\cdots,dw_{t,M})^T$ is an $M$-dimensional vector representing a Wiener process \textbf{noise}, satisfying the a probability distribution (usually Gaussian) with the first and second moments ($\mathbb{E}$ stands for mean value): 
  $$\mathbb{E}[\mathrm{d}w_{t,i}] = 0\ ,\qquad \mathbb{E}[\mathrm{d}w_{t,i} \mathrm{d}w_{t,j}] = \delta_{ij} \mathrm{d}t\ .$$
\end{itemize}

The time evolution of the probability distribution $p(\boldsymbol{z}, t)$ of the state vector $\boldsymbol{z}$ is known to be governed by the corresponding Fokker--Planck partial differential equation (PDE):

\begin{equation}\label{seq:forward-PDE}
\partial_t p(\boldsymbol{z}, t) = -\nabla\cdot \left[\boldsymbol{f}(\boldsymbol{z}, t) p(\boldsymbol{z}, t)\right]
+ \frac{1}{2} \partial_i \partial_j \left(\Sigma_{ij}(\boldsymbol{z}, t) p(\boldsymbol{z}, t)\right)\ ,
\end{equation}
where Einstein summation convention over repeated indices are assumed, and we have introduced the $N\times N$ noise covariance matrix
\begin{equation}
 \boldsymbol{\Sigma}(\boldsymbol{z}, t) = \boldsymbol{G}(\boldsymbol{z}, t) \boldsymbol{G}^T(\boldsymbol{z}, t)\ .
\end{equation}

Using the probability distribution $p(\boldsymbol{z}, t)$, the SDE of state vector $\boldsymbol{z}_t$ is statistically equivalent to a deterministic ordinary differential equation (ODE), known as the probability flow ODE:
\begin{equation}
d\boldsymbol{z}_t = \boldsymbol{v}(\boldsymbol{z}_t, t) \mathrm{d}t\ ,
\end{equation}
with the velocity field $\boldsymbol{v} =(v_1,\cdots,v_N)^T$ given by:
\begin{equation}
v_i(\boldsymbol{z}, t) = f_i(\boldsymbol{z}, t) 
- \frac{1}{2} \Big[\partial_j \Sigma_{ij}(\boldsymbol{z}, t)
+ \Sigma_{ij}(\boldsymbol{z}, t) \partial_j \log p(\boldsymbol{z}, t)\Big]\ ,
\end{equation}
or more concisely in vector form,
\begin{equation}
\boldsymbol{v}(\boldsymbol{z}, t)=\boldsymbol{f}(\boldsymbol{z}, t)-\frac{1}{2}\Big[\nabla\cdot\boldsymbol{\Sigma}(\boldsymbol{z}, t)+ \boldsymbol{\Sigma}(\boldsymbol{z}, t)\cdot \nabla \log p (\boldsymbol{z}, t)\Big]\ ,
\end{equation}
where $(\nabla \cdot \boldsymbol{\Sigma})_i = \partial_j \Sigma_{ij}$.

\subsubsection{Reverse Diffusion and Score Matching}

A key result from stochastic calculus is that the forward diffusion process described by Eq.~(A1) has a corresponding reverse-time process. The reverse SDE, which allows for generating samples from the distribution $p(\boldsymbol{z}, 0)$ by starting from a simple prior at time $T$, at time $\widetilde{t}=T-t$ is given by:

\begin{equation}\label{seq:reverse-SDE-M-dim}
d\boldsymbol{z}_{\widetilde{t}} = \left[ -\boldsymbol{f}(\boldsymbol{z}_{\widetilde{t}},{\widetilde{t}}) + \nabla\cdot\boldsymbol{\Sigma}(\boldsymbol{z}_{\widetilde{t}}, {\widetilde{t}})+ \boldsymbol{\Sigma}(\boldsymbol{z}_{\widetilde{t}}, {\widetilde{t}})\cdot \nabla \log p (\boldsymbol{z}_{\widetilde{t}}, {\widetilde{t}}) \right] \mathrm{d}{\widetilde{t}} + \boldsymbol{G}(\boldsymbol{z}_{\widetilde{t}}) d\boldsymbol{w}_{\widetilde{t}}\ .
\end{equation}
where $d\boldsymbol{w}_{\widetilde{t}}$ is a standard Wiener process in the reverse time direction.

The generative power of this framework hinges on knowledge of the score function incorporating the information of $\nabla \log p(\boldsymbol{z}, t)$. In score-based modeling, we define a learnable model $s_\theta(\boldsymbol{z}, t)$ with variational parameters $\theta$ to approximate the score-dependent part of the reverse drift:
\begin{equation}
s_\theta(\boldsymbol{z}, t) \approx \nabla\cdot\boldsymbol{\Sigma}(\boldsymbol{z}, t)+ \boldsymbol{\Sigma}(\boldsymbol{z}, t)\cdot \nabla \log p (\boldsymbol{z}_t, t)\ ,
\end{equation}
by which one can implement the reverse SDE in \cref{seq:reverse-SDE-M-dim} by the score function $s_\theta(\boldsymbol{z}, t)$. The model is trained by minimizing the score-matching loss, which is the expected squared deviation between the model and the true score term:

\begin{equation}\label{seq:loss-M-dim}
\mathcal{L}_\theta = \mathbb{E}_{\boldsymbol{z}_t \sim p(\boldsymbol{z}, t)} \left[ \left\| s_\theta(\boldsymbol{z}_t, t) - (\nabla \cdot \boldsymbol{\Sigma})(\boldsymbol{z}_t) - \boldsymbol{\Sigma}(\boldsymbol{z}_t) \nabla \log p(\boldsymbol{z}, t) \right\|^2 \right]
\end{equation}
Directly computing this loss is often intractable. A practical and equivalent alternative is denoising score matching. Given a forward step $\boldsymbol{z}_{t + \mathrm{d}t} = \boldsymbol{z}_t + \boldsymbol{f}(\boldsymbol{z}_t) \mathrm{d}t + \boldsymbol{g}(\boldsymbol{z}_t) d\boldsymbol{w}_t$, it can be shown that the above loss function in \cref{seq:loss-M-dim} is equivalent to the following denoising loss function up to a constant:
\begin{equation}\label{seq:loss-denoising-M-dim}
\mathcal{L}_\theta = \frac{1}{2} \mathbb{E}_{\boldsymbol{z}_t, \boldsymbol{z}_{t + \mathrm{d}t}} \left[ \left\| s_\theta(\boldsymbol{z}_{t + \mathrm{d}t}-\boldsymbol{f}(\boldsymbol{z}_t) \mathrm{d}t, t) + \frac{\boldsymbol{z}_{t + \mathrm{d}t} - \boldsymbol{z}_t - \boldsymbol{f} \mathrm{d}t}{\mathrm{d}t} \right\|^2 \right]\ .
\end{equation}
This objective trains the model to predict the noise term that corrupted the state, thereby implicitly learning the score without needing to compute it directly.

Lastly, starting from a backward probability distribution $q(\boldsymbol{z},{\widetilde{t}})$, the backward diffusion PDE driven by the forward probability distribution $p(\boldsymbol{z},{\widetilde{t}})$ takes the form
\begin{equation}\label{seq:reverse-PDE}
\partial_{\widetilde{t}} q(\boldsymbol{z}, \widetilde{t}) = -\nabla\cdot \left[\left(-\boldsymbol{f}(\boldsymbol{z}, \widetilde{t})+\nabla\cdot\boldsymbol{\Sigma}(\boldsymbol{z}, \widetilde{t})+ \boldsymbol{\Sigma}(\boldsymbol{z}, \widetilde{t})\cdot \nabla \log p (\boldsymbol{z}, \widetilde{t})\right) q(\boldsymbol{z}, \widetilde{t})\right]
+ \frac{1}{2} \partial_i \partial_j \left(\Sigma_{ij}(\boldsymbol{z}, \widetilde{t}) q(\boldsymbol{z}, \widetilde{t})\right)\ ,
\end{equation}
where $\nabla\cdot\boldsymbol{\Sigma}(\boldsymbol{z}, \widetilde{t})+ \boldsymbol{\Sigma}(\boldsymbol{z}, \widetilde{t})\cdot \nabla \log p (\boldsymbol{z}, \widetilde{t})$ plays the role of a drifting force towards the distribution $p(\boldsymbol{z},{\widetilde{t}})$.



\subsection{The nonlinear SDE from the Weak Measurement Protocol}

\subsubsection{The Weak Measurement Protocol}\label{sec-weak-measurement-protocol}
We consider a quantum system initially prepared in the state $\rho_0$. Ancilla-assisted weak measurement of an observable $O_t$ is implemented by coupling the system to an ancilla qubit, prepared in $|0\rangle_A$, via the interaction Hamiltonian 
\begin{equation}
H = \lambda O_{t} \sigma_A^y\ .
\end{equation}
After evolving for a finite time $\delta t$, the ancilla is projectively measured in the $\sigma_A^x$ basis, yielding an outcome $o_t $. In the continuous limit, where $\delta t \to 0$ and $\lambda \to \infty$ such that the measurement rate $\gamma = \lambda^2 \delta t$ is constant, this interaction is described by the Kraus operator:
\begin{equation}
\label{eq:K_delt}
K_{\delta t}(O_t,\delta o_t)
\;=\;
\frac{1}{\sqrt{N_{O_t}}}\,
\exp\!\Big(-\gamma\,\delta t +2\gamma\,O_t\,\delta o_t\Big),
\end{equation}
where $N_{O_t}$ is a normalization factor such that $\sum_{\delta o_t} K^{\dagger}_{\delta t}(O_t,\delta o_t)K_{\delta t}(O_t,\delta o_t)=\id.$

\subsubsection{Forward Measurement-Based Diffusion}

Assuming the observable $O_t=O_t^\dagger$ is Hermitian, the corresponding weak measurement of observable $O_t$ in time $\delta t$ with measurement outcome $o_t$ will update the system's state via a Kraus operator $K_{\delta t}(O_t,\delta o_t)$ into:
\begin{equation}\label{eq-state-evolution}
\rho_{t+dt} = \frac{K_{\delta t}(O_t,\delta o_t)\, \rho_t\, K_{\delta t}^\dagger(O_t,\delta o_t)}{\mathrm{Tr} \left[ K_{\delta t}(O_t,\delta o_t)\, \rho_t\, K_{\delta t}^\dagger(O_t,\delta o_t) \right]}\ .
\end{equation}
The probability of observing outcome $o_t$ is state-dependent, and the mean value of $o_t$ from weak measurement defined above is given by
\begin{equation}
\EE[o_t]=2\sqrt{\gamma \delta t}\langle O_t\rangle\ ,
\end{equation}
where we denote $\langle O_{t} \rangle=\langle \psi_t|O_t|\psi_t\rangle$ for short. For instance, for one step of weak measurement of spin $1/2$ operator $O_t$ with outcomes $o_t=\pm1$ in time $dt$, the probability of obsereving outcome $o_t$ is:
\begin{equation}
p(o_t | \rho_t) = \mathrm{Tr} \left[ K_{\delta t}(O_t,\delta o_t)\, \rho_t\, K_{\delta t}^\dagger (O_t,\delta o_t) \right] = \frac{1}{2} + o_t \sqrt{\gamma \delta t} \langle O_{t} \rangle= \frac{1}{2} + 2\gamma \langle O_{t} \rangle \delta o_t\ ,\qquad \delta o_t= \frac{1}{2}\sqrt{\frac{\delta t}{\gamma}} o_t\ .
\end{equation}

\subsubsection{The Stochastic Process from Weak Measurement}
The evolution of the quantum state $\rho_t$ is found by expanding the update rule in \cref{eq-state-evolution} to the first order in $\delta t$ (for the derivations that follow, we denote the time increment as $\delta t= dt$ for convenience), after which we arrive at the nonlinear SDE for the normalized state:
\begin{equation}\label{eq-sde-rho}
d\rho_t = -\frac{\gamma}{2} [O_{t}, [O_{t}, \rho_t]] dt + \sqrt{\gamma} \{ O_{t} - \langle O_{t} \rangle, \rho_t \} dw_t
\end{equation}
where
\begin{equation}
 dw_t = o_t \sqrt{dt} - 2 \sqrt{\gamma}dt \langle O_{t} \rangle =2 \sqrt{\gamma}\big( \dd o_t  -  \langle O_{t} \rangle dt \big)  
\end{equation}
can be treated as a standard Wiener noise term satisfying $\mathbb{E}[dw_t] = 0$ and $\mathbb{E}[dw_t^2] = dt$. The first term in \cref{eq-sde-rho} is a deterministic drift representing decoherence, while the second is a stochastic term describing the random walk from quantum state collapse. If the state is pure, $\rho_t = |\psi_t\rangle \langle \psi_t|$, this process preserves purity, and the nonliear SDE in \cref{eq-sde-rho} is equivalent to the following evolution of the quantum state:
\begin{equation}\label{eq-pure-sde}
d|\psi_t\rangle = \left( -\frac{\gamma}{2} (O - \langle O \rangle)^2 dt + \sqrt{\gamma} (O - \langle O \rangle) dw_t \right) |\psi_t\rangle\ .
\end{equation}

\subsubsection{The nonlinear SDE in the Pauli basis}
The nonlinear SDE for the weak measurement in \cref{eq-sde-rho} can be written more explicitly by expanding the $n$-qubit density matrix in the Pauli basis: 
\begin{equation}
\rho_t = \frac{1}{2^n} \sum_i z_i P_i=\frac{1}{2^n}\boldsymbol{z}\cdot\boldsymbol{P}\ ,
\end{equation}
where $P_i$ ($1\le i\le N$) runs over all the $N=4^n$ Pauli string operators, with $P_0=I$ being the identity operator, $\boldsymbol{P}=(P_1,\cdots,P_N)^T$, and $z_i = \mathrm{Tr}(\rho P_i)$. This defines a state vector $\boldsymbol{z}=(z_1,\cdots,z_{N})^T=\mathrm{Tr}(\rho \boldsymbol{P})$ for the density matrix $\rho$. By taking trace of the nonlinear SDE in \cref{eq-sde-rho} with vector operator $\boldsymbol{P}$, and denoting the state vector at time $t$ by $\boldsymbol{z}_t$, we can rewrite it into a nonlinear SDE in the Pauli basis as:
\begin{equation}
d\boldsymbol{z}_t = \boldsymbol{f}(\boldsymbol{z}_t, t) \mathrm{d}t + \boldsymbol{g}(\boldsymbol{z}_t, t) dw_t\ ,
\end{equation}
where the the drift vector $\boldsymbol{f}(\boldsymbol{z}, t)$ and the noise vector $\boldsymbol{g}(\boldsymbol{z}, t)$ are defined by
\begin{equation}\label{seq:drift-noise-vector}
\boldsymbol{f}(\boldsymbol{z}_t, t)=-\frac{\gamma}{2} \text{Tr}\Big([O_{t}, [O_{t}, \rho_t]]\boldsymbol{P}\Big)\ ,\qquad \boldsymbol{g}(\boldsymbol{z}_t, t)=\sqrt{\gamma} \text{Tr}\Big( \{ O_{t} - \langle O_{t} \rangle, \rho_t \}\boldsymbol{P}\Big)\ ,
\end{equation}

Let $\vect{P}=\{P_i\}_{i=1}^{4^n}$ denote the $n$-qubit Pauli-operator basis with $\text{Tr}(P_i P_j)=2^n\delta_{ij}$.
Any state $\rho_t$ or observable $O_t$ can be expanded as
\begin{equation}
\label{eq:pauli-expansion}
\rho_t = \frac{\boldsymbol{z}_t\cdot\boldsymbol{P}}{2^n},\quad
O_t=\boldsymbol{x}_t\cdot \boldsymbol{P}, 
\end{equation}
with $z_{t,i}=\text{Tr}(\rho_t P_i)=\bra{\psi_t}P_i\ket{\psi_t}$, $x_{t,i}=\text{Tr}(O_t P_i)/2^n$. Then \cref{seq:drift-noise-vector}becomes (Einstein summation convention implied)
\begin{equation}
\begin{split}
f_l(\vect{z}_t,t) &= -\gamma(c_{ijm} c_{klm} - c_{ikm} c_{jlm}) x_{t,i}x_{t,j}z_{t,k}, \\
g_l(\vect{z}_t,t)  &= \sqrt{\gamma}((c_{ijl} + c_{jil})x_{t,i}z_{t,j} - 2 x_{t,i} z_{t,i} z_{t,l}),
\end{split}
\end{equation}
and $c_{ijk} = \mathrm{Tr}(P_i P_j P_k)/\mathrm{Tr}\,\id$ being the operator product expansion (OPE) coefficient.

Compared with the generic SDE in \cref{seq:general-sde-Mdim}, our weak measurement nonlinear SDE has $M=1$ dimensional noise $dw_t$ (which is a number) introduced by weak measurement in the $\boldsymbol{x}_t$ direction, and the matrix $\boldsymbol{G}(\boldsymbol{z}, t)$ reduces to the noise vector $\boldsymbol{g}(\boldsymbol{z}, t)$ here. Accordingly, we define the probability distribution of the state $\rho$ in terms of state vector $\boldsymbol{z}$ as $p(\boldsymbol{z},t)$.  

All the forward/backward SDE, ODE, PDE, score function and loss function specifically for our weak measurement model then follows from \cref{sec:general-diffusion} by setting the noise dimension $M=1$.


\subsection{The Unitary Reverse Procedure for Pure States}

\subsubsection{From learning unitary to learning score function}

For an ensemble of pure states, the forward diffusion process implemented by weak measurement must be purity-preserving. As is clear from \cref{eq-pure-sde}, a pure state remains pure along the quantum trajectory. Consequently, the time-evolution process is a unitary evolution.

For such pure state evolutions, we are therefore motivated to define a unitary score function being an infinitesimal time-$dt$ unitary evolution $V_\theta(\boldsymbol{z},t)$ (for some fixed small time step $dt$), such that it approximates the infinitesimal forward evolution
\begin{equation}
\rho_{t+dt}=\frac{K_{dt}(O_t, d o_t)\, \rho_t\, K_{dt}^\dagger(O_t, d o_t)}{\mathrm{Tr} \left[ K_{dt}(O_t, d o_t)\, \rho_t\, K_{dt}^\dagger(O_t, d o_t) \right]}\approx V_\theta(\boldsymbol{z}_t,t)\rho_{dt}V_\theta^\dagger(\boldsymbol{z}_t,t)=|\psi_{t+dt}\rangle \langle \psi_{}|\ ,\qquad \rho_{t}=|\psi_{t}\rangle \langle \psi_{t}|\ .
\end{equation}
Once this score function is learned, the time-reversed process can be generated by the inverse unitary transformations $V_\theta^\dagger(\boldsymbol{z}_t,t)$, which gives a purity-preserving reverse unitary evolution.

We are thus straightforwardly motivated to define a loss function
\begin{equation}\label{seq:loss-unitary}
\begin{split}
&\mathcal{L}_\theta = \frac{1}{2} \, \EE_{\boldsymbol{z}_t,\boldsymbol{z}_{t+\dd t}} 
\left\| 
2V_\theta(\boldsymbol{z}_{t+dt}, t) \, \rho_t \, V_\theta^\dagger(\boldsymbol{z}_{t+dt}, t) 
-\rho_t -\frac{K_{dt}(O_t, d o_t)\, \rho_t\, K_{dt}^\dagger(O_t, d o_t)}{\mathrm{Tr} \left[ K_{dt}(O_t, d o_t)\, \rho_t\, K_{dt}^\dagger(O_t, d o_t) \right]}+\frac{\gamma dt}{2}[O_{t}, [O_{t}, \rho_t]]
\right\|^2_F\\
&=\frac{1}{2} \, \EE_{\boldsymbol{z}_t,\boldsymbol{z}_{t+\dd t}} 
\left\| 
2V_\theta(\boldsymbol{z}_{t+dt}, t) \, \rho_t \, V_\theta^\dagger(\boldsymbol{z}_{t+dt}, t) 
-\rho_t-\rho_{t+dt}+\frac{\gamma dt}{2}[O_{t}, [O_{t}, \rho_t]]
\right\|^2_F\ , \\
&=\frac{1}{2} \, \EE_{\boldsymbol{z}_t,\boldsymbol{z}_{t+\dd t}} 
\left\| 
2V_\theta(\boldsymbol{z}_{t+dt}, t) \, \rho_t \, V_\theta^\dagger(\boldsymbol{z}_{t+dt}, t) 
-\rho_t-\rho_{t+dt}+\gamma dt\Big(\rho_t - O_{t}\rho_t O_{t}\Big)
\right\|^2_F\ , \\
\end{split}
\end{equation}
optimizing which gives the desired unitary score $V_\theta$. In the third line, we have used the fact that $O_{t}^2=\id$. 

We now show that this is equivalent to the denoising loss function in \cref{seq:loss-denoising-M-dim} usually defined in the diffusion problem. Using the fact that
\begin{equation}
\left\| \boldsymbol{z} \cdot \boldsymbol{P} \right\|^2 
= \sum_{ij} z_i z_j \, \mathrm{tr}(P_i P_j)
= 2^n \sum_i z_i^2 = 2^n \left\| \boldsymbol{z} \right\|^2\ ,
\end{equation}
the denoising loss function in \cref{seq:loss-denoising-M-dim} can be rewritten as
\begin{equation}\label{eq-loss-nonunitary}
\begin{split}
\mathcal{L}_\theta 
&= \frac{2^n}{2} \, \EE_{\boldsymbol{z}_t,\boldsymbol{z}_{t+\dd t}} \left\| 
\left[\boldsymbol{s}_\theta(\boldsymbol{z}_{t+dt} - \boldsymbol{f}(\boldsymbol{z}_t)dt,t) - \boldsymbol{f}(\boldsymbol{z}_t) \right] \cdot \frac{\boldsymbol{P}}{2^n} 
+ \left( \frac{\boldsymbol{z}_{t + dt} - \boldsymbol{z}_t}{dt} \right) \cdot \frac{\boldsymbol{P}}{2^n}
\right\|^2 \\
&= \frac{2^n}{2} \, \EE_{\boldsymbol{z}_t,\boldsymbol{z}_{t+\dd t}} \left\| 
\left[ \boldsymbol{s}_\theta(\boldsymbol{z}_{t+dt} - \boldsymbol{f}(\boldsymbol{z}_t)dt,t) - \boldsymbol{f}(\boldsymbol{z}_t) \right] \cdot \frac{\boldsymbol{P}}{2^n} 
+  \frac{\rho_{t + dt} - \rho_t}{dt} 
\right\|^2\ .
\end{split}
\end{equation}
Comparing with \cref{seq:loss-unitary}, we find if we make the following identification:
\begin{equation}
\left[\boldsymbol{s}_\theta(\boldsymbol{z}_{t+dt} - \boldsymbol{f}(\boldsymbol{z}_t)dt,t) - \boldsymbol{f}(\boldsymbol{z}_t)
\right] \cdot \frac{\boldsymbol{P}}{2^n}=\frac{2\rho_t-2V_\theta(\boldsymbol{z}_{t+dt}, t) \, \rho_t \, V_\theta^\dagger(\boldsymbol{z}_{t+dt}, t)}{dt}-\frac{\gamma}{2}[O_{t}, [O_{t}, \rho_t]]\ ,
\end{equation}
then the loss function $\mathcal{L}_\theta$ we defined in \cref{seq:loss-unitary} is up to a global factor equivalent to the denoising loss function in \cref{seq:loss-denoising-M-dim}! Moreover, using \cref{seq:drift-noise-vector}, we have $-\frac{\gamma}{2}[O_{t}, [O_{t}, \rho_t]]=\boldsymbol{f}(\boldsymbol{z}_t)\cdot \frac{\boldsymbol{P}}{2^n}$, and thus we can rewrite the above equation as
\begin{equation}
\left[\boldsymbol{s}_\theta(\boldsymbol{z}_{t+dt} - \boldsymbol{f}(\boldsymbol{z}_t)dt,t) - 2\boldsymbol{f}(\boldsymbol{z}_t)
\right] \cdot \frac{\boldsymbol{P}}{2^n}=\frac{2\rho_t-2V_\theta(\boldsymbol{z}_{t+dt}, t) \, \rho_t \, V_\theta^\dagger(\boldsymbol{z}_{t+dt}, t)}{dt}\ .
\end{equation}

Further, to the linear order of $dt$, the infinitesimal unitary $V_\theta$ can be parametrized by 
\begin{equation}
V_\theta = e^{i H_\theta dt} \approx I + i H_\theta dt\ ,\qquad H_\theta=\sum_j\eta_{\theta,j}P_j=\boldsymbol{\eta}_\theta \cdot \boldsymbol{P}\ ,
\end{equation}
where $H_\theta$ is a Hermitian generator. (Keeping only up to the first order in $dt$ of $V_\theta$ because our SDE is accurate in the first order in $dt$.) Thus, 
\begin{equation}
V_\theta \rho_t V_\theta^\dagger-\rho_t\approx i[H_\theta,\rho_t]dt\ , 
\end{equation}
and we can identify the following:
\begin{align}
\left[
\boldsymbol{s}_\theta(\boldsymbol{z}_{t+dt} - \boldsymbol{f}(\boldsymbol{z}_t)dt,t) - 2\boldsymbol{f}(\boldsymbol{z}_t)
\right] \cdot \frac{\boldsymbol{P}}{2^n}
&= -2i \left[ H_\theta, \rho_t \right] 
= -\frac{2i}{2^n} \left[ \boldsymbol{\eta}_\theta \cdot \boldsymbol{P}, \boldsymbol{z}_t \cdot \boldsymbol{P} \right] \nonumber \\
&= -\frac{2i}{2^n} \sum_{jk} \left[ P_j, P_k \right] \eta_{\theta,j} z_{t,k}\ .
\end{align}
We see that $V_\theta = e^{i H_\theta dt}$
encodes the information of the loss function $\frac{1}{2}\boldsymbol{s}_\theta - \boldsymbol{f}$. Accordingly, the reverse unitary evolution is equivalent to the reverse ODE at reverse time $\widetilde{t}=T-t$ given by the score function:
\begin{equation}
d\rho_{\widetilde{t}}=V_\theta^\dagger\rho_{\widetilde{t}}V_\theta-\rho_{\widetilde{t}} \approx -i[H_\theta,\rho_{\widetilde{t}}]d{\widetilde{t}}=\left(\frac{1}{2}\boldsymbol{s}_\theta - \boldsymbol{f}\right)\cdot \frac{\boldsymbol{P}}{2^n}d{\widetilde{t}}\ ,\qquad d\boldsymbol{z}_{\widetilde{t}}=\left(\frac{1}{2}\boldsymbol{s}_\theta - \boldsymbol{f}\right)d{\widetilde{t}}\ .
\end{equation}

\paragraph*{Equivalent score-operator form.} The loss form \cref{seq:loss-unitary} carries the Kraus drift as an explicit additive term inside the squared norm and parametrizes $V_\theta = e^{i H_\theta dt}$ as a strictly unitary object. An equivalent and more compact presentation absorbs both pieces into a single (in general non-unitary) score operator
\begin{equation}\label{seq:V-option-A}
\widetilde{V}_\theta(\boldsymbol{z}_{t+dt}, t) := \exp\!\Big(\big(-\tfrac{\gamma}{2}\,\delta O_t^2 + 2i\,H_\theta(\boldsymbol{z}_{t+dt}, t)\big)\,dt\Big),
\end{equation}
with $\delta O_t = O_t - \langle\psi_{t+dt}|O_t|\psi_{t+dt}\rangle$. The corresponding loss is the pure-state inner-product infidelity
\begin{equation}\label{seq:loss-A}
\mathcal{L}_\theta = \sum_t \Bigl(\,1 - \mathbb{E}_{\psi_t,\,\psi_{t+dt}}\big|\!\bra{\psi_t}\widetilde{V}_\theta(\boldsymbol{z}_{t+dt}, t)\ket{\psi_{t+dt}}\!\big|^2\Bigr),
\end{equation}
which up to a global multiplicative constant agrees with \cref{seq:loss-unitary} expanded to leading order in $dt$: the $-\tfrac{\gamma}{2}\delta O_t^2 dt$ exponent reproduces the $\frac{\gamma dt}{2}[O_t,[O_t,\rho_t]]$ Kraus correction in the squared-norm form, and the $2iH_\theta dt$ exponent encodes the same controller--score identification derived above, modulo the same factor of $2$ inherited from the score--flow relation $\boldsymbol{s} = 2\boldsymbol{f} - 2\boldsymbol{v}_\theta$. Optimizing either loss therefore yields the same effective backward unitary; we adopt \cref{seq:V-option-A} and \cref{seq:loss-A} for numerics because this compact score-operator form bundles the Kraus drift and the unitary rotation into a single object whose physical meaning---the averaged $dt$-step of the backward probability-flow update---is more transparent.

\subsubsection{Error bound for Unitary Learning}

The equivalence between learning the score function and learning a unitary generator allows us to bound the error of the generative model. Here we consider the error bound of the unitary reverse generative model. 

First, for a pair of pure states  $|\phi\rangle$ and $\psi\rangle$, we define their trace distance as
\begin{equation}
D(|\phi\rangle,\,|\psi\rangle)=\text{Tr}\sqrt{(\rho_\phi-\rho_\psi)^2}=\sqrt{2(1-|\langle\phi|\psi\rangle|^2)}\ ,\qquad \rho_\phi=|\phi\rangle\langle\phi|\ ,\quad \rho_\psi=|\psi\rangle\langle\psi|\ .
\end{equation}

Assume the learned probability distribution from the unitary reverse generative model is $p_\theta$, and the  true probability distribution of pure states $p_{\mathrm{true}}$. 
We then define the
\textbf{$p$-Wasserstein distance} with trace distance cost is defined by \cite{Zhang2024G2310.05866}
\begin{equation}\label{seq-Wp-distance2}
\begin{split}
&W_p(p_\theta, p_{\mathrm{true}})= \left( \inf_{\pi \in \Pi(p_\theta, p_{\mathrm{true}})} 
\int D(|\phi\rangle,\,|\psi\rangle)^p \;
\pi\big(|\phi\rangle,\, |\psi\rangle \big)d\phi d\psi \right)^{\frac{1}{p}} ,
\end{split}
\end{equation}
where $\Pi(p_\theta, p_{\mathrm{true}})$ runs over all  the \textbf{couplings} (joint distributions) $\pi$ 
on pairs of states $(|\phi\rangle,\,|\psi\rangle)$ whose marginals are $p_\theta$ and $p_{\mathrm{true}}$, namely satisfying $\int \pi\big(|\phi\rangle,\, |\psi\rangle \big) d\phi=p_{\mathrm{true}}(|\psi\rangle)$, and $\int \pi\big(|\phi\rangle,\, |\psi\rangle \big)d\psi=p_\theta(|\phi\rangle)$. 

We now show the following theorem:

\begin{theorem*}[Restatement of Theorem~1 of the main text] Distribution Convergence Theorem. Let $p_{\mathrm{true}}$ be the true probability distribution of initial pure states, and $p_\theta$ denote the learned probability distribution obtained after a reverse diffusion process of total evolution time $T$ generated by the learned control unitary $V_\theta$.  
Assume the learned control unitary satisfies  
\begin{equation}\label{seq-V-ep}
\EE_{\psi_t,\psi_{t+\dd t}}\sqrt{1-  
\left|\langle \psi_{t+dt}|V_\theta(\boldsymbol{z}_{t+dt}, t)|\psi_t\rangle \right|^2} \le \epsilon dt\ ,
\end{equation}
where $\boldsymbol{z}_t$ is the state vector of $|\psi_t \rangle$. Also, assume $V_\theta$ has a Lipschitz bound of constant $L_V dt$ such that for any pure state $|\alpha\rangle$: 
\begin{equation}
D(V_\theta^\dagger(\boldsymbol{z},t)|\alpha\rangle,V_\theta^\dagger(\boldsymbol{z}',t)|\alpha\rangle)\le L_Vdt D(|\psi\rangle,|\psi^{\prime}\rangle)\ ,
\end{equation}
where $\boldsymbol{z}$ and $\boldsymbol{z}'$ are the state vectors of $|\psi\rangle$ and $|\psi^{\prime}\rangle$, respectively. Then by choosing a stopping time $T = \frac{C}{L_v}\ln(1/\epsilon)$ with $0 < C < 1$, the $1$-Wasserstein distance between the learned ensemble $p_{\theta}$ and the true ensemble $p_{\mathrm{true}}$ is bounded by
\begin{equation}\label{eq-W-bound}
W_{1}(p_{0}, p_{\mathrm{true}}) 
\leq \frac{\sqrt{2}}{L_{v}}  \epsilon^{1-C} 
+ \epsilon^{-C} \, E_{\mathrm{diff}}(T)
\end{equation}
with $E_{\mathrm{diff}}(T)$ decaying exponentially as $n e^{-\frac{4\gamma}{3n}T}$ with respect to the diffusion time $T$. If we choose a sufficiently large measurement strength $\gamma$ such that $\frac{4\gamma}{3n L_{v}} \ge 1$, the leading-order behavior of the bound for small $\epsilon$ is dominated by the first term in \cref{eq-W-bound}, and $
W_{1}(p_{0}, p_{\mathrm{true}}) \;\to\; 0$ as $\epsilon \to 0.$
\end{theorem*}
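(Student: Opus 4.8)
\emph{Proof plan.} The plan is an error--accumulation estimate for the reverse chain, split into two contributions: an \emph{initialization error}, because at inference time the reverse process is started from the product--state reference $p_T^{\mathrm{ref}}$ rather than from the exact forward--diffused law $p_T$; and a \emph{per--step modeling error}, the small infidelity $\epsilon\,\dd t$ by which $V_\theta$ misses the true reverse step, amplified at each step by the Lipschitz factor. Writing $p_T$ for the time--$T$ forward diffusion of $p_{\mathrm{true}}$, $E_{\mathrm{diff}}(T):=W_1(p_T^{\mathrm{ref}},p_T)$, $p_0=p_\theta$ for the output of the learned reverse chain $V_\theta^\dagger$ started from $p_T^{\mathrm{ref}}$, and $\tilde p_0$ for the output of the \emph{same} learned chain started from $p_T$, I would first invoke the triangle inequality $W_1(p_0,p_{\mathrm{true}})\le W_1(p_0,\tilde p_0)+W_1(\tilde p_0,p_{\mathrm{true}})$, aiming to match the two terms to the two summands of \cref{eq-W-bound}.

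The first thing to establish is a one--step contraction for the reverse update $|\psi\rangle\mapsto V_\theta^\dagger(\boldsymbol{z}_\psi,t)|\psi\rangle$ (with $\boldsymbol{z}_\psi$ the state vector of $|\psi\rangle$): inserting $V_\theta^\dagger(\boldsymbol{z}_\psi,t)|\psi'\rangle$ and using unitary invariance of the trace distance together with the assumed Lipschitz bound on $V_\theta$ gives $D\!\big(V_\theta^\dagger(\boldsymbol{z}_\psi,t)|\psi\rangle,V_\theta^\dagger(\boldsymbol{z}_{\psi'},t)|\psi'\rangle\big)\le(1+L_V\dd t)\,D(|\psi\rangle,|\psi'\rangle)$, so the full $N_T=T/\dd t$--step reverse map is $(1+L_V\dd t)^{N_T}\le\ee^{L_V T}$--Lipschitz on pure states, hence also $\ee^{L_V T}$--Lipschitz as a $W_1$ map on distributions. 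This immediately controls the initialization term: pushing an optimal coupling of $(p_T^{\mathrm{ref}},p_T)$ through the learned chain gives $W_1(p_0,\tilde p_0)\le\ee^{L_V T}E_{\mathrm{diff}}(T)$, which with the stopping time $T=\tfrac{C}{L_V}\ln(1/\epsilon)$ becomes $\epsilon^{-C}E_{\mathrm{diff}}(T)$. The exponential decay $E_{\mathrm{diff}}(T)=O\!\big(n\,\ee^{-\frac{4\gamma}{3n}T}\big)$ I would read off from the forward dynamics: for single--qubit randomized Pauli weak measurements each trajectory is (asymptotically) a tensor product of single--qubit coherent states undergoing the Bloch--sphere heat flow \cref{eq:diffusion-product-states-n}, whose slowest nonzero mode ($\ell=1$) decays at rate $\tfrac{4\gamma}{3n}$ --- consistent with the single--qubit Pauli weight in \cref{eq-measurement-channel-pauli-weight} --- so $W_1$ on each Bloch sphere decays at that rate and a union bound over $n$ qubits supplies the prefactor.

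For the per--step term I would construct a coupling of $(\tilde p_0,p_{\mathrm{true}})$ from the dynamics itself: draw $|\psi_0\rangle\sim p_{\mathrm{true}}$, run forward diffusion to get the trajectory $|\psi_0\rangle,\dots,|\psi_T\rangle$, then apply the learned reverse steps from $|\psi_T\rangle$ to obtain $|\tilde\psi_T\rangle=|\psi_T\rangle,\dots,|\tilde\psi_0\rangle$; the pair $(|\tilde\psi_0\rangle,|\psi_0\rangle)$ is a valid coupling. Setting $\delta_t:=\EE\,D(|\tilde\psi_t\rangle,|\psi_t\rangle)$ (expectation over the forward randomness, with $\delta_T=0$) and splitting one reverse step through $V_\theta^\dagger(\boldsymbol{z}_{\psi_{t+\dd t}},t)|\psi_{t+\dd t}\rangle$, the first piece is bounded by the one--step contraction as $(1+L_V\dd t)\,D(|\tilde\psi_{t+\dd t}\rangle,|\psi_{t+\dd t}\rangle)$ and the second piece equals $D\!\big(|\psi_{t+\dd t}\rangle,V_\theta|\psi_t\rangle\big)=\sqrt{2(1-|\langle\psi_{t+\dd t}|V_\theta|\psi_t\rangle|^2)}$, whose expectation is $\le\sqrt2\,\epsilon\,\dd t$ by hypothesis \cref{seq-V-ep}. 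Thus $\delta_t\le(1+L_V\dd t)\,\delta_{t+\dd t}+\sqrt2\,\epsilon\,\dd t$; summing the geometric series gives $\delta_0\le\tfrac{\sqrt2}{L_V}\,\epsilon(\ee^{L_V T}-1)\le\tfrac{\sqrt2}{L_V}\,\epsilon^{1-C}$, hence $W_1(\tilde p_0,p_{\mathrm{true}})\le\tfrac{\sqrt2}{L_V}\,\epsilon^{1-C}$. Adding the two terms yields \cref{eq-W-bound}; substituting the $E_{\mathrm{diff}}$ decay turns the second summand into $O\!\big(n\,\epsilon^{\,C(4\gamma/(3nL_V)-1)}\big)$, which under $\tfrac{4\gamma}{3nL_V}\ge1$ is dominated as $\epsilon\to0$ by the vanishing first summand, proving $W_1(p_0,p_{\mathrm{true}})\to0$.

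The hard part, I expect, is the $E_{\mathrm{diff}}(T)$ estimate: knowing $\bar\rho_T$ is close to $\id/2^n$ (which is all the Pauli--weight decay \cref{eq-measurement-channel-pauli-weight} directly gives) does \emph{not} by itself control the distribution over pure states in $W_1$, so one must exploit that local randomized weak Pauli measurements generate no entanglement and actually drive each trajectory toward a product of single--qubit coherent states --- making that disentanglement quantitative for an arbitrary, possibly entangled $p_{\mathrm{true}}$ is the delicate point --- thereby reducing the bound to a spectral--gap estimate for the Bloch--sphere heat equation \cref{eq:diffusion-product-states-n}. A lesser, more routine issue is that the Lipschitz hypothesis on $V_\theta$ must hold uniformly over pure states; a fully self--contained version would derive $L_V$ from a bound on $\nabla_{\boldsymbol{z}}H_\theta$ via $V_\theta=\ee^{\ii H_\theta\dd t}$, without affecting the exponents in \cref{eq-W-bound}.
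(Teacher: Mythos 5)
Your proposal is correct and follows essentially the same route as the paper: a discrete Gronwall-type recursion in which each reverse step amplifies the $W_1$ error by a factor $(1+L_V\,\dd t)$ and adds $\sqrt{2}\,\epsilon\,\dd t$ from the infidelity hypothesis, with the initialization mismatch $E_{\mathrm{diff}}(T)$ entering as the $t=T$ boundary term; your triangle-inequality split into an initialization term and a modeling term is just the unrolled form of the paper's single recursion (the paper builds the induced coupling from the optimal coupling at $t+\dd t$ composed with the true forward kernel, you build an explicit trajectory coupling), and the two bounds agree term by term. The one step you flag as delicate --- that the Pauli-weight decay \cref{eq-measurement-channel-pauli-weight} only controls the average state $\bar{\rho}_T$ and not the pure-state distribution in $W_1$ --- is precisely the step the paper itself leaves informal, asserting $W_1(p_{\theta,T},p_{\mathrm{true},T})=E_{\mathrm{diff}}(T)\sim n\,\ee^{-4\gamma T/3n}$ directly from the Pauli weights, so your caution there identifies a genuine thinness shared by both arguments rather than a defect of yours.
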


\begin{proof}We first note that once the learned control unitary $V_\theta$ is determined, the reversed state $|\phi_t\rangle$ at any time $0\le t\le T$ generated by $V_\theta$ backward from a state $|\phi_T\rangle$ at time $T$ is deterministic:
\begin{equation}
|\phi_t\rangle=\mathcal{T}^\dagger\prod_{t'=t}^{T-dt} V_\theta^\dagger (\boldsymbol{z}_{t'+dt}^\phi,t')|\phi_T\rangle=V^\dagger(\phi_T,t,T)|\phi_T\rangle\ ,
\end{equation}
where $\mathcal{T}^\dagger$ stands for the anti-time ordering, and $\boldsymbol{z}_{t}^\phi$ is the state vector of state $|\phi_{t}\rangle$.

We denote the true probability distribution at time $t$ as $p_{\mathrm{true},t}$, which is evolved by the forward weak measurement diffusion from the initial true probability distribution $p_{\mathrm{true}}$. Similarly, we denote the learned probability distribution at time $t$ as $p_{\theta,t}$, which is generated by reverse diffusion $V_\theta$ from the true probability distribution $p_{\mathrm{true},T}$ at time $T$.

Using the fact that $|\phi_t\rangle=V_\theta^\dagger(\boldsymbol{z}_{t+dt}^\phi,t)|\phi_{t+dt}\rangle$, 
%
\begin{equation}\label{seq-D-ineq}
\begin{split}
&D(|\phi_t\rangle,\,|\psi_t\rangle)=D(V_\theta^\dagger(\boldsymbol{z}_{t+dt}^\phi,t)|\phi_{t+dt}\rangle,\,|\psi_t\rangle)=D(|\phi_{t+dt}\rangle,\,V_\theta(\boldsymbol{z}_{t+dt}^\phi,t)|\psi_t\rangle)\\
&\le D(|\phi_{t+dt}\rangle,|\psi_{t+dt}\rangle)+D(|\psi_{t+dt}\rangle, V_\theta(\boldsymbol{z}_{t+dt}^\phi,t)|\psi_t\rangle) \\
&=D(|\phi_{t+dt}\rangle,\,|\psi_{t+dt}\rangle) +\sqrt{2\left(1- \left|\langle \psi_{t+dt}| V_\theta(\boldsymbol{z}_{t+dt}^\phi,t)|\psi_t\rangle\right|^2\right)}\ .
\end{split}
\end{equation}
which holds for any state $|\psi_{t+dt}\rangle$. 

Assume the Wasserstein distance $W_1(p_{\theta,t+dt}, p_{\mathrm{true},t+dt})$ for time $t+dt$ is achieved by the coupling $\pi_{t+dt}^{\text{inf}} \in \Pi(p_{\theta,t+dt}, p_{\mathrm{true},t+dt})$, namely,
\begin{equation}
W_1(p_{\theta,t+dt}, p_{\mathrm{true},t+dt})=\int D(|\phi_{t+dt}\rangle,\,|\psi_{t+dt}\rangle) \;
\pi_{t+dt}^{\text{inf}}\big(|\phi_{t+dt}\rangle,\, |\psi_{t+dt}\rangle \big)d\phi_{t+dt} d\psi_{t+dt}\ .
\end{equation}
We define an induced coupling at time $t$ as $\pi^d_{t}$:
\begin{equation}\label{seq-pi-ind}
\pi^d_t\big(|\phi_{t}\rangle,\, |\psi_{t}\rangle \big)d\phi_t=d\phi_{t+dt}\int_{\psi_{t+dt}} \pi_{t+dt}^{\text{inf}}\big(\phi_{t+dt}\rangle,\, |\psi_{t+dt}\rangle \big) 
\frac{\Gamma_{\mathrm{true},t}(|\psi_{t+dt}\rangle,|\psi_t\rangle)}{p_{\mathrm{true},t+dt}(|\psi_{t+dt}\rangle)}
d\psi_{t+dt} \ ,
\end{equation}
where $|\phi_{t+dt}\rangle =V_\theta(\boldsymbol{z}_{t+dt}^\phi,t)|\phi_{t}\rangle$. The function $\Gamma_{\mathrm{true},t}(|\psi_{t+dt}\rangle,|\psi_t\rangle)$ is the true joint distribution between states $|\psi_{t+dt}\rangle,|\psi_t\rangle$, satisfying $$\int \Gamma_{\mathrm{true},t}(|\psi_{t+dt}\rangle,|\psi_t\rangle) d\psi_{t+dt} = p_{\mathrm{true},t}(|\psi_t\rangle)\ ,\quad \int \Gamma_{\mathrm{true},t}(|\psi_{t+dt}\rangle,|\psi_t\rangle) d\psi_{t} = p_{\mathrm{true},t+dt}(|\psi_{t+dt}\rangle)\ .$$
Note that $\frac{\Gamma_{\mathrm{true},t}(|\psi_{t+dt}\rangle,|\psi_t\rangle)}{p_{\mathrm{true},t+dt}(|\psi_{t+dt}\rangle)}=p_{\mathrm{true}}(|\psi_t\rangle \big| |\psi_{t+dt}\rangle)$ is the conditional probability of $|\psi_t\rangle$ with respect to $|\psi_{t+dt}\rangle$ of the true distribution, which satisfies $\int p_{\mathrm{true}}(|\psi_t\rangle \big| |\psi_{t+dt}\rangle) d\psi_t=1$.

By \cref{seq-D-ineq,seq-pi-ind}, we arrive at the following inequality:
\begin{equation}\label{seq-pi-d-ineq}
\begin{split}
&\int D(|\phi_t\rangle,\,|\psi_t\rangle) \;
\pi_t^d\big(|\phi_t\rangle,\, |\psi_t\rangle \big)d\phi_t d\psi_{t}\\
\le &\int D(|\phi_{t+dt}\rangle,\,|\psi_{t+dt}\rangle) \;
\pi_{t+dt}^{\text{inf}}\big(|\phi_{t+dt}\rangle,\, |\psi_{t+dt}\rangle \big) d\phi_{t+dt} d\psi_{t+dt}\\
&\ + \int \sqrt{2\left(1- \left|\langle \psi_{t+dt}| V_\theta(\boldsymbol{z}_{t+dt}^\phi,t)|\psi_t\rangle\right|^2\right)} \;
\pi_{t+dt}^{\text{inf}}\big(|\phi_{t+dt}\rangle,\, |\psi_{t+dt}\rangle \big) 
\frac{\Gamma_{\mathrm{true},t}(|\psi_{t+dt}\rangle,|\psi_t\rangle)}{p_{\mathrm{true},t+dt}(|\psi_{t+dt}\rangle)}
d\phi_{t+dt}d\psi_{t+dt}d\psi_t \\
=&W_1(p_{\theta,t+dt}, p_{\mathrm{true},t+dt})+\Delta(\pi_{t+dt}^{\text{inf}})\ ,
\end{split}
\end{equation}
where
\begin{equation}
\Delta(\pi_{t+dt}^{\text{inf}})=\int \sqrt{2\left(1- \left|\langle \psi_{t+dt}| V_\theta(\boldsymbol{z}_{t+dt}^\phi,t)|\psi_t\rangle\right|^2\right)} \;
\frac{\pi_{t+dt}^{\text{inf}}\big(|\phi_{t+dt}\rangle,\, |\psi_{t+dt}\rangle \big) 
}{p_{\mathrm{true},t+dt}(|\psi_{t+dt}\rangle)} \Gamma_{\mathrm{true},t}(|\psi_{t+dt}\rangle,|\psi_t\rangle)
d\phi_{t+dt}d\psi_{t+dt}d\psi_t\ .
\end{equation}

We assume $V_\theta$ is Lipschitz with coefficient $L_V dt$, that for any state $|\alpha\rangle$,
\begin{equation}
D(V_\theta^\dagger(\boldsymbol{z},t)|\alpha\rangle,V_\theta^\dagger(\boldsymbol{z}',t)|\alpha\rangle)\le L_Vdt D(|\psi\rangle,|\psi^{\prime}\rangle)\  .
\end{equation}
Then, using \cref{seq-V-ep}, assuming $\boldsymbol{z}_{t+dt}$ denotes the state vector of state $|\psi_{t+dt}\rangle$, we find
\begin{equation}
\begin{split}
&\Delta(\pi_{t+dt}^{\text{inf}})\le \sqrt{2}\epsilon dt + \int \left(D(| \psi_{t}\rangle, V_\theta^\dagger(\boldsymbol{z}_{t+dt},t)|\psi_{t+dt}\rangle)- D(| \psi_{t}\rangle , V_\theta^\dagger(\boldsymbol{z}_{t+dt}^\phi,t)|\psi_{t+dt}\rangle)\right) \;
\\
&\qquad\qquad \times \frac{\pi_{t+dt}^{\text{inf}}\big(|\phi_{t+dt}\rangle,\, |\psi_{t+dt}\rangle \big) 
}{p_{\mathrm{true},t+dt}(|\psi_{t+dt}\rangle)} \Gamma_{\mathrm{true},t}(|\psi_{t+dt}\rangle,|\psi_t\rangle)
d\phi_{t+dt}d\psi_{t+dt}d\psi_t\\
&\le \sqrt{2}\epsilon dt + \int D( V_\theta^\dagger(\boldsymbol{z}_{t+dt},t)|\psi_{t+dt}\rangle, V_\theta^\dagger(\boldsymbol{z}_{t+dt}^\phi,t)|\psi_{t+dt}\rangle)
\frac{\pi_{t+dt}^{\text{inf}}\big(|\phi_{t+dt}\rangle,\, |\psi_{t+dt}\rangle \big) 
}{p_{\mathrm{true},t+dt}(|\psi_{t+dt}\rangle)} \Gamma_{\mathrm{true},t}(|\psi_{t+dt}\rangle,|\psi_t\rangle)
d\phi_{t+dt}d\psi_{t+dt}d\psi_t\\
&\le \sqrt{2}\epsilon dt +
L_v dt\int D(|\phi_{t+dt}\rangle,\,|\psi_{t+dt}\rangle) \;
\pi_{t+dt}^{\text{inf}}\big(|\phi_{t+dt}\rangle,\, |\psi_{t+dt}\rangle \big)d\phi_{t+dt} d\psi_{t+dt} \\
&=\sqrt{2}\epsilon dt +L_vdt W_1(p_{\theta,t+dt}, p_{\mathrm{true},t+dt})  \ .
\end{split}
\end{equation}
Thus, by \cref{seq-pi-d-ineq} and the definition of the $1$-Wasserstein distance, we find
\begin{equation}\label{seq-W2-bound-diff}
W_1(p_{\theta,t}, p_{\mathrm{true},t}) \le  \big(1+L_V dt\big) W_1(p_{\theta,t+dt}, p_{\mathrm{true},t+dt}) +\sqrt{2}\epsilon dt\ .
\end{equation}
This implies
\begin{equation}\label{seq-W2-bound-int}
e^{-L_V (T-t)}W_1(p_{\theta,t}, p_{\mathrm{true},t}) \le  e^{-L_V (T-t-dt)} W_1(p_{\theta,t+dt}, p_{\mathrm{true},t+dt}) +\sqrt{2} e^{-L_V (T-t)}\epsilon dt\ .
\end{equation}

At $t=T$, we start to perform the unitary reverse from a uniform random product state distribution $p_{\theta,T}$. Meanwhile, $E_{\mathrm{diff}}(T) \sim n \exp \left(-\frac{4\gamma}{3n}T\right)$, because for non-identity Pauli strings, each Pauli weight decays exponentially to zero with increasing diffusion time $T$, as shown in \cref{seq-measurement-pauli-weight2}. So
\begin{equation}
W_1(p_{\theta,T}, p_{\mathrm{true},T}) = E_{\mathrm{diff}}(T) \sim n e^{-\frac{4\gamma}{3n}T}\ ,
\end{equation}
as is evident from the Pauli weight of the weak measurement channel in \cref{seq-measurement-pauli-weight2}. This together with integrating \cref{seq-W2-bound-int} from $T$ to $t$ yields an accumulated bound:
\begin{equation}
W_1(p_{\theta,t}, p_{\mathrm{true},t}) \le e^{L_V (T-t)}\Big[\frac{\sqrt{2}\epsilon}{L_V}\left(1-e^{-L_V (T-t)}\right) +E_{\mathrm{diff}}(T) \Big] \ .
\end{equation}
Thus, particularly, for $t=0$, the $1$-Wasserstein distance between $p_\theta=p_{\theta,0}$ and $p_{\mathrm{true}}=p_{\mathrm{true},0}$ is bounded by
\begin{equation}\label{bound-no-T}
W_1(p_{\theta}, p_{\mathrm{true}}) \le e^{L_VT}\left[\frac{\sqrt{2}\epsilon}{L_V} \left(1-e^{-L_V T}\right) +E_{\mathrm{diff}}(T) \right]\ .
\end{equation}

Now if we choose a stopping time $T = \frac{C}{L_v}\ln(1/\epsilon)$, then \cref{bound-no-T} becomes
\begin{equation}\label{bound-T}
W_{1}(p_{0}, p_{\mathrm{true}}) 
\leq \frac{\sqrt{2}}{L_{v}}  \epsilon^{1-C} 
+ \epsilon^{-C} \, E_{\mathrm{diff}}(T)
\end{equation}

In the leading-order behavior of this expression for small $\epsilon,$ the second term in \cref{bound-T} becomes
\begin{equation}
E_{\mathrm{diff}}(T) 
\sim n \exp\left( - \frac{4\gamma}{3n L_{v}} \ln \frac{1}{\epsilon} \right) 
= n \epsilon^{\frac{4\gamma}{3n L_{v}}}
\end{equation}

In choosing a sufficiently large measurement strength $\gamma$, we expect $E_{\text{diff}}(T)$ to decay sufficiently fast that $\frac{4\gamma}{3n L_{v}} \ge C,$ and the leading-order behavior of the bound for small $\epsilon$ is then dominated by the first term in \cref{bound-T}, which is a constant. As $\epsilon \to 0,$
\begin{equation}
W_{1}(p_{0}, p_{\mathrm{true}}) \;\to\; \frac{\sqrt{2}}{L_{v}} \epsilon^{1-C} 
\end{equation}

This means that the distribution distance $W_1 \to 0$ when the training error $\epsilon$ is made arbitrarily tiny.

\end{proof}

\section{Pauli Weights Evolution under Measurement Channel and Measurement-and-prepare channel}

This appendix derives the analytical expression of the measurement channel $\mathcal{F}_t$ and the measurement-and-prepare channel $\mathcal{M}_t$ we defined in the main text, by solving the differential equation of their Pauli weights.

\subsection{The Measurement channel and the linear SDE}

\subsubsection{The measurement channel}

We next examine the generic weak measurement protocol of \cref{sec-weak-measurement-protocol} in the context of qubit systems. We consider a system of $n$ qubits, initialized in a state $\rho_0$. The forward diffusion is implemented by applying a sequence of ancilla-assisted weak measurements. At each infinitesimal time step $\delta t$, a single-qubit observable $O_{ t}$ is measured once on a qubit $j_t$ (either randomly chosen or consecutively). The observable is selected from the set of single-qubit Pauli operators $\{\,\sigma_{x,j},\sigma_{y,j}, \sigma_{z,j} \mid j=1,\dots,n\,\}$ or their linear superpositions, ensuring $O_{ t}^2 = I$. The corresponding Kraus operator is given by:
\begin{equation}
\label{eq:K_delt2}
K_{\delta t}(O_t,\delta o_t)
\;=\;
\frac{1}{\sqrt{2}}\,
\exp\!\Big(-\gamma\,\delta t +2\gamma\,O_t\,\delta o_t\Big)\ ,\qquad \delta o_t= \frac{1}{2}\sqrt{\frac{\delta t}{\gamma}} o_t\ ,\qquad o_t=\pm 1\ .
\end{equation}

In the following derivations of this appendix, for convenience, we do not distinguish between the discrete time $\delta t$ and $dt$, and we will denote \cref{eq:K_delt} as
\begin{equation}\label{seq-Kdt-tilde}
K_{dt}(O_t, do_t)=\frac{1}{\sqrt{2}}\exp\left(-\gamma dt+ o_t \sqrt{\gamma dt} O_{j_{t, t}}\right) \approx \frac{1}{\sqrt{2}} \left[ \left(1 - \frac{\gamma dt}{2}\right) I + o_t \sqrt{\gamma dt} O_{t} \right]\ .
\end{equation}
Note that $K_{dt}(O_t, do_t)$ differs from a continuous-time Kraus increment in normalization: a continuous-time formulation accumulates many weak-measurement substeps within each $dt$ (so the accumulated readout $do_t$ can be treated as continuous), whereas here in \cref{seq-Kdt-tilde}, for simplicity, we assume a single weak-measurement substep per $dt$ with discrete outcomes $do_t=\frac{1}{2}\sqrt{\frac{d t}{\gamma}}$.

For a sequence of consecutive weak measurements on the $n$-qubit system from time $t=0$ to $t=T$ ($t=kdt$, $1\le k\le N_T$, $N_T=T/dt$), we define a trajectory $\mathcal{O}$ as the sequence of measured observables $O_{ t}$ and their corresponding outcomes $o_t\in\{\pm 1\}$ at each time step:
\begin{equation}
\mathcal{O}=\left\{(O_{ t}, o_t) \mid 0 \leqslant t \leqslant T\right\}
\end{equation}

The associated Kraus operator up to time $t$
along the trajectory $\mathcal{O}$ is
\begin{equation}
\begin{split}
&K_t(\mathcal{O})=\mathcal{T}\prod_{t^{\prime}=0}^t K_{d t}\left(O_t^{\prime}, do_t^{\prime}\right)\ ,
\end{split}
\end{equation}
where $\mathcal{T}$ denotes the time-ordering operator, and $K_{dt}$ is defined in \cref{seq-Kdt-tilde}. At time $t$, the system state $\rho_t$ conditional on trajectory $\mathcal{O}$ is given by
\begin{equation}\label{eq-rho_t}
\rho_t(\mathcal{O})=\frac{\widetilde{\rho}_t(\mathcal{O})}{\operatorname{Tr}\left[\widetilde{\rho}_t(\mathcal{O})\right]} \ ,  \quad \widetilde{\rho}_t(\mathcal{O})=2^{N_t}K_t(\mathcal{O}) \rho_0 K_t^\dagger\left(\mathcal{O}\right)\ ,\qquad N_t=\frac{t}{dt}\ .
\end{equation}
where $\rho_0$ is the initial state at $t=0$, and we have defined a tilded state $\widetilde{\rho}_t(\mathcal{O})$ with unnormalized trace. The trace of $\widetilde{\rho}_t(\mathcal{O})$ has the physical meaning of being the probability for trajectory $\mathcal{O}$ to occur:
\begin{equation}\label{eq-pt}
p_t(\mathcal{O})=\frac{1}{2^{N_t}}\operatorname{Tr}\left[\widetilde{\rho}_t(\mathcal{O})\right]\ .
\end{equation}
As defined in the main text, the measurement channel $\mathcal{F}_t$ is the average of the final states of all trajectories respecting their probabilities:
\begin{equation}\label{eq-m-channel}
\begin{split}
\mathcal{F}_t(\rho_0)& = \sum_{\mathcal{O}} p_t(\mathcal{O}) \, \rho_t(\mathcal{O}) =\frac{1}{2^{N_t}}\sum_{\mathcal{O}} \widetilde{\rho}_t(\mathcal{O})=  \frac{1}{2^{N_t}}\sum_{\mathcal{O}} K_t(\mathcal{O}) \, \rho_0 \, K_t^\dagger(\mathcal{O})\ .
\end{split}
\end{equation}
Particularly, in the final expression, all the $2^{N_t}$ trajectories $\mathcal{O}$ is summed at equal weight, as their probability $p_t(\mathcal{O})$ is readily included in the trace of $\widetilde{\rho}_t(\mathcal{O})$. Therefore, we say that the unnormalized states $\widetilde{\rho}_t(\mathcal{O})$ of different trajectories $\mathcal{O}$ effectively occur at equal probability.

\subsubsection{The linear SDE}

The unnormalized state $\widetilde{\rho}_t(\mathcal{O})$ defined in \cref{eq-rho_t} plays an important role in the above quantum channels, so it is useful to take a closer look at it. Its evolution under the sequence of weak measurements is governed by a linear SDE \cite{Jacobs_2006}:
\begin{equation}\label{eq-lSDE}
d \widetilde{\rho}_t=-\frac{\gamma}{2}[O_{t},[O_{t}, \widetilde{\rho}_t]] d t+\sqrt{\gamma}\{O_{t}, \widetilde{\rho}_t\} d\widetilde{w}_t\ ,\qquad d\widetilde{w}_t=o_t\sqrt{dt}=2\sqrt{\gamma}\dd o_t\ ,
\end{equation}
which can be derived from the expression of Kraus operators $K_{dt}$, and $d\widetilde{w}_t$ is the stocastic noise variable. Note that for later convenience, we choose to rescale the variable $\dd o_t$ into $d\widetilde{w}_t$ according to the above here. An important difference compared to the normalized state $\rho_t$, since each unnormalized state $\widetilde{\rho}_t$ occurs at equal probability (\cref{eq-m-channel}), $d\widetilde{w}_t$ in the linear SDE \cref{eq-lSDE} for $\widetilde{\rho}_t$ has a probability distribution with first and second moment
\begin{equation}\label{seq-dwt-moments}
\mathbb{E}[d\widetilde{w}_t]=0\ ,\qquad \mathbb{E}[d\widetilde{w}_t^2]=dt\ .
\end{equation}
In the limit $dt\rightarrow 0$, it is sufficient to keep all the calculations up to order $dt\sim d\widetilde{w}_t^2$, so all probability distributions of $d\widetilde{w}_t$ with the same first and second moments are equivalent for the SDE. Thus, we can simply set $d\widetilde{w}_t$ as the standard random Wiener noise obeying a Gaussian distribution satisfying \cref{seq-dwt-moments}.

in terms of $d\widetilde{w}_t$, the infinitesimal Kraus operator in \cref{seq-Kdt-tilde} can be rewritten as
\begin{equation}\label{seq-Kdt-dw}
K_{d t}\left(t, d\widetilde{w}_t\right)=\frac{1}{\sqrt{2}}\exp\left(-\gamma dt+ \sqrt{\gamma } O_{t}d\widetilde{w}_t\right)\ ,
\end{equation}
which will be used hereafter in quantum channel calculations. 

\subsubsection{Solving the measurement channel}

As the equal probability average over all $\widetilde{\rho}_t(\mathcal{O})$, the measurement channel satisfies a differential equation being the average of the linear SDE in \cref{eq-lSDE}, namely, for infinitesimal evolution:
\begin{equation}
d\rho_t=\mathcal{F}_{dt}(\rho_t)-\rho_t=-\frac{\gamma}{2}[O_{t},[O_{t}, \rho_t]] d t\ .
\end{equation}
Assume $O_t$ acts on qubit $j_t$. By averaging $O_{t}$ over all possible single-qubit Pauli operators ($\sigma_{j_t,x}=X_j,\sigma_{j_t,y}=Y_j,\sigma_{j_t,z}=Z_j$ each with $1/3$ probability) on qubit $j_t$, we find
\begin{equation}
d\rho_t=\mathcal{F}_{dt}(\rho_t)-\rho_t=-\frac{\gamma}{6}\sum_{\mu=x,y,z}[\sigma_{j_t,\mu},[\sigma_{j_t,\mu}, \rho_t]] d t\ .
\end{equation}
Particularly, this shows that $\mathcal{F}_{dt}$ maps a Pauli string operator $P$ to itself, and thus the measurement channel is diagonal in the Pauli basis. More explicitly, we can expand the density matrix in the Pauli basis as
\begin{equation}
\rho_t = \frac{1}{2^n} \sum_i z_{t,i} P_i=\frac{1}{2^n}\boldsymbol{z}_t\cdot\boldsymbol{P}\ ,\qquad \rho_{t+dt}=\mathcal{F}_{dt}(\rho_t)\ ,
\end{equation}
and the above differential equation implies
\begin{equation}
dz_{t,i}=-\frac{\gamma dt}{6}z_{t,i}\sum_{\mu=x,y,z}[\sigma_{j_t,\mu},[\sigma_{j_t,\mu}, P_i]]=-\frac{4\gamma dt}{3}z_{t,i}\delta_{|P_i|,j_t}\ ,
\end{equation}
where $|P|$ denotes the support of Pauli operator $P$, while we define $\delta_{|P|,j_t}=1$ if qubit $j_t\in|P|$, and $\delta_{|P|,j_t}=0$ otherwise. Solving the above equation gives the following solution:
\begin{equation}\label{seq-measurement-pauli-weight1}
z_{t,i}=w_{\mathcal{F}_t}(P_i)z_{0,i}\ ,\qquad w_{\mathcal{F}_t}(P)=\exp\left(-\frac{4\gamma}{3}\sum_{j\in|P|}t^{(j)}\right)\ ,
\end{equation}
where we have defined $t^{(j)}$ as the total length of time that the $j$-th qubit is measured (in other words, the $j$-th qubit is measured for $t^{(j)}/dt$ times), which satisfy $\sum_{j=1}^n t^{(j)}=t$. The coefficient $w_{\mathcal{F}_t}(P)$ is called the \emph{Pauli weight} of the measurement channel, which is defined as
\begin{equation}
w_{\mathcal{F}_t}(P)=\frac{\text{Tr}[P\mathcal{F}_t(P)]}{\text{Tr}{\id}}\ ,\qquad \rightarrow\qquad \mathcal{F}_t(P)=w_{\mathcal{F}_t}(P)P\ .
\end{equation}
Thus, the action of the measurement channel on a density matrix $\rho_0 = \frac{1}{2^n} \sum_i z_{0,i} P_i$ is
\begin{equation}
\mathcal{F}_t\left(\frac{1}{2^n} \sum_i z_{0,i} P_i\right)=\frac{1}{2^n} \sum_i w_{\mathcal{F}_t}(P)z_{0,i} P_i\ .
\end{equation}
For weak measurements either uniformly randomly on a qubit, or sequentially on each of the $n$ qubits, in the long time limit $t/dt\gg n$, each qubit is on average measured for a total length of time $t^{(j)}=t/n$. Thus, the Pauli weight in \cref{seq-measurement-pauli-weight1} becomes
\begin{equation}\label{seq-measurement-pauli-weight2}
w_{\mathcal{F}_t}(P)=w_{\mathcal{F},m}(t)=\exp\left(-\frac{4\gamma}{3n}mt\right)\ ,\qquad m=|P|\ ,
\end{equation}
where $m=|P|$ denotes the support size of the Pauli string operator $P$.


\subsection{The Measurement-and-prepare channel}

\subsubsection{Definition of the channel}

We now turn to the measurement-and-prepare channel, which is essential for our method to extract weak measurement shadow tomography from the outcomes gathered through weak measurements. 

We define the following unnormalized classical snapshot state ($\text{Tr}[\sigma_t(\mathcal{O})]\neq1$):
\begin{equation}
\sigma_t(\mathcal{O})=K_t^{\dag}(\mathcal{O}) K_t(\mathcal{O}) \ ,
\end{equation}
which tends to unnormalized random pure products state when $t\rightarrow \infty$. The random single qubit measurements then form a measurement-and-prepare quantum channel $\mathcal{M}_t$ that maps the initial state $\rho_0$ into
\begin{equation}\label{eq-mp-channel}
\begin{split}
    &\mathcal{M}_t(\rho_0)=\frac{1}{2^{N_t}}\sum_{\mathcal{O}} \sigma_t(\mathcal{O}) \operatorname{Tr}\left[\sigma_t(\mathcal{O}) \rho_0\right]=\frac{1}{2^{N_t}}\sum_{\mathcal{O}} \sigma_t(\mathcal{O})p_t(\mathcal{O})=\underset{\mathcal{O}}{\mathbb{E}}\  \sigma_t(\mathcal{O})p_t(\mathcal{O})=\EE_{\mathcal{O} \sim p(\mathcal{O} \mid \bar{\rho}_0)} \sigma_t(\mathcal{O})\ ,
    \end{split}
\end{equation}
where $p_t(\mathcal{O})$ is the probability of trajectory $\mathcal{O}$ as defined in \cref{eq-pt}, $N_t=t/dt$ is the number of steps. 
The channel in \cref{eq-mp-channel} is defined not preserving the trace of $\rho_0$. The randomized measurement channel maps the underlying state $\rho$ to an ensemble of classical snapshot states $\sigma \sim p(\sigma|\rho)\sim\text{Tr}(\sigma\rho)$.

In practice, $\mathcal{M}_t(\rho_0)$ can be measured by summing over the output snapshot states $\sigma_t(\mathcal{O})$ of all the trajectories measured. By definition, each trajectory $\mathcal{O}$ occurs with probability $p_t(\mathcal{O})$, thus the resulting sum approaches $\mathcal{M}_t(\rho_0)$. Note that the output $\sigma_t(\mathcal{O})$ of the measurement-and-prepare channel can be calculated solely from the measurement outcomes which determines $K_t(\mathcal{O})$, and no information about $\rho_0$ is needed. By reversing the measurement-and-prepare channel, one can deduce $\rho_0$, or the shadow of it. Thus, it is important to derive the channel $\mathcal{M}_t$ analytically, which we shall do below.

As local scrambled channels (because of randomized single-qubit measurements invariant under local SU(2) rotations), both channels are diagonal in the Pauli basis, and we can define their Pauli weight as:
\begin{equation}
w_{\mathcal{M}_t}(P)=\frac{\text{Tr}[P\mathcal{M}_t(P)]}{\text{Tr}{\id}}\ .
\end{equation}
Accordingly, their action on Pauli operator $P$ is given by
\begin{equation}
\mathcal{M}_t(P)=w_{\mathcal{M}_t}(P) P\ .
\end{equation}
Particularly, $\text{Tr}\left[\mathcal{M}_t(\rho_0)\right]=w_{\mathcal{M}_t}(\id)$ is unnormalized. In the below, we derive the Pauli weight $w_{\mathcal{M}_t}(P)$ of the unnormalized channel.

\subsubsection{Pauli Transfer Matrix}

For simplicity, we ignore the trajectory label $\mathcal{O}$ in this subsection. We want to find a differential equation satisfied by the unnormalized Pauli weight $w_{\mathcal{M}_t}(P)$. To do this, we consider the time evolution of $n$ steps in time $ndt$, in which we assume each of the $n$ qubits is measured once. We note that the unnormalized channel in \cref{eq-mp-channel} at time $t$ and $t+ndt$ can be rewritten as
\begin{equation}
\mathcal{M}_t(\rho_0)=\underset{\sigma_t}{\mathbb{E}}\, \sigma_t \operatorname{Tr}\left[\sigma_t \rho_0\right]\ ,\qquad \mathcal{M}_{t+ndt}(\rho_0)=\underset{\mathcal{N}}{\mathbb{E}}\,\underset{\sigma_t}{\mathbb{E}}\, \mathcal{N}(\sigma_t) \operatorname{Tr}\left[\mathcal{N}(\sigma_t) \rho_0\right]\ ,
\end{equation}
where we have defined a channel $\mathcal{N}$ as
\begin{equation}
\sigma_{t+ndt}=\mathcal{N}(\sigma_t)= 2^{n}K_{ndt}^\dagger  \sigma_t K_{ndt}\ .
\end{equation}
Here $K_{ndt}$ is a random Kraus operator corresponding to $n$ steps of weak measurement, which has the form
\begin{equation}
K_{n \mathrm{d}t} = \exp \left( -n\gamma\,\mathrm{d}t + \sum_{j=1}^n \sqrt{\gamma}\, O_j\, d\widetilde{w}_j \right)
\simeq 1 + \sum_{j=1}^n \sqrt{\gamma}\, O_j\, d\widetilde{w}_j - \frac{n\gamma}{2}\,\mathrm{d}t\ ,
\end{equation}
with $O_j$ being the random Pauli operator measured on qubit $j$ ($O_j^2=1$), and $d\widetilde{w}_j=\pm \sqrt{dt}$ are random measurement outcomes.

It is clear that the channel $\mathcal{N}$ above is also locally scrambled, namely, its distribution is invariant under local SU(2) rotations of each qubit. Therefore, by \cite{akhtar2023measurementinducedcriticalitytomographicallyoptimal}, the evolution of Pauli weights from $\mathcal{M}_t$ to $\mathcal{M}_{t+ndt}$ defined above is given by a linear transformation govered by the \emph{Pauli transfer matrix} (PTM):
\begin{equation}\label{ptm}
w_{\mathcal{M}_{t+ndt}}(P) = \sum_{P'} W(P, P')\, w_{\mathcal{M}_t}(P',t)\ ,
\end{equation}
where $P$ runs over all the Pauli string operators, and the PTM is defined by
\begin{equation}\label{seq-def-PTM}
W(P, P')=\underset{\mathcal{N}}{\mathbb{E}} \left( \frac{\operatorname{Tr} \left[ P\, \mathcal{N}(P') \right]}{\operatorname{Tr}{\id}} \right)^2\ .
\end{equation}
Calculating the PTM therefore gives the differential equations satisfied by Pauli weights $w_{\mathcal{M}_t}(P)$.

\subsubsection{Deriving the Pauli weight differential equation}

For a Pauli string operator $P'$ with support on $m'=|P'|$ qubits, the action of channel $\mathcal{N}$ to linear order of $\mathrm{d}t$ reads:
\begin{equation}
\begin{aligned}
& \mathcal{N}\left(P^{\prime}\right) \simeq\left(1+\sum_{j=1}^n \sqrt{\gamma} O_j \mathrm{d}\widetilde{w}_j-\frac{\gamma}{2} n \mathrm{d}t\right) P^{\prime}\left(1+\sum_{j=1}^n \sqrt{\gamma} O_j \mathrm{d}\widetilde{w}_j-\frac{\gamma}{2} n \mathrm{d}t\right) \\
& \simeq(1-\gamma q \mathrm{d}t) P^{\prime}+\sum_{j=1}^n \sqrt{\gamma} \mathrm{d}\widetilde{w}_j \left\{O_j, P^{\prime}\right\}+\sum_{j, j^{\prime}} \gamma \mathrm{d}\widetilde{w}_j \mathrm{d}\widetilde{w}_j^{\prime} O_j P^{\prime} O_{j^{\prime}}\ ,
\end{aligned}
\end{equation}
from which we see the channel $\mathcal{N}$ can map a Pauli operator $P$ into another Pauli operator $P'$ with the support size (1) increase by $1$, or (2) invariant, or (3) decrease by $1$. Therefore, we expect the PTM $W(P, P')\neq 0$ only if $|m-m'|\le 1$, where $m=|P|$ and $|m'|=|P'|$ are the support size of Pauli operators $P$ and $P'$.

Explicitly, using the definition of PTM in \cref{seq-def-PTM}, up to linear order of $dt$ (note that $\mathrm{d}\widetilde{w}_j$ is of order $\sqrt{dt}$), and using the fact that $\mathbb{E}[\mathrm{d}\widetilde{w}_j]=0$, and $\mathbb{E}[ \mathrm{d}\widetilde{w}_j\mathrm{d}\widetilde{w}_{j'}]=dt\delta_{jj'}$, we find
\begin{equation}
\begin{aligned}
W_\mathcal{N}(P, P') &= \left(1 - \frac{8}{3} m\, \gamma\, \mathrm{d}t \right) \delta_{P, P'} + 4 \gamma\, \mathrm{d}t \sum_{j=1}^n \underset{O_j}{\mathbb{E}} \left[ \delta_{P, \frac{1}{2} \{O_j, P'\}} \right]\ .
\end{aligned}
\end{equation}
By further acting on $w_{\mathcal{M}_t}(P')$ and averaging over $O_j$ among all single-qubit Pauli operators with equal probability, we arrive at
\begin{equation}\label{seq-wP-PTM}
\begin{split}
& w_{\mathcal{M}_{t+ndt}}(P) = \sum_{P'} W(P, P')\, w_{\mathcal{M}_t}(P') \\
& =\left(1-\frac{8}{3} m \gamma \mathrm{d}t\right) w_{\mathcal{M}_t}(P)+4 \gamma \mathrm{d}t \sum_{P^{\prime}} \mathbb{E}_{\mathcal{N}\in\mathcal{F}_{\mathcal{N}}} \sum_{j=1}^n \delta_{P, \frac{\left\{O_j, P^{\prime}\right\}}{2} }w_{\mathcal{M}_t}(P')\\
&=\left(1-\frac{8}{3} m \gamma \mathrm{d}t\right) w_{\mathcal{M}_t}(P)+4 \gamma \mathrm{d}t \sum_{P^{\prime}} w_{\mathcal{M}_t}(P')\left[\frac{1}{3} \delta_{m^{\prime}, m-1} \delta_{P, O P^{\prime}}+\delta_{m^{\prime}, m+1} \delta_{O P, P^{\prime}}\right]\ ,
\end{split}
\end{equation}
where we have defined $\delta_{P,OP'}=1$ if $P=OP'$ for some single-qubit Pauli operator $O$, and $m=|P|$, $|m'|=|P'|$ are the support sizes.

The sum over $P'$ term accounts for transitions into the state $P$ from other Pauli states, in the bracket of which the first term describes a weight-increasing process, where a support-$(m - 1)$ Pauli operator $P'$ becomes $P$, while the second term describes a weight-decreasing process, where a support-$(m + 1)$ Pauli operator $P'$ becomes $P$.

Dividing the above \cref{seq-wP-PTM} by $ndt$, we arrive at a differential equation for $w_{\mathcal{M}_t}(P)$ in the continuous-time limit:
\begin{equation}\label{seq-wP-ODE}
\frac{dw_{\mathcal{M}_t}(P)}{\mathrm{d}t}=-\frac{8m}{3n} \gamma w_{\mathcal{M}_t}(P)+\frac{4 \gamma}{n} \sum_{P^{\prime}}w_{\mathcal{M}_t}(P')\left[\frac{1}{3} \delta_{m^{\prime}, m-1} \delta_{P, O P^{\prime}}+\delta_{m^{\prime}, m+1} \delta_{O P, P^{\prime}}\right].
\end{equation}
This differential equation can be further simplified. Due to the permutation symmetry among all qubits and the initial condition symmetric among all qubits, we can assure that the Pauli weight depends only on the support size $m = |P|$ of Pauli operator $P$: 
\begin{equation}
w_{\mathcal{M}_t}(P)=w_{\mathcal{M},m}(t)\ ,\qquad m=|P|\ ,
\end{equation}
and the Pauli weights satisfy a closed system of ODEs from \cref{seq-wP-ODE}:
\begin{equation}\label{seq-wm-ODE}
\frac{\mathrm{d}w_{\mathcal{M}, m}}{\mathrm{d}t} = -\frac{8 m}{3 n} \gamma\, w_{\mathcal{M}, m} + \frac{4 m}{3 n} \gamma\, w_{\mathcal{M}, m-1} + \frac{4 (n - m)}{n} \gamma\, w_{\mathcal{M}, m+1}\ ,\qquad (0\le m\le n)
\end{equation}
This set of ODEs govern the time evolution of the Pauli weights $w_{\mathcal{M},m}(t)$ ($0\le m\le n$) for the weak measurement-and-prepare channel, with boundary conditions $w_{\mathcal{M},-1} = w_{\mathcal{M},n+1} = 0$, and initial conditions
\begin{equation}
w_{\mathcal{M},m}(0)=\delta_{m,0}\ ,
\end{equation}
as is evident from the channel definition.

\subsubsection{Analytical solution of the Pauli weights}

The coupled linear ODEs in \cref{seq-wm-ODE} can be compactly expressed in matrix form as
\begin{equation}
\frac{\mathrm{d}\boldsymbol{w}_{\mathcal{M}}}{\mathrm{d}t} = D\  \boldsymbol{w}_{\mathcal{M}}\ ,\qquad \boldsymbol{w}_{\mathcal{M}}=(w_{\mathcal{M},0},w_{\mathcal{M},1},\cdots,w_{\mathcal{M},n})^T\ ,
\end{equation}
where $\boldsymbol{w}_{\mathcal{M}}$ is the column vector of Pauli weights across all support sizes $0\le m\le n$. The evolution is governed by the $(n + 1) \times (n + 1)$ tridiagonal matrix $D$, the nonzero entries of which are given by:
\begin{equation}
D_{m,m'} = 
\begin{cases}
-\dfrac{8}{3} \dfrac{m}{n} \gamma, & \text{if } m' = m \\[2ex]
\ \ \dfrac{4}{3} \dfrac{m}{n} \gamma,  & \text{if } m' = m - 1 \\[2ex]
\dfrac{4(n - m)}{n} \gamma,         & \text{if } m' = m + 1 \\[2ex]
0,                                & \text{otherwise}
\end{cases}
\end{equation}


The matrix $D$ is non-Hermitian, the eigenvalues and eigenvectors of which are typically complicated to obtain. However, due to its special form, an analytical solution can be obtained by mapping the problem to a well-understood physical system: a quantum spin in a magnetic field. This mapping is realized through a similarity transformation $D' = V D V^{-1}$, which converts $D$ into a Hermitian matrix without changing its eigenvalues.

More generically, we consider a tridiagonal matrix $D$ of the form with nonzero entries $D_{m,m} = c_1 m + d_0$, $D_{m,m+1} = c_2 (n - m)$, and $D_{m,m-1} = c_0 m$. We define a diagonal matrix $V$ with elements
\[
V_{m,m'} = \delta_{m,m'} \, \frac{\left( \sqrt{c_2 / c_0} \right)^m}{\sqrt{m! (n - m)!}}\ .
\]
A straightforward calculation of the similarity transformation $D' = V D V^{-1}$ yields the following matrix elements:
\begin{align}
D'_{m,m'}=(V D V^{-1})_{m,m'} =
\begin{cases}
\sqrt{c_0 c_2} \sqrt{(m + 1)(n - m)}, & \text{if } m' = m + 1 \\[1ex]
\sqrt{c_0 c_2} \sqrt{m(n - m + 1)}, & \text{if } m' = m - 1 \\[1ex]
c_1 m + d_0, & \text{if } m' = m \\[1ex]
0, & \text{otherwise}
\end{cases}
\end{align}

These matrix elements are structurally identical to those of spin operators in a spin-$s$ system, where the total number of qubits is $n = 2s$, and the weight index $m$ is related to the $z$-component of spin via $m = s + s_z$. In particular, the off-diagonal elements correspond to the matrix elements of the spin raising and lowering operators $S_+$ and $S_-$, while the diagonal term matches $S_z$. As a result, the transformed matrix $D'$ can be written as a linear combination of spin-$s$ operators:
\begin{equation}\label{spin_op}
D' = V D V^{-1} = (c_1 s + d_0)\, I + c_1 S_z + 2 \sqrt{c_0 c_2}\, S_x\ ,
\end{equation}
which corresponds to the Hamiltonian of a spin-$s$ particle in an effective magnetic field $\boldsymbol{B}=- (2 \sqrt{c_0 c_2},\, 0,\, c_1)$. Particularly, it is known that the eigenvalues of such a Hamiltonian are equally spaced, given by

\begin{equation}
\lambda_p=p\Lambda_1+(n-p)\Lambda_2\ ,\qquad  \Lambda_{1,2}=\frac{1}{2}\left(c_1 \pm \sqrt{c_1^2+4 c_0 c_2}\right)\ .
\end{equation}

In our case here, the constants are given by
\begin{equation}
c_0=\frac{4\gamma}{3n}\ ,\quad c_1=-\frac{8\gamma}{3n}\ ,\quad c_2=\frac{4\gamma}{n}\ ,\quad d_0=0\ .
\end{equation}
Substituting the coefficients from the table, we obtain the exact $p$-th eigenvalue $\lambda_p$ of the evolution matrix $D$:
\begin{equation}
\lambda_p = \frac{4\gamma}{3} - \frac{4\gamma}{n} p \qquad \text{for } p = 0, 1, \ldots, n\ .
\end{equation}
The largest eigenvalue corresponds to $p = 0$, giving $\lambda_0 = \frac{4\gamma}{3}$. The general solution for the Pauli weight vector is a linear combination of the corresponding eigenvectors $\boldsymbol{v}^{(p)}$ of $D$:
\begin{equation}
\widetilde{\boldsymbol{w}}_{\mathcal{M}}(t) = \sum_{p=0}^n c_p\, e^{\lambda_p t} \boldsymbol{v}^{(p)},
\end{equation}
where the coefficients $c_p$ are determined by the initial condition $\boldsymbol{w}_{\mathcal{M}}(0) = (1, 0, \ldots, 0)^T$. After solving for the coefficients, one arrives at the solution:
\begin{equation}
w_{\mathcal{M},m}(t)=w_{\mathcal{M},0}(t) \left( \frac{1 - \ee^{-\frac{16\gamma t}{3n}}}{3 + \ee^{-\frac{16\gamma t}{3n}}} \right)^m\ ,\qquad w_{\mathcal{M},0}(t)=w_{\mathcal{M}_t}(\id)=\left(3\,\ee^{\frac{4\gamma t}{3n}}+\ee^{-\frac{4\gamma t}{n}}\right)^n\ .
\end{equation}
We can define the Pauli weight of the channel after normalization as
\begin{equation}
\widetilde{w}_{\mathcal{M},m}(t)=\frac{w_{\mathcal{M},m}(t)}{w_{\mathcal{M},0}(t)}=\left( \frac{1 - \ee^{-\frac{16\gamma t}{3n}}}{3 + \ee^{-\frac{16\gamma t}{3n}}} \right)^m\ ,\qquad \rightarrow \qquad \widetilde{w}_{\mathcal{M},m}(t)=\big(\widetilde{w}_{\mathcal{M},1}(t)\big)^m\ .
\end{equation}
In the long-time limit $t \to \infty$, one has
\begin{equation}
\widetilde{w}_{\mathcal{M},m}(t\rightarrow\infty)=\frac{w_{\mathcal{M},m}(t\rightarrow\infty)}{w_{\mathcal{M},0}(t\rightarrow\infty)}=\frac{1}{3^m}\ ,
\end{equation}
This agrees with the fact that $\sigma_t/\mathrm{Tr}(\sigma_t)=K_t^{\dagger} K_t/\text{Tr}(K_t^{\dagger} K_t)$ is a product state. It is a product state of mixed state unless $t\rightarrow \infty$, when it becomes a pure product state. 

\cref{Pauli_weight_m} shows the Pauli weight $\tilde{w}_{\sigma,m}(t)$ at finite time $t$ calculated analytically from the above (see caption for detailed descriptions).

\begin{figure}[htbp]
\begin{center}
\includegraphics[width=240pt]{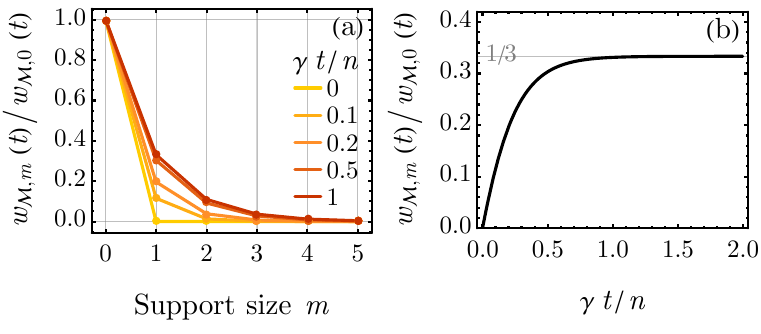}
\caption{
Rescaled Pauli weight $w_{\mathcal{M},m}(t)/w_{\mathcal{M},0}(t)$ (a) as a function of  the support size $m$ for different fixed time $t$ (see legend), and (b) as a function of time $t$, which approaches $1/3$ as $t\rightarrow\infty$.}
\label{Pauli_weight_m}
\end{center}
\end{figure}


\subsection{Error and shadow norm}

We recover the initial state $\rho_0$ by defining shadow of the measured trajectories $\mathcal{O}$:
\begin{equation}\label{seq:shadow-C}
\hat{\rho}_{0,\mathcal{O}}=\mathcal{M}_t^{-1}(\sigma_t(\mathcal{O}))\ .
\end{equation}
It is straightforward to show that the mean value of the shadow of various trajectories $\mathcal{O}$ from the same initial state $\rho_0$ is the initial state $\rho_0$:
\begin{equation}
\underset{p_t(\mathcal{O})}{\mathbb{E}}\ \hat{\rho}_{0,\mathcal{O}}=\underset{\mathcal{O}}{\mathbb{E}}\ p_t(\mathcal{O})\mathcal{M}_t^{-1}(\sigma_t(\mathcal{O})) = \mathcal{M}_t^{-1}\Big[ \underset{\mathcal{O}}{\mathbb{E}}\ p_t(\mathcal{O})\sigma_t(\mathcal{O})\Big] =\mathcal{M}_t^{-1} \Big[\mathcal{M}_t(\rho_0)\Big]=\rho_0\ .
\end{equation}
This is the key of our method.

We now examine the variance of an operator $O$ being measured from the shadow. We define
\begin{equation}
\hat{o}=\text{Tr}(\hat{\rho}_{0,\mathcal{O}}O)\ ,\qquad \rightarrow\qquad \mathbb{E}\ [\hat{o}]=\text{Tr}(\rho_0 O)\ .
\end{equation}
Its variance is given by
\begin{equation}
\begin{split}
\text{Var}[\hat{o}]&=\mathbb{E}[(\text{Tr}(\hat{\rho}_{0,\mathcal{O}}O))^2]-(\text{Tr}(\rho_0O))^2 \le \mathbb{E}[(\text{Tr}(\hat{\rho}_{0,\mathcal{O}}O))^2] =\mathbb{E}[(\text{Tr}(\mathcal{M}_t^{-1}(O)\sigma_t(\mathcal{O}))^2]\\
&=2^{N_t}\underset{\mathcal{O}}{\mathbb{E}}\ p_t(\mathcal{O})(\text{Tr}(\mathcal{M}_t^{-1}(O)\sigma_t(\mathcal{O}))^2\ .
\end{split}
\end{equation}
Note that
\begin{equation}
p_t(\mathcal{O})=\text{Tr}(\sigma_t(\mathcal{O})\rho_0)\sim\frac{\text{Tr}(\sigma_t(\mathcal{O}))}{\text{Tr}{\id}}\ ,\qquad \sigma_t(\mathcal{O})=K_t^{\dag}(\mathcal{O}) K_t(\mathcal{O})\ ,
\end{equation}
we can estimate the Variance of $\hat{o}$ as
\begin{equation}
\text{Var}[\hat{o}]\le \underset{\mathcal{O}}{\mathbb{E}}\ \frac{2^{N_t}\text{Tr}(\sigma_t(\mathcal{O}))}{\text{Tr}{\id}}(\text{Tr}(\mathcal{M}_t^{-1}(O)\sigma_t(\mathcal{O}))^2=\text{Tr}\Big[\mathcal{M}_t^{-1}(O) \underset{\mathcal{O}}{\mathbb{E}}\ \frac{2^{N_t}\text{Tr}(\sigma_t(\mathcal{O}))}{\text{Tr}{\id}}\sigma_t(\mathcal{O}) \text{Tr}(\mathcal{M}_t^{-1}(O)\sigma_t(\mathcal{O}))\Big]\ .
\end{equation}
If we estimate $2^{N_t}\text{Tr}(\sigma_t(\mathcal{O}))\simeq C_t$ as a constant, we approximately have
\begin{equation}
\begin{split}
\text{Var}[\hat{o}]&\le \frac{C_t}{\text{Tr}{\id}}\text{Tr}\Big[\mathcal{M}_t^{-1}(O) \underset{\mathcal{O}}{\mathbb{E}}\ \sigma_t(\mathcal{O})\text{Tr}(\mathcal{M}_t^{-1}(O)\sigma_t(\mathcal{O}))\Big]=\frac{C_t}{\text{Tr}{\id}}\text{Tr}\Big[\mathcal{M}_t^{-1}(O) \mathcal{M}_t(\mathcal{M}_t^{-1}(O))\Big]=\frac{C_t}{\text{Tr}{\id}}\text{Tr}\Big[\mathcal{M}_t^{-1}(O) O\Big]\ .
\end{split}
\end{equation}
The constant $C_t$ can be fixed by noting that the above upper bound, which is the expectation value of the square of operator $O$, should be $1$ if $O=\id$ is the identity:
\begin{equation}
\frac{C_t}{\text{Tr}{\id}}\text{Tr}\Big[\mathcal{M}_t^{-1}(\id)\Big]=1\ ,\qquad \rightarrow \qquad \frac{C_t}{\text{Tr}{\id}}=\frac{1}{\text{Tr}\Big[\mathcal{M}_t^{-1}(\id)\Big]}\ .
\end{equation}
This gives the variance of each shadow:
\begin{equation}
\text{Var}[\hat{o}]\le \|O\|^2_\text{sh}\ ,
\end{equation}
where we have defined the shadow norm of operator $O$ as
\begin{equation}
\|O\|^2_\text{sh}=\frac{\text{Tr}\Big[\mathcal{M}_t^{-1}(O) O\Big]}{\text{Tr}\Big[\mathcal{M}_t^{-1}(\id)\Big]}\ .
\end{equation}
For Pauli operators $P$ with support size $m=|P|$, the shadow norm is
\begin{equation}
\|P\|^2_\text{sh}=\frac{\text{Tr}\Big[\mathcal{M}_t^{-1}(P) P\Big]}{\text{Tr}{\id}}= \frac{w_{\mathcal{M}_t}(\id)}{w_{\mathcal{M}_t}(P)} \qquad\rightarrow 3^m\qquad \text{as}\qquad t\rightarrow\infty\ .
\end{equation}

For $M$ shadows, the total variance will be $1/M$ factor smaller:
\begin{equation}
\text{Var}[\hat{o}]\le \frac{\|O\|^2_\text{sh}}{M}\ .
\end{equation}

\subsection{Efficient calculation of the shadow}

The shadow in \cref{seq:shadow-C} can be calculated efficiently due to the nice properties we showed earlier. First, recall that the Kraus operator $K_t(\mathcal{O})$ associated with trajectory $\mathcal{O}$ is a product of Kraus operators over each individual qubits:
\begin{equation}
\begin{split}
&K_t(\mathcal{O})=\mathcal{T}\prod_{t^{\prime}=0}^t K_{d t}\left(O_t^{\prime}, d o_t^{\prime}\right)=\otimes_{j=1}^n K_t^{(j)},\quad K_t^{(j)}=\mathcal{T}\prod_{j_{t^{\prime}}=j} K_{d t}\left(O_t^{\prime}, d o_t^{\prime}\right)\ ,\quad K_{dt}(O_t, d o_t)=\frac{1}{\sqrt{2}}\exp\left(-\gamma dt+ o_t \sqrt{\gamma dt} O_{t}\right)
\end{split}
\end{equation}
where $\mathcal{T}$ denotes the time-ordering operator, and each Kraus operator $K_t^{(j)}$ is the accumulated contributions of measurements of observables
$O_{t}$ acting on the $j_t=j$-th qubit. Note that each $K_t^{(j)}$ is simply a $2\times 2$ matrix acting on the $j$-th qubit.

Therefore, $\sigma_t(\mathcal{O})$ can be reexpressed in a tensor product form:
\begin{equation}
\sigma_t(\mathcal{O})=\otimes_{j=1}^n \sigma_t^{(j)}(\mathcal{O})\ ,\qquad \sigma_t^{(j)}(\mathcal{O})=K_t^{(j)\dagger}(\mathcal{O}) K_t^{(j)}(\mathcal{O})\ ,
\end{equation}
where each $\sigma_t^{(j)}(\mathcal{O})$ is simply a $2\times 2$ matrix. 

Then, because the channel $\mathcal{M}_t$ has a Pauli weight $w_{\sigma, m}(t) \equiv\left(w_{\sigma, 1}(t)\right)^m$, the action of $\mathcal{M}_t$ and $\mathcal{M}_t^{-1}$ simply decomposes into the action on each qubit. If we decompose $\sigma_t^{(j)}(\mathcal{O})$ in single-qubit Pauli basis as
\begin{equation}
\sigma_t^{(j)}(\mathcal{O})=\kappa_{0}^{(j)}I_j+\kappa_{0}^{(j)}X_j+\kappa_{0}^{(j)}Y_j+\kappa_{0}^{(j)}Z_j\ ,
\end{equation}
then we can define the single-qubit shadow (which is a $2\times 2$ matrix)
\begin{equation}
\hat{\rho}_{0,\mathcal{O}}^{(j)}=\mathcal{M}_t^{-1}(\sigma_t^{(j)})=\kappa_{0}^{(j)}I_j+\frac{1}{w_{\sigma,1}(t)}\Big(\kappa_{0}^{(j)}X_j+\kappa_{0}^{(j)}Y_j+\kappa_{0}^{(j)}Z_j\Big)\ ,
\end{equation}
and the shadow of the total system is simply the tensor product:
\begin{equation}
\hat{\rho}_{0,\mathcal{O}}=\otimes_{j=1}^n
\hat{\rho}_{0,\mathcal{O}}^{(j)}\ .
\end{equation}

\section{From measurement channel and reverse Petz channel to quantum diffusion}

In this appendix, we show how the weak measurement channel and its Petz recovery channel correspond to the partial differential equations (PDEs) of the quantum diffusion picture of our weak measurements. The key observation is that the density matrix after the quantum channels can be understood as a quantum generalization of the probability distribution in diffusion problem, which satisfies the PDEs of the diffusion.

\subsection{The measurement channel and the forward diffusion}

We first focus on the forward evolution of the system by a sequence of weak measurements. 

We consider a system of $n$ qubits. As introduced in the main text, the measurement channel $\mathcal{F}_t$ describes the averaged state of all the weak measurement trajectories $\mathcal{O}$. Earlier, we have derived the analytical expression of the measurement channel. For a density matrix $\rho_0=\frac{1}{2^n}\sum_i z_i P_i$, where $P_i$ runs over all the Pauli string operators
\begin{equation}\label{seq:m-channel}
\mathcal{F}_t(\rho_0)=\frac{1}{2^n}\sum_i w_{\mathcal{F}_t}(P_i)(t)z_i P_i\ ,\qquad w_{\mathcal{F}_t}(P)=w_{\mathcal{F},m}(t)=\exp\left(-\frac{4\gamma}{3n}m t\right)\ ,\qquad m=|P|\ .
\end{equation}
Here $m=|P|$ is the support size of Pauli operator $P$. Note that the Pauli weight $w_{\mathcal{F}_t}(P)=w_{\mathcal{F},m}(t)$ only depends on the support size $m$ of operator $P$.

We consider the measurement channel evolved state $\rho_t=\mathcal{F}_t(\rho_0)$ of some initial state (mixed or pure) $\rho_0$. In practice, the random weak measurement along each trajectory $\mathcal{O}$ leads to a random pure product state, since the single-qubit weak measurements eventually collapse each qubit to a single-qubit pure state. This motivates us to define the following  pure product coherent state basis:
\begin{equation}
|\boldsymbol{n}\rangle\langle\boldsymbol{n}|=\prod_{j=1}^n \left(\frac{I_j+\boldsymbol{n}_j\cdot\boldsymbol{\sigma}_j}{2}\right)\ ,\qquad \boldsymbol{n}=(\boldsymbol{n}_1,\boldsymbol{n}_2,\cdots,\boldsymbol{n}_n)\ ,\quad |\boldsymbol{n}_j|=1\ ,\quad \boldsymbol{n}_j=(n_{j,x},n_{j,y},n_{j,z})\ ,
\end{equation}
where $I_j$ is the identity matrix of the $j$-th qubit, and $\boldsymbol{\sigma}_j=(\sigma_{j,x},\sigma_{j,y},\sigma_{j,z})$ of the $j$-th qubit. 
As defined, $\boldsymbol{n}_j$ is a unit vector living on the Bloch sphere of the $j$-th qubit ($1\le j\le n$). the coherent states $|\boldsymbol{n}\rangle$ form an overcomplete basis of the system, satisfying 
\begin{equation}
\int d\boldsymbol{n}|\boldsymbol{n}\rangle\langle\boldsymbol{n}|=\frac{I}{2^n}\ ,\qquad \int d\boldsymbol{n}=\text{Tr}\left(\frac{I}{2^n}\right)=1\ ,
\end{equation}
where $d\boldsymbol{n}$ is the Haar measure in the space of $\boldsymbol{n}$, and $I$ is the identity operator. Generically, the density matrix $\rho_t$ can be expanded in terms of the coherent state projector basis as:
\begin{equation}\label{eq-rho-Prep}
\rho_t=\int d\boldsymbol{n}p(\boldsymbol{n},t)|\boldsymbol{n}\rangle\langle\boldsymbol{n}|\ ,\qquad \int d\boldsymbol{n}p(\boldsymbol{n},t)=1\ ,
\end{equation}
which is known as the Glauber-Sudarshan P-representation, and $p(\boldsymbol{n},t)$ can be called the quasi-probability. For Hermitian operator $\rho_t$, $p(\boldsymbol{n},t)$ is real. We also note that the choice of $p(\boldsymbol{n},t)$ is not unique due to the overcompleteness of coherent state basis.

Generically $p(\boldsymbol{n},t)$ is not necessarily positive. Our discussion below does not require $p(\boldsymbol{n},t)$ to be non-negative; but we will show $p(\boldsymbol{n},t)$ is similar to the probability distribution in the classical diffusion problem. 

\emph{Physical Meaning and relation to previous considerations}. If all the $p(\boldsymbol{n},t)\ge0$, then $\rho_t$ has no entanglement between qubits, but only has classical correlations. In this case, $\rho_t$ can be exactly understood as an esemble of random pure random product states $|\boldsymbol{n}\rangle\langle\boldsymbol{n}|$ with probability distribution $p(\boldsymbol{n},t)$, which are evolved from an initial state esemble of random pure product states with probability distribution $p(\boldsymbol{n},0)$. We note that for generic states, we have been denoting the probability distribution of states as $p(\boldsymbol{z},t)$, in terms of variable $\boldsymbol{z}$ which has components $z_P$ being the coefficients of Pauli matrices $P$ in the density matrix $\rho=\frac{1}{2^n}\sum_i z_i P_i$ (see \cref{seq:m-channel}). When the initial states of our setup are restricted to random pure product states $|\boldsymbol{n}\rangle$, they will remain random pure product states during the weak measurement evolution; their corresponding variable $\boldsymbol{z}$ is a function of $\boldsymbol{n}$ determined by $|\boldsymbol{n}\rangle\langle \boldsymbol{n}|=\frac{1}{2^n}\sum_i z_i P_i$, and by changing of variables we can re-express $p(\boldsymbol{z},t)$ as $p(\boldsymbol{n},t)$. Accordingly, the nonlinear SDE for variable $\boldsymbol{z}$ becomes a nonlinear SDE for variable $\boldsymbol{n}$ on the Bloch spheres.

\subsubsection{The single-qubit example}

We first consider the example of a single-qubit system, for which $\boldsymbol{n}=\boldsymbol{n}_1$. In this case, the state $\rho_t=\mathcal{F}_t(\rho_0)$ at time $t$ is given by
\begin{equation}\label{eq-rho-Prep-v1}
\rho_t=\int d\boldsymbol{n}_1 p(\boldsymbol{n}_1,t)\frac{I_1+\boldsymbol{n}_1\cdot \boldsymbol{\sigma}_1}{2}\ .
\end{equation}
By \cref{seq:m-channel}, we have $p(\boldsymbol{n}_1,t)$ satisfying
\begin{equation}\label{eq:p-t-dependence}
\int d\boldsymbol{n}_1 p(\boldsymbol{n}_1,t)=1\ ,\qquad \int d\boldsymbol{n}_1 p(\boldsymbol{n}_1,t)\boldsymbol{n}_1= e^{-\frac{4\gamma}{3}t} \int d\boldsymbol{n}_1 p(\boldsymbol{n}_1,0)\boldsymbol{n}_1\ .
\end{equation}

We now show that this correspond to a diffusion problem of quasi-probability distribution $p(\boldsymbol{n}_1,t)$. As we have explained, $\rho_t$ in the form of \cref{eq-rho-Prep-v1} can be understood as the average of an ensemble of random pure single-qubit states $|\boldsymbol{n}_1\rangle$ with a ``probability" distribution $p(\boldsymbol{n}_1,t)$. Particularly, $\rho_0$ corresponds to an esemble of initial states $|\boldsymbol{n}_1\rangle$ with a ``probability" distribution $p(\boldsymbol{n}_1,0)$. In this setup, the nonlinear SDE in terms of variable $\boldsymbol{z}$, when transformed into variable $\boldsymbol{n}_1$, gives a diffusion PDE for the ``probability" distribution $p(\boldsymbol{n}_1,t)$ restrited on the Bloch sphere:
\begin{equation}\label{eq:diffusion-S2}
\partial_t p(\boldsymbol{n}_1,t)=D\nabla_\perp^2 p(\boldsymbol{n}_1,t)\ ,
\end{equation}
where $\nabla_\perp^2$ is the Laplacian $\nabla_{\boldsymbol{n}_1}^2$ restricted on the Bloch sphere $|\boldsymbol{n}_1|=1$, and $D$ is the diffusion constant which can be derived from the nonlinear SDE restricted on Bloch sphere. Here we shall determine $D$ by matching the evolution of the measurement channel. Note that the PDE in \cref{eq:diffusion-S2} has no drifting force, since the diffusion is spherical symmetric on the Bloch sphere. (The drifting force $\boldsymbol{f}$ in the nonlinear SDE for generic state $\rho$ is solely driving the state towards random pure product states. So for random pure product states, the drifting force is zero.)

We can solve the diffusion equation by expanding $p(\boldsymbol{n}_1,t)$ in terms of spherical harmonics as
\begin{equation}\label{eq:p-Ylm}
p(\boldsymbol{n}_1,t)=\sum_{lm}\widetilde{p}_{lm}(t)Y_{lm}(\theta_1,\varphi_1)\ ,\quad \widetilde{p}_{lm}(t)=\int d\boldsymbol{n}_1 p(\boldsymbol{n}_1,t)Y_{lm}(\theta_1,\varphi_1)\ ,\quad \boldsymbol{n}_1=(\sin\theta_1\cos\varphi_1,\sin\theta_1\sin\varphi_1,\cos\theta_1)\ .
\end{equation}
Here $(\theta_1,\varphi_1)$ is the spherical coordinate of $\boldsymbol{n}_1$. The angular momentum $l=0,1,2,\cdots$, and $|m|\le l$. \cref{eq:diffusion-S2} tells us
\begin{equation}\label{eq:sol-Ylm}
\partial_t\widetilde{p}_{lm}(t)=-Dl(l+1)\widetilde{p}_{lm}(t)\ ,\quad \rightarrow \quad \widetilde{p}_{lm}(t)=e^{-Dl(l+1)t}\widetilde{p}_{lm}(0)\ .
\end{equation}
Note that $Y_{10}(\theta_1,\varphi_1)=\sqrt{\frac{3}{8\pi}}v_{1,z}$, and $Y_{1\pm1}(\theta_1,\varphi_1)=\sqrt{\frac{3}{8\pi}}(v_{1,x}\pm i v_{1,y})$, from \cref{eq:p-t-dependence,eq:p-Ylm}, we find that $\int d\boldsymbol{n}_1 p(\boldsymbol{n}_1,t)\boldsymbol{n}_1\propto \widetilde{p}_{1m}(t)$, so for $l=1$,
\begin{equation}
\widetilde{p}_{1m}(t)=e^{-\frac{4\gamma}{3}t}\widetilde{p}_{1m}(0)\ .
\end{equation}
Comparing with \cref{eq:sol-Ylm} at $l=1$, we find the diffusion constant $D=2\gamma/3$. Thus, $\rho_t$ from the measurement channel is equivalent to the forward diffusion problem of probability distribution $p(\boldsymbol{n}_1,t)$ satisfying the forward diffusion PDE:
\begin{equation}\label{eq:diffusion-S2-final}
\partial_t p(\boldsymbol{n}_1,t)=\frac{2\gamma}{3}\nabla_\perp^2 p(\boldsymbol{n}_1,t)\ .
\end{equation}
Lastly, $\widetilde{p}_{00}(t)\equiv \frac{1}{\sqrt{4\pi}}$, which gives the normalized total probability $\int d\boldsymbol{n}_1p(\boldsymbol{n}_1,t)=\sqrt{4\pi}\widetilde{p}_{00}(t)=1$.

\subsubsection{$n$-qubits}

For $n$-qubit systems, the generalization is straightforward: the quasi-probability $p(\boldsymbol{n},t)$ of $\rho_t$ diffuses simultaneously on the Bloch sphere of every qubit. Accordingly, it satisfy a forward diffusion equation
\begin{equation}\label{eq:diffusion-nS2}
\partial_t p(\boldsymbol{n},t)=\frac{2\gamma}{3 n}\nabla_\perp^2 p(\boldsymbol{n},t)\ ,\qquad \nabla_\perp^2=\nabla_{\perp,1}^2+\nabla_{\perp,2}^2+\cdots +\nabla_{\perp,n}^2\ ,
\end{equation}
where $\nabla_{\perp,j}^2$ is the Laplacian with respect to $\boldsymbol{n}_j$ restricted on the bloch sphere $|\boldsymbol{n}_j|=1$ of the $j$-th qubit. Namely, $\nabla_\perp^2$ is the Laplacian on the manifold of $\boldsymbol{n}\in \otimes_{j=1}^n S_j^2$, where $S_j^2$ is the Bloch sphere of the $j$-th qubit. It is straightforward to verify that such a diffusion equation for the quasi-probability $p(\boldsymbol{n},t)$ is equivalent to the state $\rho_t$ evolved by the measurement channel as given in \cref{seq:m-channel}.

Physically, this is equivalent to setting the initial state essemble to be the random pure product states $|\boldsymbol{n}\rangle\langle \boldsymbol{n}|$ with a probability distribution $p(\boldsymbol{n},0)$, and evolve every state by a sequence of random weak measurements, where the resulting probability will be $p(\boldsymbol{n},t)$ given by \cref{eq:diffusion-nS2}. Here $p(\boldsymbol{n},t)$ is related to the probability distribution $p(\boldsymbol{z},t)$ we studied earlier for generic states by the variable transformation $|\boldsymbol{n}\rangle\langle \boldsymbol{n}|=\frac{1}{2^n}\sum_i z_i P_i$ from $\boldsymbol{z}$ to $\boldsymbol{n}$. Accordingly, the SDE for $\boldsymbol{z}$ transforms into a SDE for $\boldsymbol{n}$, and the PDE for $p(\boldsymbol{z},t)$ transforms to the PDE in \cref{eq:diffusion-nS2} for $p(\boldsymbol{n},t)$.

\subsection{The Petz recovery channel and the backward diffusion}

We now consider the Petz recovery channel of the measurement channel. For states $\rho_t=\mathcal{F}_t(\rho_0)$ evolved by the measurement channel $\mathcal{F}_t$ defined above, the Petz recovery channel reversing from final time $T$ to time $T-t$ is defined as
\begin{equation}
\mathcal{R}_t(\sigma)=\rho^{1/2}_{T-t}\mathcal{F}_t^\dagger (\rho_T^{-1/2}\sigma \rho_T^{-1/2})\rho^{1/2}_{T-t}\ ,\qquad \rightarrow \qquad \mathcal{R}_t(\rho_T)=\rho_{T-t}\ ,
\end{equation}
where $\mathcal{F}_t^\dagger$ is the conjugate of the measurement channel $\mathcal{F}_t$, defined by $\text{Tr}(\mathcal{F}_t(\rho_1)\rho_2)=\text{Tr}(\rho_1\mathcal{F}_t^\dagger(\rho_2))$. In the case of measurement channel in \cref{seq:m-channel}, it is easy to see that $\mathcal{F}_t^\dagger=\mathcal{F}_t$.

We consider the infinitesimal Petz recovery channel from $\rho_{t+dt}$ to $\rho_t$ given by
\begin{equation}\label{eq-Rdt}
\mathcal{R}_{dt}(\sigma)=\rho^{1/2}_{t}\mathcal{F}_{dt}^\dagger (\rho_{t+dt}^{-1/2}\sigma \rho_{t+dt}^{-1/2})\rho^{1/2}_{t}\ ,
\end{equation}
and show it is analogous to the backward diffusion in the diffusion model. To the linear order of $dt$, we have
\begin{equation}\label{eq:dRdt}
\frac{d\sigma}{dt}=\frac{\mathcal{R}_{dt}(\sigma)-\sigma}{dt}=\rho^{1/2}_{t}\frac{\mathcal{F}_{dt}-1}{dt} (\rho_{t}^{-1/2}\sigma \rho_{t}^{-1/2})\rho^{1/2}_{t}+\frac{\rho^{1/2}_{t}\rho_{t+dt}^{-1/2}-1}{dt}\sigma +\sigma \frac{\rho_{t+dt}^{-1/2}\rho^{1/2}_{t}-1}{dt}\ .
\end{equation}
We rewrite the operators involved above in the Glauber-Sudarshan P-representation as:
\begin{equation}\label{eq:rhot-P-rep}
\rho_t=\int d\boldsymbol{n}p(\boldsymbol{n},t)|\boldsymbol{n}\rangle\langle\boldsymbol{n}|\ ,\quad \rho_t^{1/2}=\int d\boldsymbol{n}\eta(\boldsymbol{n},t)|\boldsymbol{n}\rangle\langle\boldsymbol{n}|\ ,\quad \rho_t^{-1/2}=\int d\boldsymbol{n}\lambda(\boldsymbol{n},t)|\boldsymbol{n}\rangle\langle\boldsymbol{n}|\ ,
\end{equation}
and
\begin{equation}
\frac{\rho^{1/2}_{t}\rho_{t+dt}^{-1/2}-1}{dt}= \int d\boldsymbol{n}\beta(\boldsymbol{n},t)|\boldsymbol{n}\rangle\langle\boldsymbol{n}|\ .
\end{equation}
To better understand them, we first define the multiplication of operators in the Glauber-Sudarshan P-representation:
\begin{equation}\label{eq:P-rep-multiply}
A=\int d\boldsymbol{n}p_A(\boldsymbol{n},t)|\boldsymbol{n}\rangle\langle\boldsymbol{n}|\ ,\quad B=\int d\boldsymbol{n}p_B(\boldsymbol{n},t)|\boldsymbol{n}\rangle\langle\boldsymbol{n}|\ ,\quad AB=\int d\boldsymbol{n}p_{AB}(\boldsymbol{n},t)|\boldsymbol{n}\rangle\langle\boldsymbol{n}|\ ,
\end{equation}
where we can define a multiplication operation
\begin{equation}
p_{AB}(\boldsymbol{n},t)=(p_A*p_B)(\boldsymbol{n},t)\ ,
\end{equation}
where $*$ being an operation similar to convolution. In this sense, the functions defined in \cref{eq:rhot-P-rep} satisfy $\eta*\eta=p$, and $\lambda*\lambda*p=1$. 

By expressing state $\sigma$ as
\begin{equation}
\sigma=\rho_t=\int d\boldsymbol{n}q(\boldsymbol{n},t)|\boldsymbol{n}\rangle\langle\boldsymbol{n}|\ ,
\end{equation}
in terms of a function $q(\boldsymbol{n},t)$, we can formally rewrite \cref{eq:dRdt} in the Glauber-Sudarshan P-representation as a PDE:
\begin{equation}\label{eq:back-diff}
\partial_tq=\frac{2\gamma}{3n}\eta*[\nabla_\perp^2(\lambda*q*\lambda)]*\eta +\beta*q+q*\bar{\beta}\ ,
\end{equation}
where $\bar{\beta}$ is the complex conjugate of $\beta$, and we have used the fact that the forward evolution $\frac{\mathcal{F}_{dt}-1}{dt}=\frac{2\gamma}{3n}\nabla_\perp^2$ from \cref{eq:diffusion-nS2}.

We claim that \cref{eq:back-diff} is the quantum analog of the backward diffusion PDE of the classical diffusion model. To see this, we show it reduces to the classical backward diffusion in a properly defined classical limit.

We define the classical limit as the case where all the coherent states $|\boldsymbol{n}\rangle$ are orthonormal, namely,
\begin{equation}\label{eq:ortho-coh}
\langle\boldsymbol{n}|\boldsymbol{n}'\rangle=\delta(\boldsymbol{n}-\boldsymbol{n}')\ .
\end{equation}
For qubits, this is an unphysical limit, but we can understand it as being approached when we consider many copies of the coherent states, namely, $|\boldsymbol{n}\rangle\rightarrow |\boldsymbol{n}\rangle^{\otimes k}$ with $k\rightarrow\infty$, or equivalently, the large spin limit of each qubit (which is spin $1/2$). In this classical limit, the states in the Glauber-Sudarshan P-representation become effectively diagonal, and by the definition in \cref{eq:P-rep-multiply}, the $*$ multiplication simply becomes the normal multiplication. The functions we defined then reduces to the following (the variable $(\boldsymbol{n},t)$ are omitted for simplicity):
\begin{equation}
\eta=p^{1/2}\ ,\qquad \lambda=p^{-1/2}\ ,\qquad \beta=\frac{p^{1/2}(\boldsymbol{n},t)p^{-1/2}(\boldsymbol{n},t+dt)-1}{dt}=-\frac{\partial_t p}{2p}=-\frac{\gamma}{3n}\frac{\nabla_\perp^2 p}{p}\ .
\end{equation}
Substituting these function into \cref{eq:back-diff}, we find
\begin{equation}
\partial_tq=\frac{2\gamma}{3n}\left[p\nabla_\perp^2\left(\frac{q}{p}\right)-q\frac{\nabla_\perp^2 p}{p}\right]=\frac{2\gamma}{3n}\left[\nabla_\perp^2q-2\nabla_\perp\cdot\left(q\frac{\nabla_\perp p}{p}\right)\right]\ ,
\end{equation}
or in a more familiar form,
\begin{equation}\label{eq:back-diff-classical2}
\partial_tq(\boldsymbol{n},t)=\frac{2\gamma}{3n}\nabla_\perp^2q(\boldsymbol{n},t)-\frac{4\gamma}{3n}\nabla_\perp\cdot\left[q(\boldsymbol{n},t)\nabla_\perp \log p(\boldsymbol{n},t)\right]\ .
\end{equation}
This is exactly the backward diffusion PDE corresponding to the forward diffusion PDE in \cref{eq:diffusion-nS2}, where $\frac{4\gamma}{3n}\nabla_\perp \log p(\boldsymbol{n},t)$ plays the role of the backward drifting force.

\emph{Approaching the classical limit}. As we have argued above, if we replace each spin $1/2$ qubit by a large spin $S$ qudit (which has $2S+1$ states), \cref{eq:ortho-coh} will be asymptotically satisfied, and we can achieve the classical limit. In this case, state $|\boldsymbol{n}\rangle$ denotes the pure product of large spin coherent states along direction $\boldsymbol{n}_j$ on the $j$-th spin. One then prepares an esemble of initial states being pure product coherent states $|\boldsymbol{n}\rangle$ with a probability distribution $p(\boldsymbol{n},t)$, and perform a sequence of weak measurement for these large spin qudits. In the large spin limit, the measurement channel $\mathcal{F}_t$ matches exactly with the forward classical diffusion governed by \cref{eq:diffusion-nS2}, and the Petz recovery channel $\mathcal{R}_t$ matches exactly with the backward classical diffusion governed by \cref{eq:back-diff-classical2}.

\section{Lindbladian for twirled Petz map}

In this section, we give the expression of the Lindbladian corresponding to the twirled Petz recovery channel.

\subsection{Lindbladian of the measurement channel}

For an single $dt$ step of the measurement channel acting on the $j$-th qubit, the Lindbladian can be derived to be:
\begin{equation}
\mathcal{F}_{dt}(\rho_t)=e^{\mathcal{L}dt}(\rho_t)=\sum_{o_t=\pm1}K_{dt}(O_t, d o_t)\rho_t K_{dt}^\dag(O_t, d o_t)\ ,
\end{equation}
which gives the Lindbladian equation
\begin{equation}
\frac{d\rho_t}{dt}=\mathcal{L}[\rho_t]=\sum_{\mu=x,y,z}\left(L_\mu\rho_t L_\mu^\dag -\frac{1}{2}\{L_\mu^\dag L_\mu,\rho_t\}\right)\ ,\qquad L_\mu=L_\mu^\dag=\sqrt{\frac{\gamma}{6}}\sigma_{j,\mu}\ .
\end{equation}
Here the $3$ jump operators $L_\mu$ runs over all the single-qubit Pauli operators $\sigma_{j,\mu}$ on qubits $j$. Note that there is no Hamiltonian in the Lindbladian of the measurement channel. This can be obtained by averaging over the linear SDE in the below over all random single qubit operators $O_{t}$ and averaging over $d\widetilde{w}_t$:
\begin{equation}
d\widetilde{\rho}_t = -\frac{\gamma}{2} [O_{t}, [O_{t}, \widetilde{\rho}_t]] dt + \sqrt{\gamma} \{ O_{t}, \widetilde{\rho}_t \} d\widetilde{w}_t\ .
\end{equation}
Since $L_\mu=L_\mu^\dagger$, we conclude that the conjugate channel $\mathcal{F}_{dt}=\mathcal{F}_{dt}^\dagger$, and the conjugate Lindbladian $\mathcal{L}^\dag=\mathcal{L}$.

\subsection{Lindbladian of the twirled Petz recovery channel}

The generic infinitesimal twirled Petz recovery channel from a generic $\rho_{t+dt}$ to $\rho_t$ is defined as
\begin{equation}\label{eq-tiwrled-petz}
\widetilde{\mathcal{R}}_{dt}(\sigma)=\int_{-\infty}^\infty f(\tau) \mathcal{R}_{dt}^\tau(\sigma)d\tau\ ,\qquad \mathcal{R}_{dt}^\tau(\sigma)=\rho^{\frac{1-i\tau}{2}}_{t}\mathcal{F}_{dt}^\dagger \left(\rho_{t+dt}^{\frac{-1+i\tau}{2}}\sigma \rho_{t+dt}^{\frac{-1-i\tau}{2}}\right)\rho^{\frac{1+i\tau}{2}}_{t}\ ,
\end{equation}
where $f(\tau)=\frac{1}{2[\cosh(\pi\tau)+1]}$. Setting $\tau=0$ gives back the untwirled Petz recovery channel.
According to Ref. \cite{Kwon_2019}, the infinitesimal rotated Petz map $\mathcal{R}_{dt}^\tau(\sigma)$ can be written as a Lindbladian form $\mathcal{R}_{dt}^\tau(\sigma)=e^{\mathcal{L}_{B,\tau}dt}(\sigma)$, with the Lindbladian
\begin{equation}
\mathcal{L}_{B,\tau}(\sigma)=-i[H_{B,\tau},\sigma]+\sum_\mu \left(L_{B,\tau,\mu}\sigma L_{B,\tau,\mu}^\dag -\frac{1}{2}\{L_{B,\tau,\mu}^\dag L_{B,\tau,\mu},\sigma\}\right)\ ,
\end{equation}
where
\begin{equation}
L_{B,\tau,\mu}=\rho^{\frac{1-i\tau}{2}}_{t}L_\mu^\dagger \rho^{\frac{-1+i\tau}{2}}_{t}\ ,\quad H_{B,\tau}=-\frac{i}{2}\left[\frac{d (\rho^{\frac{1-i\tau}{2}}_{t})}{dt}\rho^{\frac{-1+i\tau}{2}}_{t}+\frac{1}{2}\sum_{\mu}\rho^{\frac{1-i\tau}{2}}_{t}L_\mu^\dagger L_\mu \rho^{\frac{-1+i\tau}{2}}_{t}\right]+h.c.\ .
\end{equation}
Particularly, for our measurement channel, the expression of $H_{B,\tau}$ can be further simplified by noting that $L_\mu^\dagger L_\mu=\frac{\gamma}{6}$ here is a constant. The Lindbladian for the twirled Petz channel is then given by
\begin{equation}
\widetilde{\mathcal{L}}_{B}(\sigma)=\int_{-\infty}^\infty f(\tau)\mathcal{L}_{B,\tau}(\sigma)d\tau\ ,\qquad \widetilde{\mathcal{R}}_{dt}(\sigma)=e^{\widetilde{\mathcal{L}}_{B}dt}(\sigma)\ .
\end{equation}

For our purpose here, we apply the twirled Petz recovery channel $\widetilde{\mathcal{R}}_{dt}$ with $\rho_{t+dt}$ and $\rho_t$ in \cref{eq-tiwrled-petz} replaced by the local reduced density matrix $\rho^{S_j}_{t}$ to $\rho^{S_j}_{t-dt}$ in subregion $S_j$ around the $j$-th qubit (defined below). The resulting local twirled Petz recover channel is denoted as $\widetilde{\mathcal{R}}_{dt}^{S_j}$, which acts locally on the subregion $S_j$.

\subsection{Subsystem-Based Density Matrix Estimation}

We now focus on a forward weak measurement step $\mathcal{F}_{\delta t}$ acting on qubit $j$ in a finite small time $\delta t$ (a discretized or Trotterized Linbladian dynamics) and explain how to construct its Petz recovery map $\widetilde{R}_{\delta t}^{S_j}$. Consider a subsystem $S_j$ centered around qubit $j$, chosen to include the region that qubit $j$ is entangled with (which is denoted as region $B$). Denoting qubit $j$ as $A$, we define $S_j = A \cup B$, with the remainder of the system denoted as $C$. We impose the condition
\begin{equation}
I(A : C | B) \simeq 0,
\end{equation}
which ensures that $A$ is not entangled with $C$.

\begin{figure}[htbp]
\begin{center}
\includegraphics[width=30mm]{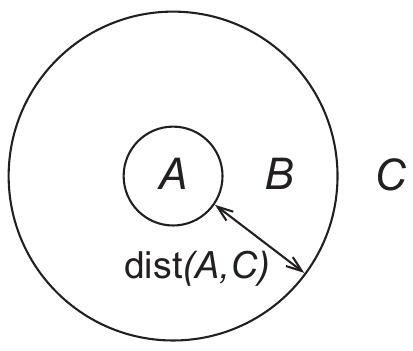}
\caption{Illustration of regions $A, B$ and $C.$ Region $A$ is the qubit $j$ that $\mathcal{F}_{\delta t}$ acts on, and $S_j= A \cup B$.}
\label{figS_ABC}
\end{center}
\end{figure} 

To characterize the reduced forward dynamics, we compute the reduced density matrix (RDM) on each subsystem $S_j$ via the following procedure:

\begin{enumerate}
    \item For each $S_j$, compute the initial reduced density matrix:
    \begin{equation}
        \rho_0^{S_j} = \frac{1}{2^{|S_j|}} \sum_{P \in S_j} z_{P,0} P.
    \end{equation}

    \item Verify whether $\rho_0^{S_j}$ is semi-positive definite (SPD). If it is not, deform $\rho_0^{S_j}$ into a valid SPD matrix. This step modifies the Pauli coefficients:
    \begin{equation}
        z_{P,0} \longrightarrow z_{P,0}^{(S_j)}.
    \end{equation}

    \item Evolve the RDM under the decoherence model:
    \begin{equation}
        \rho_t^{S_j} = \frac{1}{2^{|S_j|}} \sum_{P \in S_j} z_{P,t}^{(S_j)} P,
        \quad \text{with} \quad
        z_{P,t}^{(S_j)} = w_{\mathcal{F}_t}(P) z_{P,0}^{(S_j)} \quad \text{for } P \in S_j.
    \end{equation}
    
    Similarly, the RDM at time $t - \delta t$ is given by:
    \begin{equation}
        \rho_{t-\delta t}^{S_j} = \frac{1}{2^{|S_j|}} \sum_{P \in S_j} z_{P,t-\delta t}^{(S_j)} P.
    \end{equation}
\end{enumerate}

By construction, the SPD property of $\rho_0^{S_j}$ ensures the SPD of $\rho_t^{S_j}$. We assume that $\rho_t^{S_j}$ is a matrix of size $N_{S_j} \times N_{S_j}$, where $N_{S_j} = 2^{|S_j|}$.

Using the approximate data processing inequality~\cite{sang2024stabilitymixedstatequantumphases}, there exists a map channel $\widetilde{\mathcal{R}}_{\delta t}^{\mathcal{S}_j}$ acting on $S_j = A \cup B$ whose recovery quality satisfies:
\begin{equation}
\frac{1}{2 \ln 2} \left\| \widetilde{\mathcal{R}}^{\mathcal{S}_j}_{\delta t} \circ \mathcal{F}_{\delta t}[\rho_t] - \rho_t \right\|_1^2 
\leq I_{\rho_t}(A : C | B) - I_{ \mathcal{F}_{\delta t}[\rho_t]}(A : C | B) \ .
\end{equation}

 The explicit form of $\widetilde{\mathcal{R}}_{\delta t}^{\mathcal{S}_j}$ is given by the twirled Petz recovery map in \cref{eq-tiwrled-petz}, which depends on the forward channel $\mathcal{F}_{\delta t}$ and the local reduced density matrix $\rho_t^{S_j}$. The above bound shows that if $\rho$ has $\xi$-FML, local recovery is possible because the recovery error $\left\| \widetilde{\mathcal{R}}_{\delta t}^{\mathcal{S}_j}  \circ \mathcal{F}_{\delta t}[\rho_t] - \rho_t \right\|_1$ decays exponentially with the width of $B$.
 
The state $\rho$ has Markov length $\xi$ if its conditional mutual information (CMI) satisfies \cite{sang2024stabilitymixedstatequantumphases}
\begin{equation}
    I_{\rho}(A:C \,|\, B) \le \mathrm{poly}(|A|,|C|) \, e^{-\mathrm{dist}(A,C)/\xi}
    \label{eq:markov-length}
\end{equation}

For three regions $A, B, C$ arranged as in \cref{figS_ABC}---where $A$ is simply connected, $B$ is an annulus surrounding $A$, and $C = \overline{A \cup B}$ is the rest of the system---if $\rho$ is well-defined on arbitrarily large lattices of size $n$ with $\xi$ independent of $n$, we say $\rho$ has \emph{$\xi$-finite Markov length} ($\xi$-FML).

\begin{theorem*}[Restatement of Theorem~2 of the main text]
Consider an initial state $\rho_0$ of $n$ qubits. We evolve this state forward by sequentially applying an infinitesimal measurement channel $\mathcal{F}_{\delta t}$ over a total duration $T$. Here $T$ is divided into $N_T$ small time steps ($T = N_T , \delta t$). During each infinitesimal interval $\delta t$, the channel $\mathcal{F}_{\delta t}$ acts on a single qubit $j$ (which may depend on time $t$). After $N_T$ such steps, the final state $\rho_T = \mathcal{F}_{T}(\rho_0)$ is reached. 

Starting from the decohered state $\rho_T$, let the reverse diffusion process be approximated by applying $N_T$ sequential operations of a local twirled Petz recovery channel, denoted $\widetilde{\mathcal{R}}_{\delta t}^{\mathcal{S}_j}$. Each such recovery operation acts on a local subsystem $S_j = A \cup B$, where $A$ represents the specific qubit $j$ that was acted on by $\mathcal{F}_{\delta t}$ at the corresponding forward step, and $B$ is the set of additional qubits in the immediate neighborhood of $A$ that we include in the recovery operation. Assume a locality condition such that for every intermediate state $\rho_t$ ($0 \le t \le T$) and every choice of subsystem $S_j$, the conditional mutual information is bounded by $I_{\rho_t}(A:C|B) \le \mathrm{poly}(|A|,|C|) \, e^{-\mathrm{dist}(A,C)/\xi}$, where $C$ is the complement of $S_j$. Then by requiring
\begin{equation}
    \mathrm{dist}(A,C) \ge 
\xi \cdot \log \left( 
\frac{\mathrm{poly}(n) \cdot \, \ln\!\left( \frac{2n}{\epsilon} \right)}
{\gamma \epsilon^2 \cdot \delta t} 
\right)\ ,\qquad T \geq \frac{3n}{4\gamma} \ln\left( \frac{2n}{\epsilon} \right) \ , 
\end{equation}
the final state $\rho_0'$ produced by applying all $N_T$ Petz recovery steps remains close to the initial state $\rho_0$, with the error bounded by the trace distance:
\begin{equation}
\big\| \rho_0' - \rho_0 \big\|_1 \leq \epsilon\ .
\end{equation}
\end{theorem*}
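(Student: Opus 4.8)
\emph{Proof proposal.} The plan is to split the reconstruction error $\|\rho_0'-\rho_0\|_1$ into a \emph{forgetting error}, quantifying how thoroughly the forward diffusion erases $\rho_0$ by time $T$, and a \emph{recovery error}, quantifying how faithfully the ordered composition of local twirled Petz maps inverts the Trotterized forward channel. The recovery part follows the recoverability analysis of Ref.~\onlinecite{sang2024stabilitymixedstatequantumphases}, specialized to the measurement channel, whose Pauli weights are known in closed form [\cref{eq-measurement-channel-pauli-weight}].

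First I would bound the forgetting error. Since $\mathcal{F}_T$ is Pauli-twirled with weights $w_{\mathcal{F}_T}(P)=\exp(-\tfrac{4\gamma}{3n}|P|T)$, expanding $\rho_T-\id/2^n$ in the Pauli basis and bounding the trace norm by $\sqrt{2^n}$ times the Frobenius norm gives $\|\rho_T-\id/2^n\|_1^2\le(1+3e^{-8\gamma T/3n})^n-1$, which is $O(n\,e^{-8\gamma T/3n})$ once the exponent is small. Requiring this to be at most $(\epsilon/2)^2$ fixes the stopping time $T=\tfrac{3n}{4\gamma}\ln(2n/\epsilon)$ up to constants inside the logarithm, so the diffused state lies within $\epsilon/2$ of the maximally mixed product state. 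This is what makes the reverse process a genuine preparation protocol: it may be seeded from a preparable product state instead of from $\rho_T$ itself, the mismatch being contracted by the forward channel.

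Next I would bound the recovery error. For a single Trotter step $\mathcal{F}_{\delta t}$ acting on qubit $j$ with recovery region $S_j=A\cup B$ and complement $C$, the approximate data-processing inequality of Ref.~\onlinecite{sang2024stabilitymixedstatequantumphases} yields $\|\widetilde{\mathcal{R}}^{S_j}_{\delta t}\circ\mathcal{F}_{\delta t}[\rho_t]-\rho_t\|_1\le\sqrt{2\ln 2\,\Delta_t}$ with $\Delta_t=I_{\rho_t}(A:C|B)-I_{\mathcal{F}_{\delta t}[\rho_t]}(A:C|B)\ge 0$. Tracking the backward-propagated state after undoing down to time $t$ and using contractivity of the trace distance under each later local Petz channel, the per-step errors accumulate additively, $\|\rho_0'-\rho_0\|_1\le(\text{initialization mismatch})+\sqrt{2\ln 2}\sum_t\sqrt{\Delta_t}$. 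Invoking the $\xi$-finite-Markov-length hypothesis [\cref{eq:markov-length}] to bound $\Delta_t\le I_{\rho_t}(A:C|B)\le\mathrm{poly}(n)\,e^{-\mathrm{dist}(A,C)/\xi}$ and summing over the $N_T=T/\delta t$ steps gives $\sum_t\sqrt{\Delta_t}\le N_T\sqrt{\mathrm{poly}(n)}\,e^{-\mathrm{dist}/2\xi}$; demanding this be at most $\epsilon/2$ and inserting the $T$ chosen above reproduces the stated condition $\mathrm{dist}(A,C)\gtrsim\xi\log\!\big(\mathrm{poly}(n)\ln(2n/\epsilon)/(\gamma\epsilon^2\delta t)\big)$.

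The main obstacle I anticipate is precisely this accumulation: because the tripartition $(A,B,C)$ changes from one Trotter step to the next, $\sum_t\Delta_t$ does not collapse to a single CMI difference, and the crude estimate carries an explicit factor $N_T$ that must be absorbed by enlarging the recovery region only logarithmically. Making this sharp---grouping steps by the measured qubit and exploiting monotonicity of the conditional mutual information under the intervening single-qubit channels to recover a genuine telescoping sum, and separately verifying that the twirled Petz maps remain well-conditioned once the local reduced states $\rho_{t,S_j}$ are deformed to be positive semidefinite---is where the argument needs the most care. The remaining bookkeeping is routine and is deferred to the SM \cite{suppl}.
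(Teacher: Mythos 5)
Your proposal is correct and follows the same architecture as the paper's proof: the same decomposition $\|\rho_0'-\rho_0\|_1\le E_{\mathrm{petz}}(T)+E_{\mathrm{diff}}(T)$, the same per-step bound $\|\widetilde{\mathcal{R}}^{S_j}_{\delta t}\circ\mathcal{F}_{\delta t}[\rho_t]-\rho_t\|_1^2\le 2\ln 2\,\Delta_t$ from the approximate data-processing inequality of Ref.~\onlinecite{sang2024stabilitymixedstatequantumphases}, and the same use of the Pauli-weight decay to fix $T$. The one place you genuinely diverge is the accumulation step: the paper applies Cauchy--Schwarz to get $E_{\mathrm{petz}}^2\le N_T\sum_k\Delta_k$ and then claims the $\Delta_k$ telescope to a single initial CMI, whereas you sum $\sqrt{\Delta_t}$ directly and bound each $\Delta_t$ by the assumed locality condition at every intermediate time, accepting an explicit $N_T$ prefactor. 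Your route costs an extra power of $N_T$ inside the logarithm defining $\mathrm{dist}(A,C)$, which is harmless since it is absorbed into the $\mathrm{poly}(n)\ln(2n/\epsilon)/(\gamma\,\delta t)$ argument of the log; in exchange, it sidesteps the issue you correctly flag, namely that the tripartition $(A,B,C)$ changes from step to step so the paper's telescoping sum is not literally a telescoping of a single CMI functional. Since the theorem hypothesis grants the CMI bound for \emph{every} intermediate state and every choice of $S_j$, your per-step bound is fully justified and arguably the cleaner way to close the argument. Your Frobenius-norm estimate of $E_{\mathrm{diff}}(T)$ is also more explicit than the paper's assertion $E_{\mathrm{diff}}\sim n\,e^{-4\gamma T/3n}$ and yields the same requirement on $T$ up to constants inside the logarithm.
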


\noindent
\begin{proof}
The total error can be bounded by the sum of single-step errors:
\begin{equation}
\big\| \rho_0' - \rho_0 \big\|_1 
= E_{\mathrm{petz}}(T) + E_{\mathrm{diff}}(T), \qquad E_{\mathrm{petz}}(T) = \sum_{k=1}^{N_T} 
\big\| \widetilde{\mathcal{R}}^{S_k}_{\delta t} \big( \mathcal{F}_{\delta t}(\rho_{t_k}) \big) - \rho_{t_k} \big\|_1\ ,
\end{equation}
where $\rho_{t_k} = \mathcal{F}_{\delta t}(\rho_{t_{k-1}})$ is the state after the $k$th forward step of duration $\delta t$.  
By the Cauchy--Schwarz inequality, this implies
\begin{equation}\label{eq-total-cmi}
E_{\mathrm{petz}}(T)^2
\le N_T \sum_{k=1}^{N_T} 
\big\| \widetilde{\mathcal{R}}^{S_k}_{\delta t} \big( \mathcal{F}_{\delta t}(\rho_{t_k}) \big) - \rho_{t_k} \big\|_1^2.
\end{equation}

Each single-step recovery error is controlled by the drop in the local CMI over that step:
\begin{equation}
\big\| \widetilde{\mathcal{R}}^S_{\delta t}(\mathcal{F}_{\delta t}(\rho)) - \rho \big\|_1^2
\le 2 \ln 2 \,\Big[\, I_\rho(A:C \,|\, B) - I_{\mathcal{F}_{\delta t}(\rho)}(A:C \,|\, B) \,\Big] ,
\end{equation}
for any subsystem $S = A \cup B$ (with complement $S^C$).  

Summing these squared errors over $k = 1,\dots,N_T$ yields a telescoping sum of local CMI differences at each recovery step.  
By the data-processing inequality, the conditional mutual information is non-increasing under the forward channel, 
$I_{\rho_{t_k}}(A:C|B) \le I_{\rho_{t_{k-1}}}(A:C|B)$, so the sum of CMI differences telescopes to the initial-minus-final value:
\begin{equation}
\sum_{k=1}^{N_T} \Big( I_{\rho_{t_k-1}}(A:C|B) - I_{\rho_{t_k}}(A:C|B) \Big)
= I_{\rho_0}(A:C|B) - I_{\rho_T}(A:C|B) \leq I_{\rho_0}(A:C|B),
\end{equation}
using the fact that $I_{\rho_T} \ge 0$. 

Meanwhile, $E_{\mathrm{diff}}(T) \sim n \exp \left(-\frac{4\gamma}{3n}T\right)$, because for non-identity Pauli strings, each Pauli weight decays exponentially to zero with increasing diffusion time $T$, as shown in \cref{seq-measurement-pauli-weight2}. To achieve $E_{\mathrm{diff}}(T) \leq \frac{\epsilon}{2},$ it suffices to require
\begin{equation}
T \geq \frac{3n}{4\gamma} \ln\left( \frac{2n}{\epsilon} \right)  
\end{equation}

Using the assumed locality bound $I_{\rho_0}(A:C|B) \le \mathrm{poly}(|A|,|C|) \, e^{-\mathrm{dist}(A,C)/\xi}$ and noting $N_T = T/\delta t,$ \cref{eq-total-cmi} becomes
\begin{equation}
E_{\mathrm{petz}}(T)^2 \leq \frac{T}{\delta t} \mathrm{poly}(|A|,|C|) \, e^{-\mathrm{dist}(A,C)/\xi}   
\end{equation}

To achieve an error $E_{\mathrm{petz}}(T)^2 \leq (\frac{\epsilon}{2})^2$, it suffices to require
\begin{equation}
    \mathrm{dist}(A,C) \ge \xi \cdot \log \left( \frac{4\mathrm{poly}(n)T}{\epsilon^2 \cdot \delta t} \right) \ge 
\xi \cdot \log \left( 
\frac{\mathrm{poly}(n) \cdot \, \ln\!\left( \frac{2n}{\epsilon} \right)}
{\gamma \epsilon^2 \cdot \delta t} 
\right)
    \label{eq:recovery-condition}
\end{equation}

Then we have the global recovery error
\begin{equation}
\big\| \rho_0' - \rho_0 \big\|_1 = E_{\mathrm{petz}}(T) + E_{\mathrm{diff}}(T) \leq \frac{\epsilon}{2}+\frac{\epsilon}{2}= \epsilon\ .
\end{equation}
\end{proof}

\end{document}